\documentclass[twocolumn,english,aps,pra,showpacs,superscriptaddress]{revtex4-1}
\usepackage{amsmath}
\usepackage{amssymb}
\usepackage{stackrel}
\usepackage{graphicx}
\usepackage{esint}

\usepackage{epsfig}
\usepackage{graphicx}%
\usepackage{dcolumn}
\usepackage{amsmath}
\usepackage{longtable}
\usepackage[utf8]{inputenc}
\usepackage[english]{babel}
\usepackage[utf8]{inputenc}
\usepackage[english]{babel}

\usepackage[utf8]{inputenc}
\usepackage[english]{babel}
 \usepackage{braket}
\usepackage{color}
\usepackage{mathrsfs}

\usepackage{amsthm}
\usepackage{enumerate}
\usepackage[toc,page]{appendix}
\usepackage{setspace}
\usepackage{enumitem}
\usepackage{graphicx}
\newlist{steps}{enumerate}{1}
\setlist[steps, 1]{label = Step \arabic*:}

\makeatletter
\usepackage{babel}

\makeatother

 {
      \theoremstyle{plain}
      \newtheorem{assumption}{Assumption}
  }

\begin{document}

\title{Non-Markovianity, information backflow and system-environment correlation for open-quantum-system processes}

\author{Yun-Yi Hsieh}
\affiliation{Department of Physics and Center for Theoretical Physics, National
Taiwan University, Taipei 10617}

\author{Zheng-Yao Su}
\affiliation{National Center for High-Performance Computing, Hsinchu 300, Taiwan}

\author{Hsi-Sheng Goan}
\email{goan@phys.ntu.edu.tw}

\affiliation{Department of Physics and Center for Theoretical Physics, National
Taiwan University, Taipei 10617}

\affiliation{Center for Quantum Science and Engineering, National Taiwan University,
Taipei 10617, Taiwan}

\date{\today}
\begin{abstract}
  A Markovian process of a system is defined classically as a process in which the future state of the system is fully determined by only its present state, not by its previous history. There have been several  measures of 
  non-Markovianity to quantify the degrees of non-Markovian effect in a process of an open quantum system based on information backflow from the environment to the system.  However, the  condition for the witness of the system information backflow does not coincide with the classical definition of a Markovian process. Recently, a new measure with a condition that coincides with the classical definition in the relevant limit has been proposed. Here, we focus on the new definition (measure) for quantum non-Markovian processes, and characterize the Markovian condition as a quantum process that has no information backflow through the reduced environment state (IBTRES) and no system-environment correlation effect (SECE). The action of IBTRES produces non-Markovian effects by flowing the information of quantum operations performed by an experimenter at earlier times back to the system through the environment, while the SECE can produce non-Markovian effect without carrying any earlier quantum operation information. We give the necessary and sufficient conditions for no IBTRES and no SECE, respectively, and show that a process is Markovian if and only if it has no IBTRES and no SECE.
The quantitative measures and algorithms for calculating  non-Markovianity, IBTRES and soly-SECE are explicitly presented.

\end{abstract}

\pacs{03.65.Xp, 03.65.Yz, 03.65.Ta}
\maketitle 

\section{Introduction}

The Markovian approximation is a useful and often made approximation to describe the dynamics of both classical and quantum systems. This approximation makes the future system state depend on only the present one, not on its previous history, such that the system dynamics is easier to solve and described. 
In an open quantum system, the Markovian evolution equation of the reduced system density matrix used to be defined as the celebrated Lindblad master equation \cite{Lindblad}.
However, the Markovian approximation or master equation
breaks down in systems with strong coupling to their environments, with structured environment spectral densities, and at low temperatures. A non-Markovian dynamics can keep the previous system state information in the environment or/and in the quantum correlation between the system and environment, 
and then flows back the memory information to the system in the future.
This information backflow action breaks the Markovian approximation.
In recent years,
there have been many studies that tend to qualify
 the {\em memory effects} in quantum processes using different witness measures for non-Markovianity \cite{BLP,RHP,NMdegree,NMgeo,NM1,NM2,NM3,NM4,NM5,NM6,NM7,NM11,NM12,NMChen,NMChen,NMBbdini,NMZhang}, such as measures based on the {\em information backflow from the environment to the system} in \cite{BLP,RHP,NMdegree,NMgeo,NM2,NM3,NM4,NM5,NM6}, on the positivity of dynamical maps in \cite{RHP,NMdegree,NM7,NM11,NM12,NMChen}, on the conditional past-future independence \cite{NMBbdini}, on the two-time correlation functions \cite{NMZhang} and on the distance between dynamical maps and the map in experiential form, $\exp({\mathcal{L}})$, where $\mathcal{L}$ is a Lindblad operator \cite{NM1}. Several review articles on quantum non-Markovianity are available \cite{NM13,NM14,NM15,NMLocal}.

Previous measures \cite{BLP,RHP,NMdegree,NMgeo,NM1,NM2,NM3,NM4,NM5,NM6,NM7,NM11,NM12,NM13,NM14,NM15,NMChen,NMLocal,NMBbdini,NMZhang} are all witness of a non-Markovian process. They are sufficient but not necessary conditions for a non-Markovian process. 
A Markovian process in a classical system is described by the definition that the future state of the system depends on only the present state, not on the states in its history.
Recently, a new 
definition of a quantum Markovian process with operational interpretation 
was proposed \cite{KModi,KModi2}, and it coincides with the definition of a classical Markovian process.

In this paper, we make a 
connection of the new Markovian process definition with the process of information backflow.
We characterize a Markovian process as a quantum process that has no information backflow through the reduced environment state (IBTRES) and no system-environment correlation effect (SECE).
We find that IBTRES can produce the non-Markovian effect by sending the system state information at earlier time steps (history state information) to the final system state, while 
SECE can produce the non-Markovian effect without carrying any history state information. We give the operational
definition and the necessary and sufficient conditions for no IBTRES and no SECE. The necessary and sufficient conditions for no IBTRES and no SECE in the process tensor representation are also presented.
We show that a quantum process is Markovian if and only if it has no IBTRES and no SECE. 
Quantitative measures for non-Markovianity, IBTRES, and SECE,  and  algorithms to calculate these measures are presented and described.

The paper is organized as follows. 
In Sec.~II, we define and describe quantum maps and quantum operations in three representations (forms): the tensor representation, the map representation and the Choi matrix representation. This also enables us to introduce the notations we use in this paper. In Sec.~III, we show that a process is Markovian if and only if it has no-IBTRES and no-SECE. In Sec.IV, we review the concept and definition of process tensors.
In Sec.~V and Sec.~VI, we give the operational definitions (in terms of process tensors) and distance measures for non-Markovianity and IBTRES, and describe corresponding algorithms to calculate the distance measures.
In Sec.~VII, we present the operational definition (in terms of process tensors) and distance measures for SECE.
In Sec.~VIII, the definition and measure of solely SECE are introduced and
an algorithm for calculating the measure  quantitatively is presented.
In Sec.~IX, we compare the difference between SECE and IBTRES in a quantum process. Finally a short conclusion is given in Sec.~X.

\section{Representations and Notations}

In this section, we define the physical quantities and quantum maps in different representations or forms, and introduce also the notations we use in this paper.
The density matrices
of a single-party system  $\rho^{S} \in \mathcal{B}\left(\mathcal{H}^{S}\right)$ and a bipartite system-environment joint system $\rho^{SE}\in \mathcal{B}\left(\mathcal{H}^{SE}\right)$, where  $\mathcal{B}\left(\mathcal{H}^{S}\right)$  and  $\mathcal{B}\left(\mathcal{H}^{SE}\right)$  \cite{KModi3} are the corresponding Hilbert spaces of bounded operators of the single-party system and the bipartite system-environment joint system, respectively,  can be written as  
\begin{equation}
\rho^{S}=\sum_{ij}\rho^{S}_{ij}\ket{i}_{S}\bra{j}
\end{equation}
and
\begin{equation}
\rho^{SE}=\sum_{ij\alpha\beta}\rho^{SE}_{ij,\alpha\beta}\ket{i}_{S}\bra{j}\otimes\ket{\alpha}_{E}\bra{\beta}.
\end{equation}
Here we have used the Roman letters $\{|i\rangle\}$ to denote an orthonormal basis of the system Hilbert space and the Greek letters $\{|\alpha\rangle\}$
to denote an orthonormal basis of the environment Hilbert space. 
The tensor representations (forms) of these two matrices are $\rho^{S}_{ij}$ and $\rho^{SE}_{ij,\alpha\beta}$, respectively. 

A quantum map  $\mathcal{A}$ can be written 
$\mathcal{A}:\mathcal{B}\left(\mathcal{H}^{in}\right) \rightarrow \mathcal{B}\left(\mathcal{H}^{out}\right)$ as a mapping from bounded operators
on the input Hilbert space to bounded operators on the output Hilbert space.
Three representations or forms of
a quantum map  $\mathcal{A}$
are used in this paper, and they are the tensor form $A^{i_{0}j_{0}}_{i_{1}j_{1}}$, the Choi matrix form \cite{Choi} $A$ and the map form $\mathcal{A}$.
To be more precise, suppose a quantum map $\mathcal{A}$
is written as
\begin{equation}
\label{eq3}
\mathcal{A}\left[\ket{i_{0}}_{\rm in}\bra{j_{0}}\right]=\sum_{i_{1}j_{1}}A^{i_{0}j_{0}}_{i_{1}j_{1}}\ket{i_{1}}_{\rm out}\bra{j_{1}}, 
\end{equation}
then its tensor form is given by $A^{i_{0}j_{0}}_{i_{1}j_{1}}$, and its Choi matrix form is 
\begin{equation}
\label{AChoi}
A=\sum_{i_{0}j_{0}i_{1}j_{1}}A^{i_{0}j_{0}}_{i_{1}j_{1}}\ket{i_{1}}_{\rm out}\bra{j_{1}}\otimes\ket{i_{0}}_{\rm in}\bra{j_{0}},
\end{equation}
where the subscripts in and out indicate the input state and output state, respectively. These three forms or representations are isomorphic. The definition of a trace-preserving map is described in Appendix \ref{HPTP}.

A quantum operation is Hermiticity preserving, trace non-increasing and completely positive. Its corresponding Choi matrix must be a density matrix with a normalization factor $n_{d}$ and obey the trace preserving condition of Eq.~(\ref{trace preserving}). Here $n_{d}$ is the dimension of the Hilbert space that the map acts on. 

Let $\rho'^{S}$ be the result of the map $\mathcal{A}$ acting on $\rho^{S}$:
\begin{equation}
\rho'^{S}=\mathcal{A}[\rho^S]=\sum_{i_{0}j_{0}}A^{i_{0}j_{0}}_{i_{1}j_{1}}\rho^{S}_{i_{0}j_{0}}\ket{i_{1}}\bra{j_{1}},
\end{equation}
and its tensor representation is 
\begin{equation}
\rho'^{S}_{i_{1}j_{1}}=A^{i_{0}j_{0}}_{i_{1}j_{1}}\rho^{S}_{i_{0}j_{0}},
\end{equation}
where we have used the convention that repeated indices represent that the indices should be summed over.

A single-party map $\mathcal{A}$ acting on the system of a biparties system-environment density matrix $\rho^{SE}$ is written as $\mathcal{A}\otimes \mathcal{I}_{E}[\rho^{SE}]=\mathcal{A}[\rho^{SE}]$, where the identity map on the environment $\mathcal{I}_{E}$ is usually ignored. Following \cite{KModi,KModi2}, we restrict the quantum operations with notations $\mathcal{A}$,  $\Lambda_{\rho}$ and $\mathcal{T}_{[n]}$ to act only on the system space.

In this paper, $\mathcal{U}$ represents a total unitary map acting on both the system and environment Hilbert spaces, 
and we denote $\rho'^{SE}=\mathcal{U}[\rho^{SE}] = {\rm U}\rho^{SE}{\rm U}^{\dagger} $, where U is the unitary matrix operator corresponding to the map $\mathcal{U}$.
The tensor form of the process of the map is 
\begin{equation}
\rho'^{SE}_{i_{1}j_{1},\alpha_{1}\beta_{1}}=U^{i_{0}j_{0},\alpha_{0}\beta_{0}}_{i_{1}j_{1},\alpha_{1}\beta_{1}} \rho^{SE}_{i_{0}j_{0},\alpha_{0}\beta_{0}}.
\end{equation}
Note that $U^{i_{0}j_{0},\alpha_{0}\beta_{0}}_{i_{1}j_{1},\alpha_{1}\beta_{1}}$ is not the entries of unitary matrix operator U but the entries of the Choi matrix of the map $\mathcal{U}$:
\begin{equation}
\sum U^{i_{0}j_{0},\alpha_{0}\beta_{0}}_{i_{1}j_{1},\alpha_{1}\beta_{1}}\ket{i_{1}}\bra{j_{1}}\otimes\ket{i_{0}}\bra{j_{0}}\otimes\ket{\alpha_{1}}\bra{\beta_{1}}\otimes\ket{\alpha_{0}}\bra{\beta_{0}}.
\end{equation}

In our notation, the superscripts of the tensor form of a map correspond to the orthonormal basis indexes of the input Hilbert space, and the subscripts correspond to the orthonormal basis indexes of the output Hilbert space.
The columns of
The index pairs of superscripts and subscripts in the same column in the tensor form of a map correspond to the same system Hilbert space. For example,
 $(i_{0}j_{0})$ and $(i_{1}j_{1})$ [ $(\alpha_{0},\beta_{0}) $ and $(\alpha_{1},\beta_{1})$] in the same column in $U^{i_{0}j_{0},\alpha_{0}\beta_{0}}_{i_{1}j_{1},\alpha_{1}\beta_{1}}$ correspond to the input and output system ( environment) space. 
The indexes of a tensor are sorted from up to down and right to left in time. The corresponding Choi matrix basis indexes are sorted from right to left,
e.g.,
\begin{eqnarray*}
&&T^{i_{3}j_{3},i_{1}j_{1}}_{i_{4}j_{4},i_{2}j_{2},i_{0}j_{0}} \\
&& \rightarrow \sum T^{i_{3}j_{3},i_{1}j_{1}}_{i_{4}j_{4},i_{2}j_{2},i_{0}j_{0}} \\
&&\qquad  \times \ket{i_{4}}\bra{j_{4}} \otimes \ket{i_{3}}\bra{j_{3}} \otimes \ket{i_{2}}\bra{j_{2}} \otimes \ket{i_{1}}\bra{j_{1}} \otimes \ket{i_{0}}\bra{j_{0}} .
\end{eqnarray*}

We use the symbol "$\circ$" to represent the composition of maps. For example, $\mathcal{A'} \circ \mathcal{A}[\rho]$ means $\mathcal{A'} [\rho']$, where $\rho'=\mathcal{A}[\rho]$. Sometimes, when there is no confusion, we ignore the composite symbol "$\circ$" and write only $\mathcal{A'} \mathcal{A}[\rho]$.

\section{Quantum process}

Following Refs.~\cite{KModi,KModi2}, we make the following two assumptions,  
Assumptions \ref{AS1} and \ref{AS2},  about quantum operations that can be applied to the systems.

\begin{assumption}
For a composite quantum system composed of a system and an environment, it is assumed that an experimenter can apply quantum operations (e.g., $\mathcal{A}$) only on the system and also possibly on ancillary systems but not on the environment.
\label{AS1}
\end{assumption}
\begin{assumption}
Quantum operations (eg. $\mathcal{A}$ ) performed by an experimenter are assumed to be instantaneous in time, i.e., on a much shorter time scale than any other state dynamics.
\label{AS2}
\end{assumption}

\begin{figure}
\centering
\includegraphics[width=0.9\columnwidth]{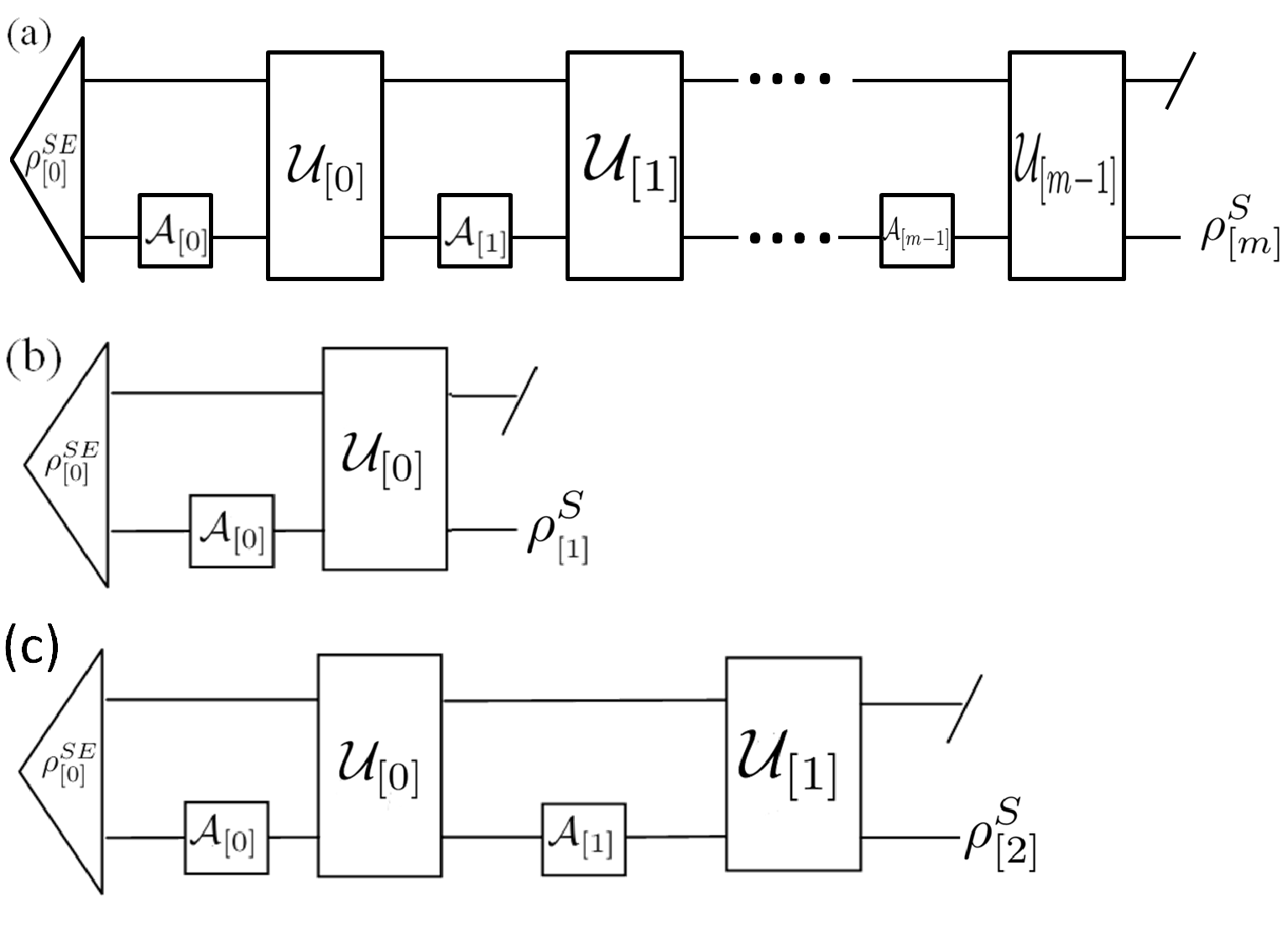}
\caption{Schematice illustrations of (a) an $m$-time-step process, (b) the first single-time-step process, and (c) the first two-time-step process.}
\label{STW}
\end{figure}

\newtheorem{mydef}{Definition}
\begin{mydef}
\label{mTSdef}
The reduced dynamic described by
\begin{equation}
\label{mTS}
\rho^{S}_{[m]}={\rm tr_E}\, \mathcal{U}_{[m-1]} \circ \mathcal{A}_{[m-1]}\circ \cdots\circ \mathcal{U}_{[1]} \circ \mathcal{A}_{[1]}\otimes\mathcal{U}_{[0]} \circ \mathcal{A}_{[0]}[\rho^{SE}_{[0]}]
\end{equation}
is an $m$-time-step process.
Here, $\rho^{SE}_{[0]} \in \mathcal{B}\left(\mathcal{H}^{SE}\right)$ is a fixed initial system-environment state. 
$\mathcal{U}_{[n]}: \mathcal{B}\left(\mathcal{H}^{SE}\right) \rightarrow \mathcal{B}\left(\mathcal{H}^{SE}\right)$ is the total unitary evolution map, and 
 $\mathcal{A}_{[n]} : \mathcal{B}\left(\mathcal{H}^{S}\right) \rightarrow \mathcal{B}\left(\mathcal{H}^{S}\right)$ is the quantum operation map at time step $n$
with $0 \leq n\leq m-1$.
\end{mydef}
A schematic illustration of an $m$-time-step process is shown in Fig.~\ref{STW}(a).
The first single-time-step process with $m=1$ in Eq.~(\ref{mTS}) is illustrated in Fig.~\ref{STW}(b) and described as
\begin{equation}
\label{STS}
\rho^{S}_{[1]} ={\rm tr_E}\, \mathcal{U}_{[0]} \circ \mathcal{A}_{[0]}[\rho^{SE}_{[0]}].
\end{equation}
In this paper, $\rho^{SE}_{[0]}$ is a fixed initial system-environment state.
Similarly, the first two-time-step process with $m=2$ in Eq.~(\ref{mTS}) is illustrated in Fig.~\ref{STW}(c) and described as
\begin{equation}
\label{2timeStep}
\rho^{S}_{[2]} ={\rm tr_E} \mathcal{U}_{[1]}\circ \mathcal{A}_{[1]} \circ \mathcal{U}_{[0]} \circ \mathcal{A}_{[0]}[\rho^{SE}_{[0]}].
\end{equation}
 The system-environment state at time step $n$ can be written as
\begin{equation}
\rho^{SE}_{[n]} =\mathcal{U}_{[n-1]} \circ \mathcal{A}_{[n-1]}[\rho^{SE}_{[n-1]}],  \ \  n=1,2,\cdots ,m ,
\end{equation}
and the reduced system state and reduced environment state of $\rho^{SE}_{[n]}$ are 
$\rho^{S}_{[n]}= {\rm tr_{E}}(\rho^{SE}_{[n]})$  and $\rho^{E}_{[n]}= {\rm tr_{S}}(\rho^{SE}_{[n]})$, respectively.
The $n$th time-step process, defined as the process from time step $(n-1)$ to time step $n$, in an $m$-time-step process is described by  
\begin{equation}
\rho^{S}_{[n]}={\rm tr_{E}} \, \mathcal{U}_{[n-1]} \circ \mathcal{A}_{[n-1]} [\rho^{SE}_{[n-1]} ],     \ \ \    n=1,2 \cdots ,m.
\label{nth_step}
\end{equation}

\begin{mydef}
\label{MarkovDef}
{\rm (Markovian process)}
A quantum system process is Markovian if and only if the future system state at next time step $\rho^{S}_{[n+1]}$ depends solely on the present one  $\mathcal{A}_{[n]}[\rho^{S}_{[n]}]$, no matter what quantum operations $\mathcal{A}_{[n-1]},\mathcal{A}_{[n-2]},\cdots, \mathcal{A}_{[0]}$ were previously applied to the system by an experimenter (i.e., independent of the state history of the process).
\end{mydef}

Definition \ref{MarkovDef} is equivalent to the definition of a Markovian process described in Ref.~\cite{KModi,KModi2}. 


\begin{mydef}
\label{no-SECEdef}
{\rm (No SECE)}
The system-environment correlation of $\rho^{SE}_{[n-1]}$ through the $n$th time-step process,
defined in Eq.~(\ref{nth_step}),
can not be detected by an experimenter 
 if and only if 
\begin{equation}
\label{no-SECEeq}
{\rm tr_E} \, \mathcal{U}_{[n-1]} \circ \mathcal{A}_{[n-1]} [\rho^{SE}_{[n-1]}]
 = \frac{{\rm tr_E} \, \mathcal{U}_{[n-1]} \circ \mathcal{A}_{[n-1]} [\rho^{S}_{[n-1]}\otimes\rho^{E}_{[n-1]}]}{{\rm tr} (\rho^{E}_{[n-1]})},
\end{equation}
where $\mathcal{A}_{[n-1]}$ is any quantum operation applied by the experimenter at time step $(n-1)$.
An $m$-time-step process has no SECE  if and only if it satisfies
Eq.~(\ref{no-SECEeq}) for all possible values $n$ of $1\leq n\leq m$.
\end{mydef}

This definition is obvious because if the the correlated system-environment state $\rho^{SE}_{[n-1]}$ on the left hand side of Eq.~(\ref{no-SECEeq})
replaced by the factorized state $\rho^{S}_{[n-1]}\otimes\rho^{E}_{[n-1]}$ as shown on the right hand side produces no difference, then there is no SECE. The denominator ${{\rm tr} (\rho^{E}_{[n-1]})}$ in  Eq.~(\ref{no-SECEeq}) is the normalization factor for the reduced environment state.

To discuss a process that has no IBTRES, let us define following trace-preserving maps:
\begin{equation}
\label{T0eq}
\mathcal{T}_{[0]}[\rho^{S}] \equiv \frac{{\rm tr_E} \mathcal{U}_{[0]} [\rho^{S} \otimes \rho^{E}_{[0]}]}{{\rm tr}(\rho^{E}_{[0]})}
\end{equation}
for the first time step, and for the $n$th time step
\begin{equation}
\label{Leq}
\mathcal{L}_{[n]} \left[\rho^{S},\mathcal{A}_{[n-1]} , \mathcal{A}_{[n-2]} ,\cdots, \mathcal{A}_{[0]} \right]  \equiv  \frac{{\rm tr_{E}}\mathcal{U}_{[n]}[\rho^{S} \otimes \rho^{E}_{[n]}]}{{\rm tr}(\rho^{E}_{[n]})} 
\end{equation}
with $n=1,2,\cdots,m-1$, where $\rho^{S} \in \mathcal{B}\left(\mathcal{H}^{S}\right)$ is an arbitary system state. 
Since the reduced environment state $\rho^{E}_{[n]}$ at time step $n$
in general depends on  quantum operations  ($\mathcal{A}_{[n-1]} , \mathcal{A}_{[n-2]} ,\cdots, \mathcal{A}_{[0]}$) at previous time steps, $\mathcal{L}_{[n]}$ should be a function of $(\mathcal{A}_{[n-1]} , \mathcal{A}_{[n-2]} ,\cdots, \mathcal{A}_{[0]})$, i.e., the information of the previous quantum operations
can flow back through the reduced environment state $\rho^{E}_{[n]}$.

\begin{mydef}
\label{no-IBTRESdef}
{\rm (No IBTRES)}
An $m$-time-step process with $m\geq 2$ has no IBTRES if and only if  
for all possible values  $n$ of $1\leq n\leq m-1$,
$\mathcal{L}_{[n]} \left[\rho^{S},\mathcal{A}_{[n-1]} , \mathcal{A}_{[n-2]} ,\cdots, \mathcal{A}_{[0]} \right]$ is
independent of all quantum operations {\rm (} $\mathcal{A}_{[n-1]} , \mathcal{A}_{[n-2]} ,\cdots, \mathcal{A}_{[0]}$ {\rm )}.
\end{mydef}
Note that no IBTRES can only be defined for an $m\geq 2$ process.
We say an $m$-times-step process has no IBTRES if and only if it obeys Definition \ref{no-IBTRESdef}.
If a map $\mathcal{L}_{[n]} \left[\rho^{S},\mathcal{A}_{[n-1]} , \mathcal{A}_{[n-2]} ,\cdots, \mathcal{A}_{[0]} \right]$ obeys Definition~\ref{no-IBTRESdef}, we set
\begin{equation}
\label{T1eq}
\mathcal{L}_{[n]} \left[\rho^{S},\mathcal{A}_{[n-1]} , \mathcal{A}_{[n-2]} ,\cdots, \mathcal{A}_{[0]} \right] = \mathcal{T}_{[n]}[\rho^{S}],
\end{equation}
and notice that $\mathcal{T}_{[0]}[\rho^{S}]$ is defined in Eq.~(\ref{T0eq}).
In Sec.\ref{secIBTRES}, we give an operational description of Definition \ref{no-IBTRESdef} for a two-time-step process that has no IBTRES.

\begin{mydef}
  \label{CMapDef}
{\rm (Constant map)}
$\Lambda_{\rho_{{\rm const}}}$ is a constant map if and only if
\begin{equation}
  \label{Cmap}
  \Lambda_{\rho_{\rm const}}[\rho]=\rho_{\rm const},
\end{equation}
 $\forall \rho \in \mathcal{B}\left(\mathcal{H}^{S_{0}}\right)$ with ${\rm tr}_{S_{0}}(\rho)=1$, and  $\rho_{{\rm const}} \in \mathcal{B}\left(\mathcal{H}^{S_{1}}\right)$ is a unique state (density matrix) with ${\rm tr}_{S_{1}}(\rho_{\rm const})=1$.
\end{mydef}
Note that in general $\mathcal{H}^{S_{0}}$ and $\mathcal{H}^{S_{1}}$ may not be the same.
The constant map maps an arbitrary density matrix $\rho$ with unit trace to a unique fixed density matrix $\rho_{\rm const}$ with unit trace. 
In Appendix \ref{constant_rep}, we give the tensor representation of the constant map.
A constant map has the action that
$\Lambda_{\rho_{{\rm const}}} \otimes \mathcal{I}_{E} [\rho^{SE}] =\rho_{{\rm const}} \otimes \rho^{E}$, $\forall \rho^{SE} \in \mathcal{B}\left(\mathcal{H}^{SE}\right)$, i.e., a constant map can destroy the system-environment correlation of $\rho^{SE}$ and erase the information of the system state (see Appendix \ref{ConstMapErase}).

\newtheorem{prop}{Proposition}
\begin{prop}
\label{prop1}
The first single-time-step process is Markovian if and only if it has no SECE.
\end{prop}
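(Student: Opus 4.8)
The plan is to reduce both sides of the claimed equivalence to the single statement that $\rho^{S}_{[1]}$ is a \emph{fixed} function of the post-operation system state $\sigma^{S}\equiv\mathcal{A}_{[0]}[\rho^{S}_{[0]}]$. First I would rewrite the no-SECE condition of Definition~\ref{no-SECEdef} in the case $n=1$: because $\mathcal{A}_{[0]}$ acts only on the system one has $\mathcal{A}_{[0]}[\rho^{S}_{[0]}\otimes\rho^{E}_{[0]}]=\sigma^{S}\otimes\rho^{E}_{[0]}$, so the right-hand side of Eq.~(\ref{no-SECEeq}) is exactly $\mathcal{T}_{[0]}[\sigma^{S}]$ with $\mathcal{T}_{[0]}$ the fixed map of Eq.~(\ref{T0eq}). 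Thus ``no SECE'' is equivalent to $\rho^{S}_{[1]}=\mathcal{T}_{[0]}[\sigma^{S}]$ for every quantum operation $\mathcal{A}_{[0]}$. The \emph{if} direction is then immediate: if $\rho^{S}_{[1]}=\mathcal{T}_{[0]}[\sigma^{S}]$ with $\mathcal{T}_{[0]}$ independent of $\mathcal{A}_{[0]}$, then any two operations giving the same post-operation state $\sigma^{S}$ give the same $\rho^{S}_{[1]}$, and since for an $m=1$ process the clause of Definition~\ref{MarkovDef} about earlier operations is vacuous, this is precisely Markovianity.

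For the \emph{only if} direction---the substantive one---I would fix an arbitrary quantum operation $\mathcal{A}_{[0]}$ with $\sigma^{S}=\mathcal{A}_{[0]}[\rho^{S}_{[0]}]$ and introduce the replace-with-$\sigma^{S}$ map $\widetilde{\mathcal{A}}[\rho]\equiv{\rm tr}(\rho)\,\sigma^{S}$, i.e.\ a suitably rescaled constant map in the sense of Definition~\ref{CMapDef}. One checks that $\widetilde{\mathcal{A}}$ is a legitimate quantum operation (completely positive since $\sigma^{S}\geq0$, trace non-increasing since ${\rm tr}\,\sigma^{S}\leq1$), that it produces the same post-operation state $\widetilde{\mathcal{A}}[\rho^{S}_{[0]}]=\sigma^{S}$ (using ${\rm tr}\,\rho^{S}_{[0]}=1$), and, crucially, that it destroys all system--environment correlation just as a constant map does: $\widetilde{\mathcal{A}}\otimes\mathcal{I}_{E}[\rho^{SE}_{[0]}]=\sigma^{S}\otimes\rho^{E}_{[0]}$. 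Markovianity then forces $\rho^{S}_{[1]}$ obtained from $\mathcal{A}_{[0]}$ to equal $\rho^{S}_{[1]}$ obtained from $\widetilde{\mathcal{A}}$, and the latter is ${\rm tr}_{E}\,\mathcal{U}_{[0]}[\sigma^{S}\otimes\rho^{E}_{[0]}]=\mathcal{T}_{[0]}[\sigma^{S}]$, where I used ${\rm tr}\,\rho^{E}_{[0]}={\rm tr}\,\rho^{SE}_{[0]}=1$. Since $\mathcal{A}_{[0]}$ was arbitrary, Eq.~(\ref{no-SECEeq}) holds for $n=1$, i.e.\ the process has no SECE.

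The main obstacle is the construction used in the second paragraph: one must produce a single valid quantum operation that simultaneously (i) realizes a prescribed, possibly subnormalized, system state $\sigma^{S}$ and (ii) wipes out every system--environment correlation, so that the Markovian hypothesis can legitimately be invoked to swap the generic $\mathcal{A}_{[0]}$ for it. The constant map of Definition~\ref{CMapDef}, rescaled to accommodate a trace-non-increasing $\mathcal{A}_{[0]}$, is exactly the right instrument; everything else---the step $\mathcal{A}_{[0]}[\rho^{S}_{[0]}\otimes\rho^{E}_{[0]}]=\sigma^{S}\otimes\rho^{E}_{[0]}$, the normalization ${\rm tr}\,\rho^{E}_{[0]}=1$, and the vacuity of the ``history'' clause for $m=1$---is routine bookkeeping.
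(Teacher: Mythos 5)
Your proof is correct and follows essentially the same route as the paper: the forward direction identifies the no-SECE right-hand side with the fixed map $\mathcal{T}_{[0]}$ of Eq.~(\ref{T0eq}) applied to $\mathcal{A}_{[0]}[\rho^{S}_{[0]}]$, and the reverse direction invokes Markovianity after substituting a correlation-destroying constant map; indeed your $\widetilde{\mathcal{A}}[\rho]={\rm tr}(\rho)\,\sigma^{S}$ coincides with the paper's choice $\mathcal{A}_{[0]}\circ\Lambda_{\rho^{S}_{[0]}/{\rm tr}(\rho^{S}_{[0]})}$. The only cosmetic difference is that the paper makes the fixedness of $\mathcal{T}_{[0]}$ explicit via a Kraus decomposition with operators $V_{i}$ depending only on $\mathcal{U}_{[0]}$ and $\rho^{E}_{[0]}$, which you state directly.
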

\begin{proof}

 Let us fist prove that if the first single-time-step process has no SECE, then it is Markovian.
 Taking 
 $n=1$ in Eq.~(\ref{no-SECEeq}) for the first single-time-step process of Eq.~(\ref{STS}), one obtains 
\begin{eqnarray}
\rho^{S}_{[1]} &=&
{\rm Tr_E} \, \mathcal{U}_{[0]} \circ \mathcal{A}_{[0]} [\rho^{SE}_{[0]}]\nonumber\\
 &=& \frac{{\rm Tr_E} \, \mathcal{U}_{[0]} \circ \mathcal{A}_{[0]} [\rho^{S}_{[0]}\otimes\rho^{E}_{[0]}]}{{\rm tr} (\rho^{E}_{[0]})},
 \label{no-SECEeq1}
\end{eqnarray}
where $\rho^{SE}_{[0]}$ is the fixed system-environment state at initial time $t=0$.
If  Eq.~(\ref{no-SECEeq1}) holds,  one can decompose the right hand side of Eq.~(\ref{no-SECEeq1}) into an operator sum form by using Kraus' theorem \cite{Kraus,Kraus2}:
\begin{eqnarray}
\frac{{\rm tr_E} \, \mathcal{U}_{[0]} \circ \mathcal{A}_{[0]} [\rho^{S}_{[0]}\otimes\rho^{E}_{[0]}]}{{\rm tr} (\rho^{E}_{[0]})} &=& \sum_{i} V_{i} \left( \mathcal{A}_{[0]}[\rho^{S}_{[0]}] \right)  V^{\dagger}_{i} \nonumber\\
&=& \mathcal{T}_{[0]} \circ \mathcal{A}_{[0]}[\rho^{S}_{[0]}] ,
\label{STmarkov}
\end{eqnarray}
where $V_{i}$ are the Kraus operators and depend only on $\mathcal{U}_{[0]}$ and $\rho^{E}_{[0]}$ but not on $\mathcal{A}_{[0]}$ and $\rho^{S}_{[0]}$. As a result, the output system state $\rho^{S}_{[1]}$ depends on only the system state $\mathcal{A}_{[0]}[\rho^{S}_{[0]}] $, i.e., the process is Markovian.

The reverse proof is as follows.
Let $\mathcal{A}_{[0]}$ and $\mathcal{A}'_{[0]}$ be two different quantum operations and their corresponding output states of
the first single-time-step process, 
Eq.~(\ref{STS}), be $\rho^{S}_{[1]}$ and $\rho'^{S}_{[1]}$, respectively.
Suppose that the process is Markovian, and then by
Definition~\ref{MarkovDef}, if 
\begin{equation}
\label{eqAeA}
\mathcal{A}_{[0]}[\rho^{S}_{[0]}]= \mathcal{A}'_{[0]}[\rho'^{S}_{[0]}],
\end{equation}
then
\begin{equation}
\label{SS'}
\rho^{S}_{[1]}= \rho'^{S}_{[1]}.
\end{equation}
One can set
\begin{equation}
\mathcal{A}'_{[0]}=\mathcal{A}_{[0]} \circ \Lambda_{\rho^{S}_{[0]}/{\rm tr} (\rho^{S}_{[0]})},
\end{equation}
where $\Lambda_{\rho^{S}_{[0]}/{\rm tr} (\rho^{S}_{[0]})}$ is a constant map which maps the input system state $\rho^{S}$ to a system state ${\rm tr}(\rho^{S})\rho^{S}_{[0]}/{\rm tr} (\rho^{S}_{[0]})$.
Then one obtains 
\begin{equation}
\label{idenityNEW}
 \mathcal{A}'_{[0]}[\rho^{S}_{[0]}]=\mathcal{A}_{[0]} \circ \Lambda_{\rho^{S}_{[0]}/{\rm tr} (\rho^{S}_{[0]})}[\rho^{S}_{[0]}] = \mathcal{A}_{[0]}[\rho^{S}_{[0]}],
\end{equation}
which is just Eq.~(\ref{eqAeA}).
Thus using Eq.~(\ref{SS'}) and Eq.~(\ref{STS}) yields 
\begin{eqnarray}
\hspace{-0.5cm}  {\rm tr_{E}} \, \mathcal{U}_{[0]} \circ \mathcal{A}_{[0]} [\rho^{SE}_{[0]} ]
&=& {\rm tr_{E}} \, \mathcal{U}_{[0]} \circ \mathcal{A}'_{[0]} [\rho^{SE}_{[0]} ]
\nonumber \\
\hspace{-0.5cm}&=&{\rm tr_E} \, \mathcal{U}_{[0]} \circ \mathcal{A}_{[0]}  \circ \Lambda_{\rho^{S}_{[0]}/{\rm tr} (\rho^{S}_{[0]})} [\rho^{SE}_{[0]}].                                  \label{rhorho'}
\end{eqnarray}
Applying the identity $\Lambda_{\rho^{S}_{[0]}/{\rm tr} (\rho^{S}_{[0]})} [\rho^{SE}_{[0]}] = \frac{\rho^{S}_{[0]}\otimes\ \rho^{E}_{[0]}}{{\rm tr} (\rho^{S}_{[0]})}$ to  Eq.~(\ref{rhorho'}), one obtains 
\begin{eqnarray}
\label{AnoCon}
 {\rm tr_{E}} \, \mathcal{U}_{[0]} \circ \mathcal{A}_{[0]} [\rho^{SE}_{[0]} ] 
&=& \frac{{\rm tr_E} \, \mathcal{U}_{[0]} \circ \mathcal{A}_{[0]} [\rho^{S}_{[0]} \otimes \rho^{E}_{[0]}]}{{\rm tr} (\rho^{S}_{[0]})} \nonumber \\
 &=& \frac{{\rm tr_E} \, \mathcal{U}_{[0]} \circ \mathcal{A}_{[0]} [\rho^{S}_{[0]} \otimes \rho^{E}_{[0]}]}{{\rm tr} (\rho^{E}_{[0]})},
\label{AtoCon}
\end{eqnarray}
where we have used the fact that ${\rm tr}(\rho^{E}_{[0]}) = {\rm tr}(\rho^{S}_{[0]})$ for the last equality.
Equation (\ref{AtoCon}) is just Eq.~(\ref{no-SECEeq1}), i.e., the definition of no SECE defined in Eq.~(\ref{no-SECEeq}) for $n=1$. 
\end{proof}

\begin{figure}
\includegraphics[width=0.9\columnwidth]{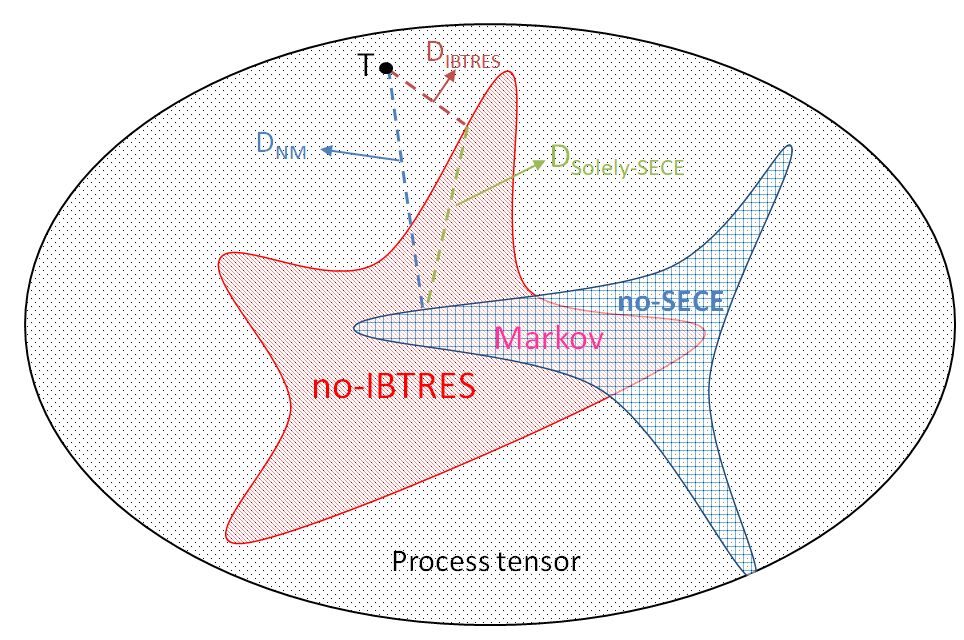}
\caption{Relationship among Markovian, no-SECE and no-IBTRES processes. The ellipse represents a collection of $m$-time-step processes defined in Definition \ref{mTSdef} with $m\geq 2$. The red set shaded by oblique lines represents a collection of processes with no IBTRES. The blue set shaded by vertical lines represents a collection of processes with no SECE. A Markovian process belongs to the intersection of the red set and the blue set. The non-Markovianity $D_{NM}$ and IBTRES $D_{\rm IBTRES}$ of a specific process $T$ can be defined as the minimum distances from $T$ to the Markovian set and to the no-IBTRES set, respectively. The distance $D_{NM}$, $D_{\rm IBTRES}$ and $D_{\rm solely-SECE}$ form a triangle.}
\label{set}
\end{figure}

\begin{prop}
\label{mainProp}
 An $m$-time-step process with $m \geq 2$ is Markovian if and only if the process has no SECE and no IBTRES. 
\end{prop}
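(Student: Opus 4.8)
The plan is to prove the two directions of the equivalence separately, reusing the constant-map technique already employed for Proposition~\ref{prop1}. For the ``if'' direction (no SECE and no IBTRES imply Markovian) the idea is to extract, at each step, a transition map that carries no memory of the past. For $2\le n\le m$, applying the no-SECE identity Eq.~(\ref{no-SECEeq}) at step $n$ and then pulling the system operation $\mathcal{A}_{[n-1]}$, which acts on the system alone, into the first argument by linearity --- as in the derivation of Eq.~(\ref{STmarkov}) --- gives
\begin{equation*}
\rho^{S}_{[n]}=\mathcal{L}_{[n-1]}\left[\mathcal{A}_{[n-1]}[\rho^{S}_{[n-1]}],\,\mathcal{A}_{[n-2]},\dots,\mathcal{A}_{[0]}\right],
\end{equation*}
while for $n=1$ one obtains $\rho^{S}_{[1]}=\mathcal{T}_{[0]}\circ\mathcal{A}_{[0]}[\rho^{S}_{[0]}]$ directly from Eq.~(\ref{T0eq}). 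The no-IBTRES hypothesis, Definition~\ref{no-IBTRESdef} together with Eq.~(\ref{T1eq}), then says that $\mathcal{L}_{[n-1]}$ is independent of $\mathcal{A}_{[n-2]},\dots,\mathcal{A}_{[0]}$ and equals the fixed map $\mathcal{T}_{[n-1]}$, so $\rho^{S}_{[n]}=\mathcal{T}_{[n-1]}\circ\mathcal{A}_{[n-1]}[\rho^{S}_{[n-1]}]$ for every $1\le n\le m$ with $\mathcal{T}_{[n-1]}$ not depending on the operation history. This is exactly the statement that the future system state depends solely on the present one, i.e.\ that the process is Markovian in the sense of Definition~\ref{MarkovDef}.

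For the ``only if'' direction (Markovian implies no SECE and no IBTRES), the no-SECE part follows by running the reverse argument of Proposition~\ref{prop1} at an arbitrary step: holding $\mathcal{A}_{[0]},\dots,\mathcal{A}_{[n-2]}$ fixed, one compares the operation $\mathcal{A}_{[n-1]}$ with $\mathcal{A}'_{[n-1]}=\mathcal{A}_{[n-1]}\circ\Lambda_{\rho^{S}_{[n-1]}/{\rm tr}(\rho^{S}_{[n-1]})}$, a legitimate quantum operation satisfying $\mathcal{A}'_{[n-1]}[\rho^{S}_{[n-1]}]=\mathcal{A}_{[n-1]}[\rho^{S}_{[n-1]}]$; Markovianity forces the two resulting $\rho^{S}_{[n]}$ to coincide, and rewriting the primed side with the constant-map identity $\Lambda_{\rho_{\rm const}}\otimes\mathcal{I}_{E}[\rho^{SE}]=\rho_{\rm const}\otimes\rho^{E}$ and with ${\rm tr}(\rho^{S}_{[n-1]})={\rm tr}(\rho^{E}_{[n-1]})$ converts this equality into Eq.~(\ref{no-SECEeq}). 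For the no-IBTRES part one instead inserts a constant operation at step $n$: with $\mathcal{A}_{[n]}=\Lambda_{\sigma}$ one has $\rho^{S}_{[n+1]}={\rm tr_E}\,\mathcal{U}_{[n]}[\sigma\otimes\rho^{E}_{[n]}]={\rm tr}(\rho^{E}_{[n]})\,\mathcal{L}_{[n]}[\sigma,\mathcal{A}_{[n-1]},\dots,\mathcal{A}_{[0]}]$, whereas Markovianity says $\rho^{S}_{[n+1]}$ is a fixed function of $\mathcal{A}_{[n]}[\rho^{S}_{[n]}]={\rm tr}(\rho^{S}_{[n]})\,\sigma$ that is the same for all earlier histories; comparing two histories --- rescaling $\Lambda_{\sigma}$ by the ratio of the traces ${\rm tr}(\rho^{S}_{[n]})$ when they differ, so that the step-$n$ outputs still match --- then shows that $\mathcal{L}_{[n]}[\sigma,\cdot]$ is history-independent for every density matrix $\sigma$, and hence, by linearity of $\mathcal{L}_{[n]}$ in its first slot, for all $\sigma\in\mathcal{B}(\mathcal{H}^{S})$, which is Definition~\ref{no-IBTRESdef} for $1\le n\le m-1$.

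The main obstacle, I expect, is this ``only if'' direction, and within it the normalization bookkeeping: since the operations $\mathcal{A}_{[k]}$ are only trace non-increasing, $\rho^{S}_{[n]}$ need not be normalized, so the constant maps have to be rescaled and the factors ${\rm tr}(\rho^{S}_{[n]})={\rm tr}(\rho^{E}_{[n]})$ carried through consistently throughout; this is also where one needs the clause ``no matter what $\mathcal{A}_{[n-1]},\dots,\mathcal{A}_{[0]}$ were previously applied'' of Definition~\ref{MarkovDef} in full strength, rather than for a single fixed history, together with the passage from ``equal on all density matrices'' to ``equal as linear maps'' in the IBTRES step.
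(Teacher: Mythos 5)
Your proposal is correct and follows essentially the same route as the paper's proof: the forward direction composes the no-SECE factorization with the no-IBTRES history-independence of $\mathcal{L}_{[n-1]}$ to obtain the divisible form of Eq.~(\ref{MyM}), and the reverse direction probes the process with constant maps, reusing the Proposition~\ref{prop1} argument for no SECE and inserting $\Lambda_{\sigma}$ at the last intervention to establish no IBTRES. Your extra bookkeeping of the trace factors ${\rm tr}(\rho^{S}_{[n]})$ for trace non-increasing operations is a welcome refinement of a point the paper's proof leaves implicit, but it does not change the argument.
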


\begin{proof}

The process from the time step $(n-1)$ to the time step $n$ is described by 
\begin{equation}
\label{QChanneln}
  \rho^{S}_{[n]} = {\rm tr_E} \mathcal{U}_{[n-1]} \mathcal{A}_{[n-1]}[\rho^{SE}_{[n-1]}],
\end{equation}
where $1\leq n\leq m$.	
Now, we consider the process that has no SECE. By Definition~\ref{no-SECEdef}, the $m$-time-step process must satisfy Eq.~(\ref{no-SECEeq}) for every time step $n$ of  $1\leq n\leq m$.
Thus the first time step process obtained by setting $n=1$ in Eq.~(\ref{QChanneln}) becomes
\begin{eqnarray}
\rho^{S}_{[1]}&=& {\rm tr_E} \, \mathcal{U}_{[0]} \circ \mathcal{A}_{[0]} [\rho^{SE}_{[0]}] \nonumber\\
&=&\frac{1}{{\rm tr} (\rho^{E}_{[0]})}\cdot{\rm tr_E} \, \mathcal{U}_{[0]} \circ \mathcal{A}_{[0]} [\rho^{S}_{[0]}\otimes\rho^{E}_{[0]}]\nonumber \\
&=&\mathcal{T}_{[0]} \circ \mathcal{A}_{[0]}[\rho^{S}_{[0]}]	
\end{eqnarray}
and similarly the $n$th time step ($2\leq n\leq m$) process becomes 
\begin{equation}
\begin{split}
  \rho^{S}_{[n]} =& {\rm tr_E} \mathcal{U}_{[n-1]}\circ \mathcal{A}_{[n-1]} [\rho^{SE}_{[n-1]}]  \\
	=& \frac{1}{\rho^{E}_{[n-1]}}{\rm tr_E} \mathcal{U}_{[n-1]}\circ \mathcal{A}_{[n-1]} [\rho^{S}_{[n-1]} \otimes \rho^{E}_{[n-1]}]  \\
	=&  \mathcal{L}_{[n-1]}[ \mathcal{A}_{[n-1]}[\rho^{S}_{[n-1]}], \mathcal{A}_{[n-2]}, \mathcal{A}_{[n-3]} , \cdots ,\mathcal{A}_{[0]}],
\end{split}
\label{n-step}
\end{equation}
where we have used the definitions of $\mathcal{T}_{[0]}[\rho^{S}]$ in Eq.~(\ref{T0eq}) and 
$\mathcal{L}_{[n]}$ in Eq.~(\ref{Leq}).

If the process has also no IBTRES, using Eq.~(\ref{T1eq}) that no previous information of the quantum operations flows back from $\rho^{E}_{[n-1]}$ to the system, one obtains  from Eq.~(\ref{n-step})  the $n$th time step ($2\leq n\leq m$) process  
\begin{equation}
  \rho^{S}_{[n]} = \mathcal{T}_{[n-1]} \circ \mathcal{A}_{[n-1]}[\rho^{S}_{[n-1]}].
 \label{Tn-1}
\end{equation}
By writing out the expression of Eq.~(\ref{Tn-1}) explicitly step by step for each $n$ of $1\leq n\leq m$, the total evolution process, Eq.~(\ref{mTS}), becomes
\begin{equation}
\label{MyM}
\rho^{S}_{[m]}=\mathcal{T}_{[m-1]} \circ \mathcal{A}_{[m-1]}\cdots\circ\mathcal{T}_{[1]} \circ \mathcal{A}_{[1]} \circ \mathcal{T}_{[0]} \circ \mathcal{A}_{[0]}[\rho^{S}_{[0]}],
\end{equation}
i.e., in a divisible form. 
Since ${\mathcal{T}_{[n]}}$ is defined in Eq.~(\ref{T1eq}) as a map independent of previous quantum operations, Eq.~(\ref{MyM}) describes a Markovian process because the future system state $\rho^{S}_{[n+1]}$ depends on only the previous system state $\mathcal{A}_{[n]}[\rho^{S}_{[n]}]$.

The proof for the reverse statement is shown below. To see whether a process has IBTRES, we set the quantum operation $\mathcal{A}_{[n]} =\Lambda_{\rho^{S}/{\rm tr}(\rho^{S})}$, where $\Lambda_{\rho^{S}/{\rm tr}(\rho^{S})}$ is a constant map and $\rho^{S}$ is an arbitrary system state. Then we obtain
\begin{eqnarray}
\rho^{S}_{[n]}&=&{\rm tr_{E}}\mathcal{U}_{[n-1]}\Lambda_{\rho^{S}/{\rm tr}(\rho^{S})}[\rho^{SE}_{[n-1]}] \nonumber\\
&=&{\rm tr_{E}}\mathcal{U}_{[n-1]}\circ[\rho^{S}\otimes\rho^{E}_{[n-1]}]/{\rm tr}(\rho^{S}) \nonumber\\
&=&{\rm tr_{E}}\mathcal{U}_{[n-1]}\circ[\rho^{S}\otimes\rho^{E}_{[n-1]}]/{\rm tr}(\rho^{E}) \nonumber\\
&=&\mathcal{L}_{[n-1]}\left[\rho^{S}, \mathcal{A}_{[n-2]},\mathcal{A}_{[n-3]},\cdots,\mathcal{A}_{[0]} \right]
\end{eqnarray}
If the process is Markovian, by Definition \ref{MarkovDef}, $\mathcal{L}_{[n-1]}$ does not depend on ($\mathcal{A}_{[n-2]}, \mathcal{A}_{[n-3]} ,\cdots, \mathcal{A}_{[0]}$). Therefore the process has no IBTRES by Definition \ref{no-IBTRESdef}. 
To show that a Markovian process has no SECE, the skill and procedure we used in the proof of  Proposition \ref{prop1} to show the first single-time-step Markovian process has no SECE
can be applied directly to a general single-time-step process of Eq.~(\ref{QChanneln}). In other words, by the replacement of $1\rightarrow n$ and $0\rightarrow n-1$ for the expression from
Eq.~(\ref{AnoCon}) to Eq.~(\ref{AtoCon}), one can show that a Markovian process has no SECE for a general single time step process.
Consequently, a Markovian process  must have no SECE for every time step process. 
\end{proof}

Proposition \ref{mainProp} is the most important result in this paper.

We have characterized an $m$-time-step ($m\geq 2$) Markovian process by two process sets: \{  no-SECE \} and \{ no-IBTRES \}. These two sets and their relations to the set of Markovian processes are schematically illustrated in Fig.~\ref{set} and summarized below.
\begin{itemize}
\item Intersection of  the \{  no-SECE \} and  \{  no-IBTRES \} sets is the Markovian set.
\item  The non-Markovian effect in the difference of the two sets
  \{ no-SECE \} and \{ Markovian \}, \{ no-SECE \} $-$ \{ Markovian \},
  is due to solely IBTRES. 
\item The non-Markovian effect in the difference  of the two sets
  \{ no-IBTRES \} and \{ Markovian \}, \{ no-IBTRES \} $-$ \{ Markovian \},  is due to solely SECE.
\end{itemize}

In the following sections, we will discuss the properties of these process sets in terms of process tensors. We give the necessary and sufficient conditions for no IBTRES and no SECE, respectively, and present explicitly 
the quantitative measures and algorithms for calculating non-Markovianity, IBTRES and soly-SECE.

\section{ Process tensor representation for single-time-step process and two-time-step process}
\label{SecPT}

In this section, we first argue that we can check whether a process is Markovian by investigating a two-time-step process with the time in the middle tunable from the initial time to the final time.
After that, review briefly the properties of process tensors of a single-time-step process and a two-time-step process in Refs.~\cite{KModi,KModi2} as we will use the process tensors to represent iff conditions for no IBTRES and no SECE, and to describe the qualitative measures and optimization algorithms.

For a quantum process with a fixed initial time $t_{0}$ and a final time $t_{2}$, it seems that one has to show that the process can be written in a divisible form as in Eq.~(\ref{MyM}) with $m \rightarrow \infty$. This is because showing an $m$-time-step process being Markovian for a finite $m$ does not guarantee an $(m+1)$-time-step process being Markovian. However, we can verify whether the process is Markovian with only a two-time-step process by checking whether the process satisfies the following Markovian condition: 
\begin{equation}
\label{only2}
\begin{split}
\rho^{S}_{[2]}(t_{2}) =&  {\rm tr_E}  \mathcal{U}_{[1]}(t_{2},t_{1}) \circ \mathcal{A}_{[1]} \circ \mathcal{U}_{[0]}(t_{1},t_{0}) \circ \mathcal{A}_{[0]} [\rho^{SE}_{[0]}(t_{0})] \\
&=\mathcal{T}_{[1]}(t_{2},t_{1}) \circ \mathcal{A}_{[1]} \circ \mathcal{T}_{[0]}(t_{1},t_{0})\circ \mathcal{A}_{[0]}[\rho^{S}_{[0]}(t_{0})].
\end{split}
\end{equation}
with the time $t_{1}$ in the middle continually varied from $t_{0}$ to $t_{2}$. Therefore, in the following part of this paper, we consider only the two-time-step process from $t_{0}$ to $t_{1}$ and $t_{1}$ to $t_{2}$.
We will sometimes abbreviate $\mathcal{U}_{[n]}(t_{n},t_{n-1})$ as $\mathcal{U}_{[n]}$ and $\mathcal{T}_{[n]}(t_{n},t_{n-1})$  as $\mathcal{T}_{[n]}$ for $n=1,2$.

\subsection{Process tensor for the first single-time-step process}
The process tensor is a representation for an open quantum system process and is a mapping from the sequence of quantum operations to the final system state \cite{KModi,KModi2}.
The final output system state $\rho^{S}_{[1]}$ in  Eq.~(\ref{STS}) can be written as 
 a function of the quantum operation $\mathcal{A}_{[0]}$ :
\begin{equation}
\label{Mfull}
  \rho^{S}_{[1]} = {\rm tr_E}  \mathcal{U}_{[0]} \mathcal{A}_{[0]} [\rho^{SE}_{[0]}] \equiv \mathcal{M}[\mathcal{A}_{[0]}],
\end{equation}
where $\mathcal{M}$ is called the map form of the process tensor for the first single-time-step process.

The tensor form of Eq.~(\ref{Mfull}) can be written as
\begin{equation}
\rho^{S}_{[1];{\color{red}{i_{1}j_{1}}}} = \delta_{\color{blue}{\alpha_{1}\beta_{1}}} 
U_{[0];{\color{red}_{i_{1}j_{1}}}_{,}{\color{blue}_{\alpha_{1}\beta_{1}}}}^{\  \   {\color{red}_{i_{0'}j_{0'}}},{\color{blue}_{\alpha_{0}\beta_{0}}}} A_{[0]; {\color{red} i_{0'}j_{0'}}}^{\  \  \ {\color{red}_{i_{0}j_{0}}}}\rho^{SE}_{[0]{\color{red}i_{0}j_{0}}{\color{blue}\alpha_{0}\beta_{0}}}
\label{Mtensor}
\end{equation}
Here, $\rho^{SE} \in \mathcal{B} \left( \mathcal{H}^{S_{0}E} \right)$, $A \in \mathcal{B} \left( \mathcal{H}^{S_{0'}S_{0}} \right)$ and $M \in \mathcal{B} \left( \mathcal{H}^{S_{1}S_{0'}S_{0}} \right)$ with $\mathcal{H}^{S_{0}}=\mathcal{H}^{S}$ being the Hilbert space of the system state.
We sort the indexes by the following way $(\cdots,n',n,\cdots,1',1,0',0)$ such that the quantum operation $A_{[n];i_{n'}j_{n'}}^{\ \ \ i_{n}j_{n}}$ with index $n$ carries the indexes $i_{n}$, $j_{n}$, $i_{n'}$ and $j_{n'}$ in a multi-time-step  process.
The Kronecker delta symbol $\delta_{\color{blue}{\alpha_{1}\beta_{1}}}$  corresponds to a trace operation over the environment degrees of freedom.
The tensor representation of the process tensor is to just remove $A_{[0]; {\color{red} i_{0'}j_{0'}}}^{\  \  \ {\color{red}_{i_{0}j_{0}}}}$ in Eq.~(\ref{Mtensor}) and is 

\begin{equation}
M^{{\color{red}{i_{0'}j_{0'}}}}_{{\color{red}{i_{1}j_{1},i_{0}j_{0}}}} 
= \delta_{\color{blue}{\alpha_{1}\beta_{1}}} 
U_{[0];{\color{red}_{i_{1}j_{1}}}_{,}{\color{blue}_{\alpha_{1}\beta_{1}}}}^{\  \   {\color{red}_{i_{0'}j_{0'}}},{\color{blue}_{\alpha_{0}\beta_{0}}}} \rho^{SE}_{[0]{\color{red}i_{0}j_{0}}{\color{blue}\alpha_{0}\beta_{0}}}.
\end{equation}

\subsection{Process tensor for the two-time-step process}
The final output system state $\rho^{S}_{[2]}$ of the two-time-step process Eq.~(\ref{2timeStep}) can be written as 
 a function of the quantum operations $\mathcal{A}_{[0]}$  and $\mathcal{A}_{[1]}$:
\begin{equation}
\label{eq34}
  \rho^{S}_{[2]} = {\rm tr_E} \mathcal{U}_{[1]} \mathcal{A}_{[1]} \mathcal{U}_{[0]} \mathcal{A}_{[0]} [\rho^{SE}_{[0]}] \equiv \mathcal{T}[\mathcal{A}_{[1]},\mathcal{A}_{[0]}],
\end{equation}
where $\mathcal{T}$ is the map form of a two-time-step process tensor that 
sends $\mathcal{A}_{[0]}$, $\mathcal{A}_{[1]}$ to the final state $\rho^{S}_{[2]}$.
In tensor form, Eq.~(\ref{eq34}) becomes
\begin{equation}
\begin{split}
\rho^{S}_{[2];{\color{red}{i_{2}j_{2}}}} &= \delta_{\color{blue}{\alpha_{2}\beta_{2}}} U_{[1];{\color{red}_{i_{2}j_{2}}}_{,}{\color{blue}_{\alpha_{2}\beta_{2}}}}^{\  \   {\color{red}_{i_{1'}j_{1'}}},{\color{blue}_{\alpha_{1}\beta_{1}}}} A_{[1]; {\color{red} i_{1'}j_{1'}}}^{\  \  \ {\color{red}_{i_{1}j_{1}}}} \\
&\cdot U_{[0];{\color{red}_{i_{1}j_{1}}}_{,}{\color{blue}_{\alpha_{1}\beta_{1}}}}^{\   \  {\color{red}_{i_{0'}j_{0'}}},{\color{blue}_{\alpha_{0}\beta_{0}}}} A_{[0]; {\color{red} i_{0'}j_{0'}}}^{\  \  \ {\color{red}_{i_{0}j_{0}}}}\rho^{SE}_{[0]{\color{red}i_{0}j_{0}}{\color{blue}\alpha_{0}\beta_{0}}}, 
\end{split}
\label{SEevo}
\end{equation}
The tensor representation of the process tensor is just to remove $A_{[0]; {\color{red} i_{0'}j_{0'}}}^{\  \  \ {\color{red}_{i_{0}j_{0}}}}$ and $A_{[1]; {\color{red} i_{1'}j_{1'}}}^{\  \  \ {\color{red}_{i_{1}j_{1}}}}$ in Eq.~(\ref{SEevo}):

\begin{equation}
\label{eq36}
T^{{\color{red}{i_{1'}j_{1'},i_{0'}j_{0'}}}}_{{\color{red}{i_{2}j_{2} \ , i_{1}j_{1} \ ,i_{0}j_{0}}}} = \delta_{\color{blue}{\alpha_{2}\beta_{2}}} U_{[1];{\color{red}_{i_{2}j_{2}}}_{,}{\color{blue}_{\alpha_{2}\beta_{2}}}}^{ \  \  {\color{red}_{i_{1'}j_{1'}}},{\color{blue}_{\alpha_{1}\beta_{1}}}} 
\cdot U_{[0];{\color{red}_{i_{1}j_{1}}}_{,}{\color{blue}_{\alpha_{1}\beta_{1}}}}^{ \  \  {\color{red}_{i_{0'}j_{0'}}},{\color{blue}_{\alpha_{0}\beta_{0}}}} \rho^{SE}_{[0]{\color{red}i_{0}j_{0}}{\color{blue}\alpha_{0}\beta_{0}}}
\end{equation}

In matrix representation, $T \in \mathcal{B}\left(\mathcal{H}^{S_{2}S_{1'}S_{1}S_{0'}S_{0}}\right)$ and is written as
\begin{equation}
\label{Eq39}
T=\sum T^{i_{1'}j_{1'},i_{0'}j_{0'}}_{i_{2}j_{2},i_{1}j_{1},i_{0}j_{0}}\ket{i_{2}i_{1'}i_{1}i_{0'}i_{0}}\bra{j_{2}j_{1'}j_{1}j_{0'}j_{0}}.
\end{equation}
with $\mathcal{H}^{S_{0}}=\mathcal{H}^{S}$ being the Hilbert space of the system state.

If initially, the system and environment are uncorrelated or in a factorized state, i.e., $\rho^{SE}_{[0]}=\rho^{S}_{[0]}\otimes \rho^{E}_{[0]}$, we can define a reduced process tensor $\tilde{T}$ in matrix form for two-time-step process as
\begin{equation}
\label{eq41}
T = \tilde{T} \otimes \rho^{S}_{[0]},
\end{equation}
where $\tilde{T} \in \mathcal{B}\left(\mathcal{H}^{S_{2}S_{1'}S_{1}S_{0'}}\right)$.
Using Eq.~(\ref{eq36}), we have its tensor form
\begin{equation}
\tilde{T}^{{\color{red}{i_{1'}j_{1'},i_{0'}j_{0'}}}}_{{\color{red}{i_{2}j_{2} \ , i_{1}j_{1} }}} = \delta_{\color{blue}{\alpha_{2}\beta_{2}}} U_{[1];{\color{red}_{i_{2}j_{2}}}_{,}{\color{blue}_{\alpha_{2}\beta_{2}}}}^{ \  \  {\color{red}_{i_{1'}j_{1'}}},{\color{blue}_{\alpha_{1}\beta_{1}}}} 
\cdot U_{[0];{\color{red}_{i_{1}j_{1}}}_{,}{\color{blue}_{\alpha_{1}\beta_{1}}}}^{ \  \  {\color{red}_{i_{0'}j_{0'}}},{\color{blue}_{\alpha_{0}\beta_{0}}}} \rho^{E}_{[0]{\color{blue}\alpha_{0}\beta_{0}}}.
\end{equation}
The reduced process tensor is useful to simplify algorithms (discussed later) for calculating various distance measures for an initial factorized system-environment state.

\section{Markovian process for a two-time-step quantum process}
\label{MP}
In this section, we give the operational definition for Markovian process, the distance measure for non-Markovianity and an optimization algorithm for calculating the distance measure.

\subsection{Operational definition}
A two-time-step process is Markovian if and only if the process tensor has following form \cite{KModi,KModi2}
\begin{equation}
\label{MarkovForm}
T = T_{[1]}\otimes T_{[0]} \otimes \rho^{S}_{[0]},
\end{equation}
where $T_{[0]}$ and $T_{[1]}$ are the Choi matrices of $\mathcal{T}_{[0]}$ and $\mathcal{T}_{[1]}$ in Eq.~(\ref{MyM}), respectively.
The operation description and the condition that the process tensor of a Markovian process must be in a product form of Eq.~(\ref{MarkovForm}) had been given in \cite{KModi,KModi2}.
In this paper, we go one step further to present how to construct the maps $\mathcal{T}_{[0]}$ and $\mathcal{T}_{[1]}$ [see Eq.~(\ref{T0eq}), Eq.~(\ref{T1eq}) and Eq.~(\ref{Leq})
and characterize the Markovian process as having no IBTRES and no SECE. The no IBTRES is strongly connected to the information backflow measures for non-Markovianity discussed in the literature \cite{BLP,RHP,NMdegree,NMgeo,NM2,NM3,NM4,NM5,NM6}.

\subsection{ Distance  measure for  non-Markovianity}
The non-Markovianity for the two-time-step process is defined as the minimum distance between the process tensor $T(t_{2},t_{1},t_{0})$ and all possible Markovian process tensors $T_{\rm Markov}$ \cite{KModi,KModi2} as
\begin{equation}
D_{NM}(t_{2},t_{1},t_{0})=\min_{{\rm T_{Markov}\in product\; form}} \mathcal{D}\left(T(t_{2},t_{1},t_{0}) , T_{\rm Markov} \right),
\label{eq:non-intNM}
\end{equation}
where $\mathcal{D}$ is a distance measure.
The non-Markovianity $\mathcal{N}(t_{2})$ defined in \cite{BLP,RHP} is given by 
\begin{equation}
\mathcal{N}(t_{2})= \int_{t_{0}=0}^{t_{2}} n(t_{1})dt_{1},
\end{equation}
where $n(t_{1})$ is the non-Markovianity in the time interval $[t_{1},t_{1}+dt_{1}]$, and $\mathcal{N}(t_{2})$ is the total integrated effect for $t_{1}$    from $t_{0}=0$ to $t_{2}$. The non-Markovianity definition of $\mathcal{N}(t_{2})$ is very different from the definition of $D_{NM}(t_{2},t_{1},t_{0})$ because $\mathcal{N}(t_{2})$ depends only on $t_{2}$, but $D_{NM}(t_{2},t_{1},t_{0})$ depends on $t_{0}$, $t_{1}$ and $t_{2}$, i.e., the dynamics in  $D_{NM}(t_{2},t_{1},t_{0})$ is influenced by intervention at time $t_1$
while the dynamics in $\mathcal{N}(t_{2})$ is not.
For a specific $t_{2}$, we can also define the integrated non-Markovianity $\mathcal{N}_{D}(t_{2})$ by 
\begin{equation}
\label{eq:NM}
\mathcal{N}_{D}(t_{2})=\int^{t_{2}}_{0}  dt_{1}  D_{NM}(t_{2},t_{1},0).
\end{equation}

The minimization for Eq.~(\ref{eq:non-intNM}) is not easy to perform. If we choose the distance measure $\mathcal{D}$ as quantum relative entropy $\mathcal{R}$ and normalize the matrix form of the process tenser, 
the closest Markovian process is straightforwardly found by discarding the correlations \cite{KModi,KModi2}.
However, the quantum relative entropy may diverge in some cases.
For example,
the quantum relative entropy of density matrix $\rho_1$ with respect to density matrix $\rho_2$,
$\mathcal{R}(\rho_1,\rho_2)={\rm tr}\rho_1({\rm log}\rho_1-{\rm log}\rho_2)$, is well-defined only for any pair of positive semi-definite matrices for which  ker($\rho_2$) $\subset$ ker($\rho_1$)  \cite{ReEntropy}. Here, ker($\rho$)$= \{ v \in \mathcal{H} | \rho\, v = 0\}$ is called the kernel (or null space) of $\rho: \mathcal{H} \rightarrow \mathcal{H}$ and is the set of state vectors $v$ in the vector space $\mathcal{H}$ satisfying $\rho\, v =0$.
Therefore, we present an algorithm for finding $D_{NM}$ by choosing $\mathcal{D}$ to be a convex function in next subsection.

\subsection{Optimization algorithm for finding $D_{\rm NM}$}
\label{NMopti}
We describe here an algorithm to minimize Eq.~({\ref{eq:non-intNM}). We restrict the distance measure $\mathcal{D}$ to be a convex function (e.g., trace distance). 
The convex optimization is an efficient method to find the global minimum \cite{cvxbook, cvx}. 
Unfortunately, it can not be directly implemented to Eq.~(\ref{eq:non-intNM}) because the set  $\{  T_{\rm Markov}=T_{1} \otimes T_{0} \otimes \rho^{S} \}$  is non-convex, where $\rho^{S}$ is an arbitrary system state, $T_{1}$ and $T_{0}$ are arbitrary trace preserving and completely positive (TPCP) maps which send a system state to another system state. However, the individual sets of $T_{1}$, $T_{0}$ and $\rho^{S}$ are all convex.

Therefor, we split the minimization procedure 
\begin{equation}
\min_{T_{1} \otimes T_{0} \otimes \rho^{S}}  \mathcal{D} \left(T , T_{1} \otimes T_{0} \otimes \rho^{S} \right)
\label{NMtot}
\end{equation}
of the two-time-step process into three steps: with $T_{1}$ and $T_{0}$ fixed, varying $\rho^{S}$ to minimize
\begin{equation}
\min_{\rho^{S}}  \mathcal{D} \left(T , T_{1} \otimes T_{0} \otimes \rho^{S} \right),
\label{NM0}
\end{equation}
with $T_{1}$ and $\rho^{S}$ fixed, varying $T_{0}$  to minimize
\begin{equation}
\min_{T_{0}}  \mathcal{D} \left(T , T_{1} \otimes T_{0} \otimes \rho^{S} \right)
\label{NM1}
\end{equation}
and with $T_{0}$ and $\rho^{S}$ fixed, varying $T_{1}$  to minimize  
\begin{equation}
\min_{T_{1}} \mathcal{D} \left(T , T_{1} \otimes T_{0} \otimes \rho^{S} \right).
\label{NM2}
\end{equation}

Since the varying feasible sets and the objective function $\mathcal{D}$ in
Eq.~(\ref{NM0}), Eq.~(\ref{NM1}) and Eq.~(\ref{NM2}) are all convex, and the convex optimization algorithm can be employed. If we could test every posible $T_{1}$ and $T_{0}$ in Eq.~(\ref{NM0}), the minimum of Eq.~(\ref{NMtot}) would be found. However, a much more efficient way to perform the optimization is given by the following steps. 

\begin{steps}
  \item Guess trials $ T_{1}$ and $T_{0}$.
  \item Solve Eq.~(\ref{NM0}) by the convex optimization algorithm to obtain the minimum value and optimal $\rho^{S}$. Here, $ T_{1}$ and $T_{0}$ are given by the previous step.
	\item Solve Eq.~(\ref{NM1}) by the convex optimization algorithm to obtain the minimum value and optimal $T_{0}$. Here, $ T_{1}$ and $\rho^{S}$ are given by the previous step.
  \item Solve Eq.~(\ref{NM2}) by convex optimization algorithm to obtain the minimum value and optimal $T_{1}$. Here, $ T_{0}$ and $\rho^{S}$ are given by the previous step.
	\item Repeat  Step 2, Step 3 and Step 4  until $T_{1}$, $ T_{0}$ and $\rho^{S}$ are all invariant at the end of each loop.
	\item Repeat Step 1 to  step 5  many times to obtain the optimized distance values.
	\item Take the minimum value of the optimized distance values.
\end{steps}

One should notice that $T_{1}$ and $T_{0}$ are not only positive but also restricted by the trace preserving condition Eq.~(\ref{trace preserving}).

The optimization algorithm for the two-time-step process with an initial factorized system and environment state is simpler. In this case, the process tensor given by Eq.~(\ref{eq41}) indicates that one can fix the initial system state
$\rho^{S}_{[0]}$ in the optimization procedure.
Therefore, the corresponding modified optimization algorithm is  
just to fix $\rho^{S}=\rho^{S}_{[0]}$, ignore Step~2 and replace Step 1 with Step $1'$: Guess a trial $ T_{1}$.

\section{No IBTRES for a  two-time-step quantum process}
In this section, we present an operational definition for no IBTRES and a distance measure for IBTRES, and describe an optimization algorithm for calculating the distance measure. We will also discuss the relation of our IBTRES with other studies of information backflow in the recent literature \cite{KModiNew,NMLocal}.

\label{secIBTRES}

\begin{figure}
\includegraphics[width=0.9\columnwidth]{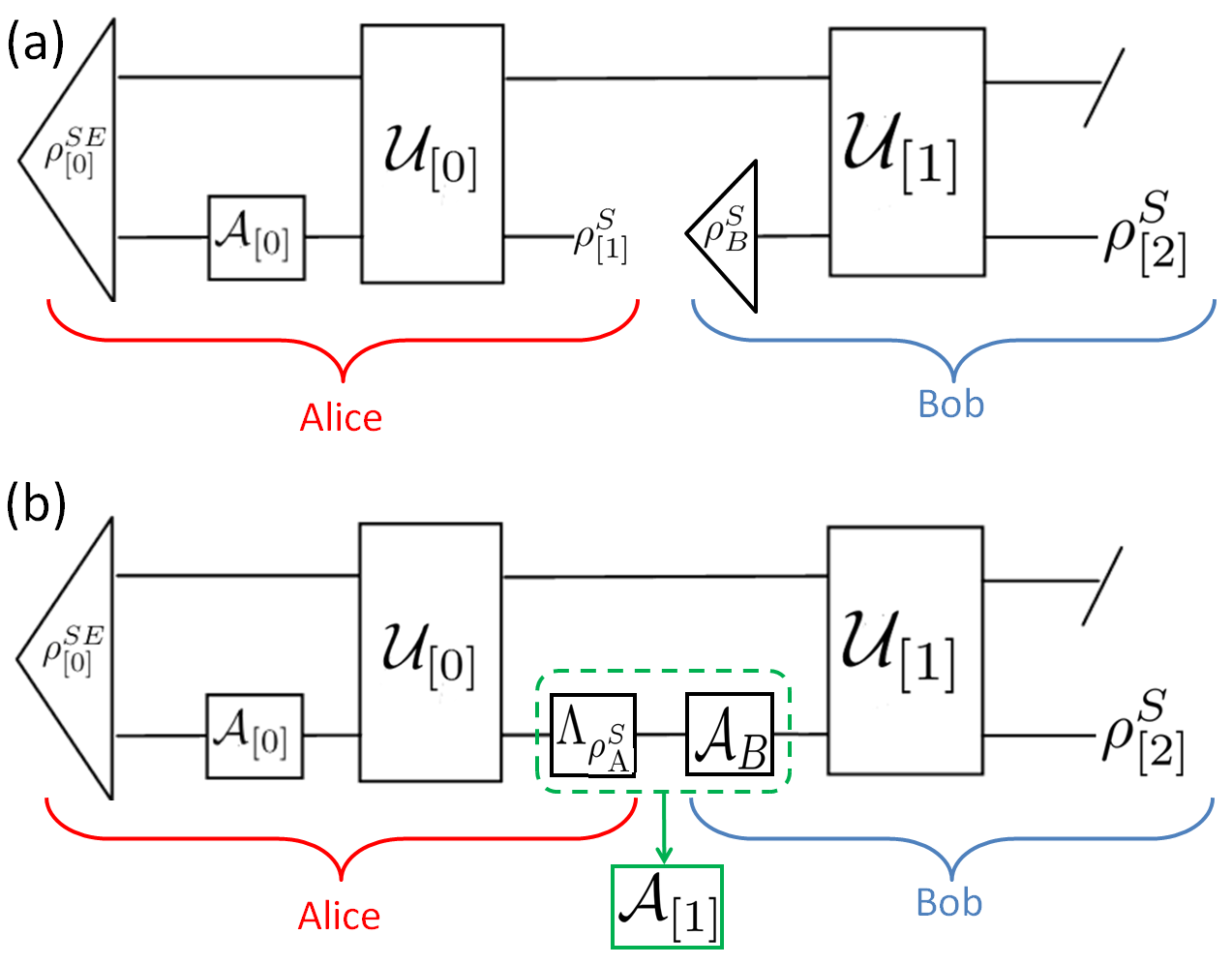}
\caption{Schematic illustrations of the Schematical illustration of a two-time-step process.
(a) Operational definition to check whether a two-time-step process has IBTRES.
(b) Equivalent circuit for (a) achieved by setting the quantum operation $\mathcal{A}_{[1]}=\mathcal{A}_{B}\circ \Lambda_{\rho^{S}_{A}}$  in Fig.~\ref{STW}(c).}
\label{Operation}
\end{figure}

\subsection{Operational definition of no IBTRES} 
The operational description of IBTRES
in a two-time-step process is defined by the following experiment.
 As shown in Fig.~\ref{Operation}(a), Alice wants to send information to Bob through an open quantum system by the following method:

\begin{steps}
  \item  Alice applies an operation $\mathcal{A}_{[0]}$ on an initial system-environment state $\rho^{S_{A}E}_{[0]}$ and sends  $\mathcal{A}_{[0]}[\rho^{S_{A}E}_{[0]}]$ to the system-environment unitary evolution $\mathcal{U}_{[0]}$.
  \item Alice keeps the system part to herself and leaves the output of environment part of Step 1 ($\rho^{E}_{[1]}$) to Bob. 
  \item  Bob produces a new arbitrary state $\rho^{S_{B}}_{B}$ for a new system with dimension dim$(\rho^{S}_{[0]})=$dim$(\rho^{S_B}_{B})$ to interact with the environment $\rho^{E}_{[1]}$ with the same system-environment Hamiltonian as the original system, and then the system-environment state ($\rho^{S_{B}}_{B} \otimes \rho^{E}_{[1]}$) undergoes the system-environment unitary $\mathcal{U}_{[1]}$.
	\item Bob measures the output of the system part $\rho^{S}_{[2]}$. If varying $\mathcal{A}_{[0]}$ results in the same $\rho^{S}_{[2]}$, then the process has no IBTRES. 
\end{steps}

 In this experiment, the information, which is sent from Alice to Bob, can pass only through the reduced environment state $\rho^{E}_{[1]}$. Note that Step 3  corresponds to Eq.~(\ref{Leq}) for $n=1$. A two-time-step process has no IBTRES if and only if Alice can not signal to Bob by this experiment.    

 Next, we obtain the if-and-only-if (iff) condition for no IBTRES in terms of process tensor. For this purpose, we should choose a suitable $\mathcal{A}_{[1]}$ in Fig.~\ref{STW}(c) to represent the experimental setting shown in Fig.~\ref{Operation}(a). This can be achieved by setting $\mathcal{A}_{[1]}$ as a constant map $\Lambda_{\rho^{S}_{A}}$ composite with 
 an arbitrary
  trace preserving and completely positive (TPCP) map
 $\mathcal{A}_{B}$ as shown in Fig.~\ref{Operation}(b).
Because $\Lambda_{\rho^{S}_{A}}[\rho^{SE}_{[1]}]=\rho^{S}_{A}\otimes\rho^{E}_{[1]}$, the experiment setting in Fig.~\ref{Operation}(b) is equivalent to original experiment setting in Fig.~\ref{Operation}(a) if one replaces Step 2 by Step 2$'$ and Step 3 by Step3$'$ as follows.

  Step 2$'$: Alice applies a constant map  $\Lambda_{\rho^{S}_{A}}$  on the same original system, sends the total output state $\Lambda_{\rho^{S}_{A}}[\rho^{SE}_{[0]}]=\rho^{S}_{A}\otimes\rho^{E}_{[1]}$ to Bob, and tells Bob what $\rho^{S}_{A}$ is.

  Step 3$'$:  Bob applies
  a quantum operation $\mathcal{A}_{B}$ on the system part of the total state $\rho^{S}_{A}\otimes\rho^{E}_{[1]}$ he received from Alice and sends the resultant state $\mathcal{A}_{B}[\rho^{S}_{A}\otimes\rho^{E}_{[1]}]$ into the unitary map $\mathcal{U}_{[1]}$,
  where $\mathcal{A}_{B}[\rho^{S}_{A}] \equiv \rho^{S}_{B}$ is the state produced by Bob in Step 3.

  The operational description of no IBTRES in Fig.~\ref{Operation}(b) is more feasible in a realistic experiment \cite{KModiNew}. 
  A similar procedure to that in Fig.~\ref{Operation}(b)  is called a {\em causal break} introduced in \cite{KModi} to break the causal
link between the past and the future of the system when discussing the criterion for a quantum Markovian process. 
Generally, any operation whose output is
independent of its input constitutes a causal break, and an example of
causal break given in \cite{KModi} is to perform
a quantum measurement on the system and
  then reprepare the post-measurement system state into a known state randomly chosen from some set.

	
\newtheorem{Corollary}{Corollary}
\begin{Corollary}
\label{local effect}
A two-time-step process of Eq.~(\ref{Eq39}) has no IBTRES if and only if the process tensor in matrix form $T$ satisfies the following equation 
\begin{equation}
\label{eq:local effect}
{\rm tr}_{S_{1}}(T)= T'^{S_{2}S_{1'}}_{[1]}\otimes I^{S_{0'}} \otimes \rho^{S}_{[0]},
\end{equation}
where $T'^{S_{2}S_{1'}}_{[1]}$ is the Choi matrix of a TPCP quantum map, and 
$I^{S_{0'}}$ is the identity matrix.
\end{Corollary}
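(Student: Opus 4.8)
The plan is to translate the operational statement "Alice cannot signal to Bob" (Definition~\ref{no-IBTRESdef} specialized to a two-time-step process via the equivalent circuit of Fig.~\ref{Operation}(b)) directly into a constraint on the process tensor $T$ of Eq.~(\ref{Eq39}). First I would recall from the discussion surrounding Fig.~\ref{Operation}(b) that testing for IBTRES amounts to inserting the quantum operation $\mathcal{A}_{[1]}=\mathcal{A}_{B}\circ\Lambda_{\rho^{S}_{A}}$ into slot $1$ of the process tensor; the constant map $\Lambda_{\rho^{S}_{A}}$ replaces the incoming system state at time $t_1$ by $\rho^{S}_{A}$ and discards whatever was there, while leaving the environment (hence $\rho^{E}_{[1]}$) intact. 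Since the Choi matrix of $\Lambda_{\rho^{S}_{A}}$ is $\rho^{S}_{A}\otimes I^{S_{1}}$ (up to the convention fixed in Appendix~\ref{constant_rep}), contracting slot $1$ of $T$ with this Choi matrix produces exactly ${\rm tr}_{S_{1}}(T)$ tensored against $\rho^{S}_{A}$ on the appropriate leg. The key observation is that the $\mathcal{A}_{[0]}$-dependence of the final state $\rho^{S}_{[2]}$ is carried entirely by the $S_{0'}$-leg of $T$ (the output of Alice's operation at time $t_0$), so "no signalling from Alice to Bob" is equivalent to the statement that, after tracing out $S_1$, the remaining tensor factorizes as $(\text{something on } S_2 S_{1'} S_{1'_{\mathrm{in}}})\otimes(\text{identity on } S_{0'})\otimes\rho^{S}_{[0]}$.

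The argument then splits into the two implications. For the "only if" direction: assume no IBTRES. Using Definition~\ref{no-IBTRESdef} with $n=1$, the map $\mathcal{L}_{[1]}[\rho^{S},\mathcal{A}_{[0]}]$ is independent of $\mathcal{A}_{[0]}$, hence equals a fixed TPCP map $\mathcal{T}_{[1]}$ by Eq.~(\ref{T1eq}). Feeding a general $\rho^{S}$ into the $S_{1'}$-slot and a general constant map through $S_{0'}$ (to realize an arbitrary $\rho^{E}_{[1]}$ independent of earlier operations), and contracting, one reads off that ${\rm tr}_{S_{1}}(T)$, viewed as a map from the $S_{0'}$ and $S_{1'}$ inputs to the $S_2$ output, must act trivially (as a partial trace, i.e.\ an identity-resolution) on the $S_{0'}$ input and as $\mathcal{T}_{[1]}$ on the $S_{1'}$ input — which is precisely Eq.~(\ref{eq:local effect}) with $T'^{S_{2}S_{1'}}_{[1]}$ the Choi matrix of $\mathcal{T}_{[1]}$. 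For the "if" direction: suppose Eq.~(\ref{eq:local effect}) holds. Then for any choice of $\mathcal{A}_{[0]}$ we compute $\rho^{S}_{[2]}=\mathcal{T}[\mathcal{A}_{B}\circ\Lambda_{\rho^{S}_{A}},\mathcal{A}_{[0]}]$ by contracting the $\Lambda_{\rho^{S}_{A}}$ Choi matrix into slot $1$; the contraction only sees ${\rm tr}_{S_{1}}(T)$ on the $S_{0},S_{1}$-indices, and by hypothesis the $S_{0'}$-leg of ${\rm tr}_{S_1}(T)$ is the identity $I^{S_{0'}}$, so the contraction with $A_{[0]}$ through $S_{0'}$ just yields ${\rm tr}(A_{[0]}[\rho^{S}_{[0]}])={\rm tr}(\rho^{S}_{[0]})$, a constant independent of $\mathcal{A}_{[0]}$. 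Hence $\rho^{S}_{[2]}$ does not depend on $\mathcal{A}_{[0]}$, i.e.\ Alice cannot signal, i.e.\ no IBTRES.

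The main obstacle I anticipate is bookkeeping rather than conceptual: one must be scrupulous about which tensor leg of $T$ corresponds to the input versus the output of each quantum operation slot (the primed/unprimed index convention of Eq.~(\ref{SEevo})), and about where the normalization factors ${\rm tr}(\rho^{E}_{[1]})$ versus ${\rm tr}(\rho^{S}_{[0]})$ land — the identity ${\rm tr}(\rho^{E}_{[n]})={\rm tr}(\rho^{S}_{[n]})$ used in the proof of Proposition~\ref{prop1} will be needed again here to match conventions. A secondary subtlety is verifying that ranging over all constant maps $\Lambda_{\rho^{S}_{A}}$ and all TPCP $\mathcal{A}_{B}$ on slot $1$, together with all $\mathcal{A}_{[0]}$ on slot $0$, probes the full tensor ${\rm tr}_{S_1}(T)$ — i.e.\ that the family of test operations is "tomographically complete" on the relevant legs — so that the operational no-signalling condition is genuinely equivalent to the algebraic identity Eq.~(\ref{eq:local effect}) and not merely implied by it. This follows because quantum operations span the space of linear maps and the Choi isomorphism is faithful, but it should be stated explicitly.
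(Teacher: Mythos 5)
Your proposal is correct and follows essentially the same route as the paper's proof in Appendix~\ref{APPnoIB}: insert $\mathcal{A}_{[1]}=\mathcal{A}_{B}\circ\Lambda_{\rho^{S}_{A}}$ so that the contraction reduces to ${\rm tr}_{S_{1}}(T)$, observe that independence of $\rho^{S}_{[2]}$ from $\mathcal{A}_{[0]}$ forces the map $\mathcal{A}_{[0]}\mapsto T'_{[1]}$ to be a constant map whose tensor form is $\delta_{i_{0'}j_{0'}}\rho^{S}_{{\rm fixed};i_{0}j_{0}}$, and pin down $\rho^{S}_{\rm fixed}=\rho^{S}_{[0]}$ by the trace/tomographic-completeness argument. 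The only cosmetic difference is that the paper invokes its explicit constant-map tensor representation [Eq.~(\ref{CMapPTForm})] where you appeal to the Choi isomorphism and tomographic completeness, which is the same underlying fact.
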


\begin{proof}
See Appendix \ref{APPnoIB}.
\end{proof}

\subsection{Distance measure  for IBTRES}
Given the process tensor in matrix form  $T$
of a two-time-step-process and a distance measure $\mathcal{D}$, 
the measure of IBTRES of the process is defined by the minimum distance between $T$ and the set $T_{\rm no-IBTRES}$ as shown in Fig.~\ref{set}: 
\begin{equation}
\label{NLE0}
 D_{\rm IBTRES}=\min_{ T_{\rm no-IB} \in    \{ T_{\rm no-IBTRES} \}}  \mathcal{D}(T,T_{\rm no-IB}).
\end{equation}
Here, $\{T_{\rm no-IBTRES} \}$ is the collection of the two-time-step processes with no IBTRES, and $T_{\rm no-IB}$ is a possible process tensor in the set $\{ T_{\rm no-IBTRES}\}$.

\subsection{Optimization algorithm for finding $D_{\rm IBTRES}$}
In this subsection, we describe an algorithm to minimize Eq.~(\ref{NLE0}) for two-time-step processes.
By using Eq.~(\ref{eq:local effect}), Eq.~(\ref{NLE0}) can be represented by the following form:
\begin{equation}
\label{OptIB}
 D_{\rm IBTRES}=\min_{ \substack{ T'^{S_{2}S_{1'}} \in  {\rm TPCP \ map} \\ \rho^{S} \in {\rm density \ matrix} \\{\rm tr}_{S_{1}}(T_{\rm no-IB}) =T'^{S_{2}S_{1'}}\otimes I^{S_{0'}} \otimes \rho^{S} } }   \mathcal{D}(T,T_{\rm no-IB}).
\end{equation}
If the distance measure $\mathcal{D}$ is convex, we can use the similar procedure in Sec.\ref{NMopti} to perform the optimization by exploiting the convex optimization algorithm. We split
the minimization procedure of  Eq.~(\ref{OptIB}) into two steps:

(1) Fix $\rho^{S}$ and vary $T_{\rm no-IB}$ to find optimal $T'^{S_{2}S_{1'}}$ to minimize
\begin{equation}
\label{OptIB1}
\min_{ \substack{ T'^{S_{2}S_{1'}} \in  {\rm TPCP \ map} \\  {\rm tr}_{S_{1}}(T_{\rm no-IB}) =T'^{S_{2}S_{1'}}\otimes I^{S_{0'}} \otimes \rho^{S} } }    \mathcal{D}(T,T_{\rm no-IB}).
\end{equation}

(2) Fix $T'^{S_{2}S_{1'}}$ and vary $T_{\rm no-IB}$ to find optimal $\rho^{S}$ to minimize 
\begin{equation}
\label{OptIB2}
\min_{ \substack{  \rho^{S} \in {\rm density \ matrix} \\ {\rm tr}_{S_{1}}(T_{\rm no-IB}) =T'^{S_{2}S_{1'}}\otimes I^{S_{0'}} \otimes \rho^{S} } }    \mathcal{D}(T,T_{\rm no-IB}).
\end{equation}

Eq.~(\ref{OptIB1}) and Eq.~(\ref{OptIB2}) are both convex and the convex optimization algorithm can be implemented. 
Similar to the algorithm in Sec.\ref{NMopti}, Eq.~(\ref{OptIB}) can be performed by the following algorithm:
\begin{steps}
  \item Guess a trial $\rho^{S}$.
  \item Solve Eq.~(\ref{OptIB1}) by the convex optimization algorithm to obtain the minimum value and optimal $T'^{S_{2}S_{1'}}$. Here, $\rho^{S}$ is given by the previous step.
	\item Solve Eq.~(\ref{OptIB2}) by the convex optimization algorithm to obtain the minimum value and optimal $\rho^{S}$. Here, $T'^{S_{2}S_{1'}}$ is given by the previous step.
	\item Repeat  Step 2 and Step 3 until $\rho^{S}$ and $T'^{S_{2}S_{1'}}$ are both invariant at the end of each loop.
	\item Repeat Step 1 to  step 4 many times to obtain the optimized distance values.
	\item Take the minimum value of the optimized distance values.
\end{steps}

The algorithm can be much simplified if the following two additional constrains are both satisfied: (a) the initial system-environment state of the process is uncorrelated or factorized, $\rho^{SE}_{[0]}=\rho^{S}_{[0]}\otimes \rho^{E}_{[0]}$, so the process tensor in matrix form $T = \tilde{T} \otimes \rho^{S}_{[0]}$ is also in an uncorrelated from of Eq.~(\ref{eq41}). (b) the distance measure $\mathcal{D}$ is not only a convex function but also contractive under trace preserving and completely positive (TPCP) maps, i.e., if $\mathcal{A}_{\rm TPCP}$ is a TPCP map, then $\mathcal{D}(\rho_{1},\rho_{2})\geq\mathcal{D}(\mathcal{A}_{\rm TPCP}[\rho_{1}],\mathcal{A}_{\rm TPCP}[\rho_{2}])$ for arbitrary density matrices $\rho_{1}$ and $\rho_{2}$. The trace distance, for example, as a distance measure satisfies condition (b). 
With these two additional constrains, one obtains 
\begin{eqnarray}
  \mathcal{D}(T,T_{\rm no-IB}) & =& \mathcal{D}(\tilde{T} \otimes \rho^{S}_{[0]},T_{\rm no-IB}) \nonumber \\ 
&\geq &\mathcal{D}( \Lambda_{\rho^{S}_{0}}[\tilde{T} \otimes \rho^{S}_{[0]}],\Lambda_{\rho^{S}_{0}}[T_{\rm no-IB}]) \nonumber \\
&=& \mathcal{D}( \tilde{T} \otimes \rho^{S}_{[0]},\tilde{T}_{\rm no-IB} \otimes \rho^{S}_{[0]}),
\label{eq55}
\end{eqnarray}
where $\Lambda_{\rho^{S}_{0}}$ is the constant map (a TPCP map) applied only on the system state space and $\tilde{T}_{\rm no-IB} = {\rm tr}_{S}(T_{\rm no-IB})$.
Equation~(\ref{OptIB}) can be simplified by the inequality of Eq.~(\ref{eq55}):
\begin{equation}
\label{OptIBfac0}
\begin{split}
 &D_{\rm IBTRES}= \min_{ \substack{ T'^{S_{2}S_{1'}} \in  {\rm TPCP \ map} \\ \rho^{S} \in {\rm density \ matrix} \\ {\rm tr}_{S_{1}}(T_{\rm no-IB}) =T'^{S_{2}S_{1'}}\otimes I^{S_{0'}} \otimes \rho^{S} } }  \mathcal{D}(T,T_{\rm no-IB}) \\
&= \min_{ \substack{ T'^{S_{2}S_{1'}} \in  {\rm TPCP \ map} \\ \rho^{S} \in {\rm density \ matrix} \\ {\rm tr}_{S_{1}}(T_{\rm no-IB}) =T'^{S_{2}S_{1'}}\otimes I^{S_{0'}} \otimes \rho^{S} } } \mathcal{D}( \tilde{T} \otimes \rho^{S}_{[0]},\tilde{T}_{\rm no-IB} \otimes \rho^{S}_{[0]}). 
\end{split}
\end{equation}
Note that the minimum value will happen at $\rho^{S}=\rho^{S}_{[0]}$. So $\rho^{S}$ is fixed, and can be remove from the minimizeation variables. Therefore Eq.~(\ref{OptIBfac0}) becomes
\begin{equation}
\label{OptIBfac}
\begin{split}
&D_{\rm IBTRES} \\
&= \min_{ \substack{ T'^{S_{2}S_{1'}} \in  {\rm TPCP \ map}  \\ {\rm tr}_{S_{1}}(\tilde{T}_{\rm no-IB}) =T'^{S_{2}S_{1'}}\otimes I^{S_{0'}} } } \mathcal{D}( \tilde{T} \otimes \rho^{S}_{[0]},\tilde{T}_{\rm no-IB} \otimes \rho^{S}_{[0]}). \\
\end{split}
\end{equation}
Equation~(\ref{OptIBfac}) can be solved by the convex optimization algorithm efficiently and the condition  
\begin{equation}
\label{ucNoIB}
{\rm tr}_{S_{1}}\tilde{T} =T'^{S_{2}S_{1'}}\otimes I^{S_{0'}}
\end{equation}
in the minimization is the iff condition for a process with no IBTRES and with an initial factorized system-environment state.

\subsection{Relation with other information backflow studies}
We notice a recent study that discusses the completely positive (CP) divisibility for a quantum process with an initial factorized system-environment state \cite{KModiNew} using the idea of no information backflow. 
The conditional non-signaling condition defined in \cite{KModiNew} for a two-time-step process (time: $r$, $s$, $t$) is the no-IBTRES process with an initial factorized system-environment state discussed here. Our description here for no-IBTRES process is, however, for a general initial correlated system-environment state.
For an $m$-time-step process, the authors in \cite{KModiNew}
define an \textsl{operational} dynamical map $\Xi_{n_{2},n_{1}}$ from time step $n_{1}$ to time step $n_{2}$ (with $n_{1}< n_{2}$) by setting the quantum operation $\mathcal{A}_{[n_{1}]}$ at time step $n_{1}$ as a constant map, i.e. $\mathcal{A}_{[n_{1}]}=\Lambda_{\rho}$ and measure the output of the system state at time step $n_{2}$.
The \textsl{operational} dynamical map $\Xi_{n_{2},n_{1}}$ can then be obtained by varying the state $\rho$ of $\Lambda_{\rho}$ and performing the process tomography for $\Xi_{n_{2},n_{1}}$.
They define that the dynamics of an $m$-time-step process is operational CP divisible if and only if $\Xi_{m,0} = \Xi_{m,n}\circ \Xi_{n,0}$. In this case, the process has no IBTRES for the two-time-step process with time steps: ($0$,$n$,$m$).
The operational CP divisibility defined in \cite{KModiNew} using a constant map 
is a necessary but not sufficient condition for a Markovian process.

We notice that another idea of no information backflow similar to the no-IBTRES effect has been proposed by L. Li et al. \cite{NMLocal}. There, the constant map is applied on the environment rather than the system (see Appendix \ref{Li}).
In contrast, the general quantum regression formula
described also in \cite{NMLocal} involves interventions
at different times on the open system itself and
is very closely related to the approach for a Markovian process
discussed here.
We show in Appendix \ref{GQRT} that
the general quantum regression formula is a sufficient but
not necessary condition for a Markovian process.
We also define the {\em extended} general quantum regression formula (see Definition \ref{postGQRTdef} in Appendix \ref{GQRT}) 
and prove that a process is Markovian if and only if it satisfies the extended general quantum regression formula.

\section{No SECE  for a  two-time-step quantum process}
In this section, we present an operational definition for no SECE and a distance measure for SECE.
\subsection{Operational definition}
We give here the iff condition of a two-time-step process that has no SECE in terms of process tensor. The conditions of 
no SECE for a single-time-step process ($M$), and the first-time-step process of a two-time-step process [see Eq.~(\ref{QChanneln}] for $n=1$) are equivalent to the process tensor condition of an initial uncorrelated or factorized system-environment state. 
Thus from Eq.~(\ref{eq41}), the corresponding process tensors 
are given by the product forms:
\begin{equation}
M=T_{[1]} \otimes \rho^{S}_{[0]}
\end{equation}
and 
\begin{equation}
T=\tilde{T} \otimes \rho^{S}_{[0]},
\end{equation}
respectively,
where, $\tilde{T} \in \mathcal{B}\left(\mathcal{H}^{S_{2}S_{1'}S_{1}S_{0'}}\right)$.
In the remaining part of this subsection, we focus on the iff condition of no SECE for the process in the second time step of the two-time-step process (see Eq.~(\ref{QChanneln}) for $n=2$). 
\newtheorem{Lemma}[theorem]{Lemma}
\begin{Lemma}
\label{Lemma1}
The second-time-step process of a two-time-step process has no SECE if and only if the process tensor in map form satisfies
\begin{equation}
\label{eq57}
\mathcal{T}[\mathcal{A}_{[1]},\mathcal{A}_{[0]}] = \mathcal{T}[\mathcal{A}_{[1]} \circ \Lambda_{\rho^{S}_{[1]}/{\rm tr}(\rho^{S}_{[1]})},\mathcal{A}_{[0]}], 
\end{equation}
or equivalently, in tensor form satisfies 
\begin{equation}
\label{Lemma1eq}
\begin{split}
& T^{i_{1'}j_{1'},i_{0'}j_{0'}}_{\ i_{2}j_{2},\  i_{1}j_{1},\ i_{0}j_{0}}\mathcal{A}_{[0];i_{0'}j_{0'}}^{\ \ \ i_{0}j_{0}}\mathcal{A}_{[1];i_{1'}j_{1'}}^{\ \ \ i_{1}j_{1}}(\delta_{x_{0'}y_{0'}}\rho^{S}_{[1];x_{0'},y_{0'}}) \\ 
 &= T^{i_{1'}j_{1'},i_{0'}j_{0'}}_{\ i_{2}j_{2},\  x_{0}y_{0},\ i_{0}j_{0}}\mathcal{A}_{[0];i_{0'}j_{0'}}^{\ \ \ i_{0}j_{0}}\mathcal{A}_{[1];i_{1'}j_{1'}}^{\ \ \ i_{1}j_{1}}\cdot \delta_{x_{0}y_{0}}\rho^{S}_{[1];i_{1},j_{1}}.
\end{split}
\end{equation}
\end{Lemma}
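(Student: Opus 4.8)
The plan is to unpack the definition of no SECE for the second time step (Definition \ref{no-SECEdef} with $n=2$) and translate it, line by line, into a statement about the process tensor $T$. Starting from Eq.~(\ref{no-SECEeq}) with $n=2$, the left-hand side is $\mathrm{tr_E}\,\mathcal{U}_{[1]}\circ\mathcal{A}_{[1]}[\rho^{SE}_{[1]}]$, where $\rho^{SE}_{[1]} = \mathcal{U}_{[0]}\circ\mathcal{A}_{[0]}[\rho^{SE}_{[0]}]$. Feeding $\rho^{SE}_{[1]}$ through the remaining steps and then tracing out the environment gives exactly $\mathcal{T}[\mathcal{A}_{[1]},\mathcal{A}_{[0]}]$ as in Eq.~(\ref{eq34}). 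For the right-hand side, the key observation is that replacing $\rho^{SE}_{[1]}$ by $\rho^{S}_{[1]}\otimes\rho^{E}_{[1]}/\mathrm{tr}(\rho^{E}_{[1]})$ is precisely the action of precomposing $\mathcal{A}_{[1]}$ with the constant map $\Lambda_{\rho^{S}_{[1]}/\mathrm{tr}(\rho^{S}_{[1]})}$, since a constant map sends $\rho^{SE}_{[1]}\mapsto (\rho^{S}_{[1]}/\mathrm{tr}(\rho^{S}_{[1]}))\otimes\rho^{E}_{[1]}$ (using the constant-map property recorded just after Definition \ref{CMapDef}, together with $\mathrm{tr}(\rho^{S}_{[1]})=\mathrm{tr}(\rho^{E}_{[1]})$). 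This immediately yields the map-form equation Eq.~(\ref{eq57}), and conversely Eq.~(\ref{eq57}) unwinds back to Eq.~(\ref{no-SECEeq}) for $n=2$; the equivalence here is essentially a definitional restatement.

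The second half is to pass from the map form Eq.~(\ref{eq57}) to the tensor form Eq.~(\ref{Lemma1eq}). Here I would expand both sides of Eq.~(\ref{eq57}) using the tensor representation of the process tensor in Eq.~(\ref{eq36}) and the tensor form of a quantum operation $\mathcal{A}_{[n];i_{n'}j_{n'}}^{\ \ \ i_{n}j_{n}}$. Contracting $T$ against $\mathcal{A}_{[0]}$ and $\mathcal{A}_{[1]}$ reproduces the left-hand side of Eq.~(\ref{Lemma1eq}). For the right-hand side, I need the tensor representation of the constant map $\Lambda_{\rho^{S}_{[1]}/\mathrm{tr}(\rho^{S}_{[1]})}$, which is given in Appendix \ref{constant_rep}: it factorizes as $\delta$ on the input indices times $\rho^{S}_{[1]}$ on the output indices, up to the normalization $\mathrm{tr}(\rho^{S}_{[1]})$. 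Composing $\mathcal{A}_{[1]}$ with this constant map and contracting with $T$ produces the index structure on the right-hand side of Eq.~(\ref{Lemma1eq}), where the summed-over index pair $(x_0,y_0)$ replaces $(i_1,j_1)$ in the lower slot of $T$ and the $\delta_{x_0 y_0}$ implements the trace, while $\rho^{S}_{[1];i_1,j_1}$ appears as the repreparation. The factor $\delta_{x_{0'}y_{0'}}\rho^{S}_{[1];x_{0'},y_{0'}}=\mathrm{tr}(\rho^{S}_{[1]})$ on the left matches the normalization on the right so the two sides are genuinely equal as an identity in $\mathcal{A}_{[0]},\mathcal{A}_{[1]}$.

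The main obstacle I anticipate is purely bookkeeping: getting the index placement in Eq.~(\ref{Lemma1eq}) exactly right — which indices of $T$ are contracted with $\mathcal{A}_{[1]}$ before versus after the constant map, and correctly tracking that the constant map both traces out the incoming system state (hence the $\delta_{x_0 y_0}$ and the dummy pair $(x_0,y_0)$ in $T$'s lower indices) and reinjects $\rho^{S}_{[1]}$ (hence $\rho^{S}_{[1];i_1,j_1}$). One must be careful that the normalization $1/\mathrm{tr}(\rho^{S}_{[1]})$ from the constant map cancels against the $\delta_{x_{0'}y_{0'}}\rho^{S}_{[1];x_{0'},y_{0'}}$ factor; writing both sides without the $1/\mathrm{tr}$ by multiplying through is the cleanest way to present it. A secondary point to state carefully is that the identity must hold for \emph{all} choices of $\mathcal{A}_{[0]}$ and $\mathcal{A}_{[1]}$ (this is part of the no-SECE definition), so the equivalence between Eq.~(\ref{eq57}) and Eq.~(\ref{Lemma1eq}) is as equations quantified over all admissible $\mathcal{A}_{[0]},\mathcal{A}_{[1]}$, not as an identity of bare tensors; since quantum operations span the relevant operator space this distinction does not cause trouble but should be noted. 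Everything else reduces to the isomorphism between the map, tensor and Choi representations already set up in Sec.~II.
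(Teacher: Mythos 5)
Your proposal is correct and follows essentially the same route as the paper: the authors likewise observe that $\Lambda_{\rho^{S}_{[1]}/{\rm tr}(\rho^{S}_{[1]})}[\rho^{SE}_{[1]}]=\rho^{S}_{[1]}\otimes\rho^{E}_{[1]}/{\rm tr}(\rho^{E}_{[1]})$, so that precomposing $\mathcal{A}_{[1]}$ with this constant map implements exactly the factorization replacement in Eq.~(\ref{no-SECEeq}) for $n=2$, and they obtain the tensor form Eq.~(\ref{Lemma1eq}) by multiplying both sides of Eq.~(\ref{eq57}) by ${\rm tr}(\rho^{S}_{[1]})=\delta_{x_{0'}y_{0'}}\rho^{S}_{[1];x_{0'},y_{0'}}$ and inserting the tensor representation of the constant map from Appendix~\ref{constant_rep}. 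Your additional remark that the identity is to be read as quantified over all admissible $\mathcal{A}_{[0]},\mathcal{A}_{[1]}$ is a worthwhile clarification the paper leaves implicit, but it does not change the argument.
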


Lemma \ref{Lemma1} follows direct from the fact that $\Lambda_{\rho^{S}_{[1]}/{\rm tr}(\rho^{S}_{[1]})}[\rho^{SE}_{[1]}] = \rho^{S}_{[1]} \otimes \rho^{E}_{[1]}/{\rm tr}(\rho^{S}_{[1]}) = \rho^{S}_{[1]} \otimes \rho^{E}_{[1]}/{\rm tr}(\rho^{E}_{[1]})$ and Eq.~(\ref{no-SECEeq}) in Definition \ref{no-SECEdef}. 
 The factor  
$\delta_{x_{0}y_{0}}\rho^{S}_{[1];i_{1},j_{1}}$ and $(\delta_{x_{0'}y_{0'}}\rho^{S}_{[1];x_{0'},y_{0'}})$
 in Eq.~(\ref{Lemma1eq}) corresponds to the constant map 
\begin{equation}
\label{trCmap}
\Lambda_{\rho^{S}_{[1]}/{\rm tr}(\rho^{S}_{[1]})}[\rho] = {\rm tr}(\rho) \rho^{S}_{[1]}/{\rm tr}(\rho^{S}_{[1]}),
\end{equation}
where the divisor ${\rm tr}(\rho^{S}_{[1]})$  makes $\Lambda_{\rho^{S}_{[1]}/{\rm tr}(\rho^{S}_{[1]})}$  trace-preserving.
In obtaining Eq.~(\ref{Lemma1eq}),
we have multiplied the factor ${\rm tr}(\rho^{S}_{[1]})$ (i.e., $\delta_{x_{0'}y_{0'}}\rho^{S}_{[1];x_{0'},y_{0'}}$) to the both sides of  Eq.~(\ref{eq57}).

The iff condition, Eq.~(\ref{eq57}), of Lemma \ref{Lemma1} is elegantly described but hard to verify in practice as one needs to know $\rho^{S}_{[1]}$ to construct $\Lambda_{\rho^{S}_{[1]}/{\rm tr}(\rho^{S}_{[1]})}$ which violate the no-cloning theorem. We give below a verifiable iff condition that requires only the process tensor $T$ in
matrix form is known.

To this end, let us first set 
\begin{eqnarray}
  \label{eq:L}
&L = {\rm tr}_{S_{1}}(T) \\
&N ={\rm tr}_{S_{1}}(M)
\label{eq:N}    
\end{eqnarray}
where  $M = \frac{1}{n_{d}} {\rm tr}_{S_{2}{S_{1'}}} (T)$, with $n_d$ the dimension of the Hilbert space, is the single-time-step process tensor in matrix form, and $L$ is the matrix form of $\mathcal{L}_{[n]}$ in Eq.~(\ref{Leq}) for $n=1$.
Then the iff  condition in matrix form in terms of the tensor products of the matrices ($N\otimes T$) and  ($L\otimes M$) in a relabeled basis state vector can be obtain as shown in Corollary \ref{SECEprop1} below.

\begin{Corollary}
\label{SECEprop1}
The second-time-step process in a two-time-step process has no SECE if and only if 
\begin{equation}
\label{CE_matrix0}
\begin{split}
&\left( \mathcal{I} + \mathcal{S}_{30'} \circ \mathcal{S}_{2'0} \right)N \otimes T \\
&= \mathcal{S}_{32} \circ \mathcal{S}_{2'1'} \circ \left( \mathcal{I} + \mathcal{S}_{20'} \circ \mathcal{S}_{1'0} \right)L \otimes M,
\end{split}
\end{equation}
where the indexes of the basis vectors in the matrix form of Eq.~(\ref{CE_matrix0}) are written as 
$\ket{i_{3}}_{3}\bra{j_{3}}\otimes\ket{i_{2'}}_{2'}\bra{j_{2'}}\otimes\ket{i_{2}}_{2}\bra{j_{2}}\otimes\ket{i_{1'}}_{1'}\bra{j_{1'}}\otimes\ket{i_{1}}_{1}\bra{j_{1}}\otimes\ket{i_{0'}}_{0'}\bra{j_{0'}}\otimes\ket{i_{0}}_{0}\bra{j_{0}}$,  $\mathcal{I}$ is the identity map and
$\mathcal{S}_{mn}$ is the SWAP operation which swaps the basis state indexes between $m$ and $n$, e.g., 
$\mathcal{S}_{2'1'}[\cdots \otimes\ket{{\color{red}i}}_{2'}\bra{{\color{red}j}}\otimes \cdots \ket{{\color{blue}k}}_{1'}\bra{{\color{blue}l}}\otimes \cdots ] = \cdots \otimes\ket{{\color{blue}k}}_{2'}\bra{{\color{blue}l}}\otimes \cdots \ket{{\color{red}i}}_{1'}\bra{{\color{red}j}}\otimes \cdots$.
 
\end{Corollary}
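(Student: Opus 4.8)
The plan is to reduce the matrix identity~(\ref{CE_matrix0}) to the tensor-form characterization of no SECE already established in Lemma~\ref{Lemma1}, Eq.~(\ref{Lemma1eq}), by first eliminating the explicit appearance of the unknown intermediate state $\rho^{S}_{[1]}$ and then invoking a polarization argument. The key observation is that $\rho^{S}_{[1]}$ is itself the output of the first-time-step process acting on $\mathcal{A}_{[0]}$: in tensor form, $\rho^{S}_{[1];i_{1}j_{1}} = M^{i_{0'}j_{0'}}_{i_{1}j_{1},i_{0}j_{0}}\,\mathcal{A}_{[0];i_{0'}j_{0'}}^{i_{0}j_{0}}$ and ${\rm tr}(\rho^{S}_{[1]}) = N^{i_{0'}j_{0'}}_{i_{0}j_{0}}\,\mathcal{A}_{[0];i_{0'}j_{0'}}^{i_{0}j_{0}}$, where $M$ and $N={\rm tr}_{S_{1}}(M)$ depend only on the process. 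Substituting these into Eq.~(\ref{Lemma1eq}) removes the no-cloning obstacle noted right after Lemma~\ref{Lemma1}: both sides then become contractions of tensors built solely from $T$, $M$, $N$ and $L={\rm tr}_{S_{1}}(T)$ against the test operations $\mathcal{A}_{[0]},\mathcal{A}_{[1]}$.

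Carrying out the substitution, the left-hand side of Eq.~(\ref{Lemma1eq}) becomes $N\otimes T$ contracted against one copy of $\mathcal{A}_{[1]}$ and two copies of $\mathcal{A}_{[0]}$ (one feeding $T$'s operation slot, one feeding $N$), while the right-hand side becomes $L\otimes M$ contracted against the same $\mathcal{A}_{[1]}\otimes\mathcal{A}_{[0]}\otimes\mathcal{A}_{[0]}$, with $M$'s output index feeding $\mathcal{A}_{[1]}$'s input slot and $L$'s $S_{1'}$ leg feeding $\mathcal{A}_{[1]}$'s output. So no SECE amounts to requiring
\[
\big\langle\, N\otimes T ,\ \mathcal{A}_{[1]}\otimes\mathcal{A}_{[0]}\otimes\mathcal{A}_{[0]}\,\big\rangle \;=\; \big\langle\, L\otimes M ,\ \mathcal{A}_{[1]}\otimes\mathcal{A}_{[0]}\otimes\mathcal{A}_{[0]}\,\big\rangle
\]
in the appropriate index pairing, for all quantum operations $\mathcal{A}_{[0]}$ and $\mathcal{A}_{[1]}$. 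Since the Choi matrices of Hermiticity-preserving, trace-non-increasing, completely positive maps span the full matrix space (the CP cone has nonempty interior), ``for all operations'' can be upgraded to ``for all matrices'' in each slot.

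The last step is standard polarization. As $\mathcal{A}_{[1]}$ enters linearly on both sides, equality for all $\mathcal{A}_{[1]}$ is equivalent to equality after contracting $\mathcal{A}_{[1]}$ away; as $\mathcal{A}_{[0]}$ then enters through a quadratic form $X\mapsto\langle\,\cdot\,,X\otimes X\rangle$, equality for all $X$ is equivalent to equality of the symmetrizations over the two $\mathcal{A}_{[0]}$-slots. In the seven-column basis of Eq.~(\ref{CE_matrix0}) --- with $N$ in columns $\{3,2'\}$ and $T$ in $\{2,1',1,0',0\}$ on the left, and $L$ in $\{3,2',2,1'\}$ and $M$ in $\{1,0',0\}$ on the right --- the two $\mathcal{A}_{[0]}$-slots of $N\otimes T$ occupy the column pairs $\{3,2'\}$ and $\{0',0\}$, so their symmetrizer is $\mathcal{I}+\mathcal{S}_{30'}\circ\mathcal{S}_{2'0}$; those of $L\otimes M$ occupy $\{2,1'\}$ and $\{0',0\}$, giving $\mathcal{I}+\mathcal{S}_{20'}\circ\mathcal{S}_{1'0}$; and the residual relabeling $\mathcal{S}_{32}\circ\mathcal{S}_{2'1'}$ aligns the output leg and the $\mathcal{A}_{[1]}$-output leg of the $L\otimes M$ convention with those of the $N\otimes T$ convention. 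Equating the two symmetrized tensors yields exactly Eq.~(\ref{CE_matrix0}); since every step is an equivalence, the iff follows.

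I expect the main obstacle to be the index bookkeeping of this last step --- checking that the precise SWAP composition quoted in Eq.~(\ref{CE_matrix0}) is indeed ``symmetrize the two $\mathcal{A}_{[0]}$ legs on each side, then relabel to a common convention,'' keeping the primed/unprimed (output/input) pairings straight. The polarization and spanning arguments are routine; the genuine content sits in the substitution $\rho^{S}_{[1]}\to\mathcal{M}[\mathcal{A}_{[0]}]$ of the first paragraph, which trades the unmeasurable intermediate state for the measurable single-step process tensor.
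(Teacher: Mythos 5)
Your proposal follows essentially the same route as the paper's proof in Appendix~\ref{AppSECEprop1}: substitute $\rho^{S}_{[1]}=\mathcal{M}[\mathcal{A}_{[0]}]$ and ${\rm tr}(\rho^{S}_{[1]})=\langle N,\mathcal{A}_{[0]}\rangle$ into Eq.~(\ref{Lemma1eq}), strip off $\mathcal{A}_{[1]}$ by linearity (the paper phrases this as process tomography), and handle the two identical $\mathcal{A}_{[0]}$ slots by polarization --- the paper's version is the decomposition $\mathcal{A}_{[0]}=\alpha\mathcal{A}+\beta\mathcal{A}'$ with extraction of the $\alpha\beta$ cross terms, which is exactly your symmetrization over the two slots, yielding the SWAP-symmetrized matrix identity~(\ref{CE_matrix0}). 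The argument is correct and the only remaining work is the index bookkeeping you already flag.
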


\begin{proof}
See Appendix \ref{AppSECEprop1}.
\end{proof}

\begin{Corollary}
\label{SECEco}
A two-time-step process has no SECE if and only if 
\begin{equation}
\label{P1Markov}
T=\tilde{T} \otimes \rho^{S}_{[0]}
\end{equation}
and 
\begin{equation}
\label{CE_matrix}
(\mathcal{I} + \mathcal{S}_{0'0})[\tilde{T} \otimes I ] = (\mathcal{I} + \mathcal{S}_{0'0}) \circ \mathcal{S}_{10'} [\tilde{L} \otimes \tilde{M}],
\end{equation}
where $\mathcal{I}$ is the identy map,
$\tilde{T} = {\rm tr}_{S_{0}}T$, $\tilde{M} = {\rm tr}_{S_{0}}M$, $\tilde{L} = {\rm tr}_{S_{0}}L$, and
the SWAP operaiton $\mathcal{S}_{mn}$ is defined in Corollary \ref{SECEprop1}.
\end{Corollary}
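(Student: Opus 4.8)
The plan is to reduce the two-time-step ``no SECE'' property to its two single-step ingredients and then translate each into process-tensor language. By Definition~\ref{no-SECEdef} specialized to $m=2$, the process has no SECE if and only if Eq.~(\ref{no-SECEeq}) holds both for $n=1$ and for $n=2$. I would deal with the $n=1$ case first: it is exactly the no-SECE condition of the first time step, which --- as recalled at the start of this section --- is equivalent to the initial system--environment state behaving as a factorized one, hence by Eq.~(\ref{eq41}) to the product form $T=\tilde T\otimes\rho^{S}_{[0]}$, i.e.\ Eq.~(\ref{P1Markov}). The $n=2$ case is the no-SECE condition of the second time step, which by Corollary~\ref{SECEprop1} is equivalent to Eq.~(\ref{CE_matrix0}) (equivalently, by Lemma~\ref{Lemma1}, to Eq.~(\ref{Lemma1eq})). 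Hence the remaining task --- and the actual content of the corollary --- is to show that, \emph{assuming} Eq.~(\ref{P1Markov}), Eq.~(\ref{CE_matrix0}) and Eq.~(\ref{CE_matrix}) are equivalent.

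For the forward direction I would substitute $T=\tilde T\otimes\rho^{S}_{[0]}$ into the definitions $M=\tfrac{1}{n_{d}}{\rm tr}_{S_{2}S_{1'}}T$, $L={\rm tr}_{S_{1}}T$ and $N={\rm tr}_{S_{1}}M$ of Eqs.~(\ref{eq:L})--(\ref{eq:N}). Since $\rho^{S}_{[0]}$ occupies the $S_{0}$ slot and has unit trace, this immediately gives $M=\tilde M\otimes\rho^{S}_{[0]}$ and $L=\tilde L\otimes\rho^{S}_{[0]}$ with $\tilde M={\rm tr}_{S_{0}}M$, $\tilde L={\rm tr}_{S_{0}}L$ as in the statement; moreover, invoking the trace-preservation (causality) constraint obeyed by the process tensor --- tracing out a final output slot leaves an identity on the corresponding final input slot --- one finds that $N$ collapses to $I^{S_{0'}}\otimes\rho^{S}_{[0]}$, which is where the factor $I$ in the term $\tilde T\otimes I$ of Eq.~(\ref{CE_matrix}) originates. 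With these substitutions, both sides of Eq.~(\ref{CE_matrix0}) carry exactly two copies of $\rho^{S}_{[0]}$, sitting --- after the SWAP maps act --- in the two slots labelled $0$ and $0'$ on each side, and the composite SWAP maps act on this pair of slots identically on the two sides. Therefore these factors may be stripped off, equivalently one takes the partial trace over $S_{0}$, without changing the content of the equation; relabelling the surviving slots then turns the seven-slot identity Eq.~(\ref{CE_matrix0}) into the five-slot identity Eq.~(\ref{CE_matrix}), with the nested SWAP structure $\mathcal{S}_{30'}\circ\mathcal{S}_{2'0}$, $\mathcal{S}_{32}\circ\mathcal{S}_{2'1'}$, $\mathcal{S}_{20'}\circ\mathcal{S}_{1'0}$ of Eq.~(\ref{CE_matrix0}) collapsing to the simpler $\mathcal{S}_{0'0}$ and $\mathcal{S}_{10'}$.

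For the converse I would run the same chain backwards: starting from Eq.~(\ref{P1Markov}) and Eq.~(\ref{CE_matrix}), tensoring the two $\rho^{S}_{[0]}$ factors back in and using the trace-preservation identity once more recovers Eq.~(\ref{CE_matrix0}); together with Eq.~(\ref{P1Markov}), which is precisely the $n=1$ no-SECE condition, this yields no SECE by Definition~\ref{no-SECEdef}.

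I expect the main obstacle to be the slot bookkeeping in the middle step: one has to pin down exactly which tensor factor of $N\otimes T$ and of $L\otimes M$ each of $\rho^{S}_{[0]}$, $I$, $\tilde T$, $\tilde L$, $\tilde M$ occupies, follow how the two sequences of SWAP maps permute them, and verify that deleting the two redundant $\rho^{S}_{[0]}$'s is a genuine equivalence rather than a one-way implication --- that is what makes the ``$\Leftarrow$'' direction as clean as ``$\Rightarrow$''. Everything else (the inheritance of the product form by $M$ and $L$, and the process-tensor trace-preservation identity that collapses $N$ to $I^{S_{0'}}\otimes\rho^{S}_{[0]}$) is routine partial-trace algebra, which I would carry out in an appendix paralleling the proof of Corollary~\ref{SECEprop1}.
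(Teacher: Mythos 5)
Your proposal is correct and follows essentially the same route as the paper's Appendix~\ref{AppSECEcoro1}: split no-SECE into the $n=1$ condition, identified with the product form Eq.~(\ref{P1Markov}), and the $n=2$ condition of Corollary~\ref{SECEprop1}; use the trace-preservation identity to collapse $N$ to $I^{S_{0'}}\otimes\rho^{S}_{[0]}$ [the paper's Eq.~(\ref{eq:Ntr})]; and cancel the two redundant $\rho^{S}_{[0]}$ factors to pass to Eq.~(\ref{CE_matrix}). The only (cosmetic) difference is that the paper performs the cancellation one step earlier, substituting into Eq.~(\ref{unremove}) while the arbitrary operations $\mathcal{A}_{[0]}$ are still contracted in and then removing the resulting arbitrary states $\rho,\rho'$ by a spanning argument, whereas you strip the fixed factor $\rho^{S}_{[0]}\otimes\rho^{S}_{[0]}$ from Eq.~(\ref{CE_matrix0}) directly --- which is indeed an equivalence, since after substitution those two index slots occur nowhere else in the remaining tensors and $\rho^{S}_{[0]}$ has at least one nonvanishing entry.
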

\begin{proof}
See Appendix \ref{AppSECEcoro1}.
\end{proof}
Note that the map $(\mathcal{I} + \mathcal{S}_{0'0})$ in Eq.~(\ref{CE_matrix}) is non-invertible, and thus it can not be removed from the both sides of the equation.
As on can obtain process tensor  $T$ experimentally, Eq.~(\ref{P1Markov}) and Eq.~(\ref{CE_matrix}) in Corollary \ref{SECEco} give the operational description of the no SECE since $\tilde{T}$, $\tilde{M}$ and $\tilde{L}$ can be calculated form the experimentally obtained $T$.

\subsection{Distance measure for SECE}
As before, one can define a measure of SECE, $D_{\rm SECE}$, as
\begin{equation}
\label{minDSECE}
 D_{\rm SECE}=\min_{ T_{\rm no} \in    \{ T_{\rm no-SECE} \}}  \mathcal{D}(T,T_{\rm no}).
\end{equation}
However, 
the constrain $T_{\rm no} \in    \{ T_{\rm no-SECE} \}$  given by Eq.~(\ref{CE_matrix0})
is hard to calculate, so we do not give an algorithm for Eq.~(\ref{minDSECE}) here.
A simple but not so accurate way for SECE is to use the 
distance difference between the two sides of Eq.~(\ref{CE_matrix}).

\section{Solely SECE for a  two-time-step quantum process} 
From the geometry point of view of Fig \ref{set},
we can define the distance of the solely SECE. 
We describe below the distance measure for solely SECE, and an optimization algorithm for calculating the distance measure.

\subsection{Distance measure  for Solely SECE}
\begin{mydef} 
 The distance of the solely SECE of a quantum process T is defined by ${\mathcal{D}(T^{{\rm min}}_{{\rm Markov}},T^{{\rm min}}_{{\rm no-IBTRES}})}$. Here,  $\mathcal{D}$ is some distance measure, $T^{{\rm min}}_{{\rm Markov}}$ and $T^{{\rm min}}_{{\rm no-IBTRES}}$ are the process tensors in the Markovian process set and the no-IBTRES process set, respectively (see Fig.~\ref{set}), which have the minimum distance with the process tensor $T$.
\end{mydef}
 If  $T^{{\rm min}}_{{\rm Markov}}$ and $T^{{\rm min}}_{{\rm no-IBTRES}}$ are not unique, one should take the minimum value between them: 
\begin{gather}
\label{DefCE}
  D_{{\rm solely-SECE}} 
	=\min_{\substack{T_{M}\in \left\{T^{{\rm min}}_{{\rm Markov}}\right\} \\
	T_{{\rm no-IB}}\in \left\{T^{{\rm min}}_{{\rm no-IBTRES}}\right\}}} \mathcal{D}(T_{M},T_{{\rm no-IB}}).
\end{gather}

\begin{Corollary}
\label{coro2}
A Markovian process has no solely-SECE.
\end{Corollary}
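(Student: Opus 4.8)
The plan is to leverage the characterization established in Proposition~\ref{mainProp}: a process is Markovian if and only if it has both no SECE and no IBTRES. The claim that a Markovian process has no solely-SECE should therefore follow almost immediately from the geometry of Fig.~\ref{set} together with the definition in Eq.~(\ref{DefCE}). First I would observe that if the process tensor $T$ is itself Markovian, then $T$ lies in the intersection $\{T_{\rm Markov}\} = \{T_{\rm no\text{-}SECE}\} \cap \{T_{\rm no\text{-}IBTRES}\}$. Consequently the minimum distance from $T$ to the Markovian set is zero, attained at $T$ itself, so $T \in \{T^{\rm min}_{\rm Markov}\}$. Likewise, since $T$ is also in the no-IBTRES set, the minimum distance from $T$ to $\{T_{\rm no\text{-}IBTRES}\}$ is zero, attained at $T$, so $T \in \{T^{\rm min}_{\rm no\text{-}IBTRES}\}$.

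The key step is then to feed these two facts into Eq.~(\ref{DefCE}): since $D_{\rm solely\text{-}SECE}$ is defined as the minimum of $\mathcal{D}(T_M, T_{\rm no\text{-}IB})$ over $T_M \in \{T^{\rm min}_{\rm Markov}\}$ and $T_{\rm no\text{-}IB} \in \{T^{\rm min}_{\rm no\text{-}IBTRES}\}$, and both of those sets contain the common element $T$, we may take $T_M = T_{\rm no\text{-}IB} = T$, yielding
\begin{equation}
D_{\rm solely\text{-}SECE} \leq \mathcal{D}(T,T) = 0 .
\end{equation}
Since any reasonable distance measure $\mathcal{D}$ is nonnegative, this forces $D_{\rm solely\text{-}SECE} = 0$, i.e.\ the Markovian process has no solely-SECE. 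I would want to remark explicitly that this uses only the defining property $\mathcal{D}(x,x)=0$ and nonnegativity of $\mathcal{D}$, so it holds for any of the distance measures considered (trace distance, relative entropy, etc.).

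The only real subtlety — and the step I expect to need the most care — is the handling of non-uniqueness of the minimizers, but this is precisely why Eq.~(\ref{DefCE}) takes a minimum over the full sets $\{T^{\rm min}_{\rm Markov}\}$ and $\{T^{\rm min}_{\rm no\text{-}IBTRES}\}$ rather than picking representatives. As long as I have verified that $T$ belongs to \emph{both} of these minimizer sets (which follows from $T$ being Markovian, hence simultaneously in the no-SECE and no-IBTRES sets, so that the distances to those two sets are each zero and realized at $T$), the diagonal choice $T_M = T_{\rm no\text{-}IB} = T$ is admissible in the minimization and the bound above goes through. No heavy computation is involved; the argument is essentially a one-line geometric observation once Proposition~\ref{mainProp} is in hand.
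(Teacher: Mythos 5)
Your proposal is correct and follows essentially the same route as the paper: both rest on Proposition~\ref{mainProp} to conclude that a Markovian $T$ has zero distance to the Markovian set and to the no-IBTRES set, and then deduce $D_{\rm solely\text{-}SECE}=0$ from Eq.~(\ref{DefCE}). The only cosmetic difference is that you exhibit $T$ itself as a common element of both minimizer sets and take the diagonal pair $T_M=T_{\rm no\text{-}IB}=T$, whereas the paper reaches the same conclusion via the triangle inequality $\mathcal{D}(T^{\rm min}_{\rm Markov},T^{\rm min}_{\rm no\text{-}IBTRES})\leq \mathcal{D}(T,T^{\rm min}_{\rm Markov})+\mathcal{D}(T,T^{\rm min}_{\rm no\text{-}IBTRES})=0$; these are interchangeable one-line arguments.
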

\begin{proof}
From Proposition \ref{mainProp}, a Markovian process has no IBTRES. Thus the statement in Corollary \ref{coro2} is obvious 
because the set of solely-SECE is a subset of no IBTRES.
Alternatively, one can argue it by the triangle inequality.
Let $T$ be the process tensor in matrix form of the process we consider.
The non-Markovianity, no IBTRES  and solely SECE can be characterized by $\mathcal{D}(T,T^{{\rm min}}_{{\rm Markov}})$, $\mathcal{D}(T,T^{{\rm min}}_{{\rm no-IBTRES}})$ and $\mathcal{D}(T^{{\rm min}}_{{\rm Markov}},T^{{\rm min}}_{{\rm no-IBTRES}})$, respectively, which satisfy
the triangle inequality: 
\begin{eqnarray}
\label{eq65}
		&&| \mathcal{D}(T,T^{{\rm min}}_{{\rm Markov}})-\mathcal{D}(T,T^{{\rm min}}_{{\rm no-IBTRES}})| \nonumber\\
		&\leq& \mathcal{D}(T^{{\rm min}}_{{\rm Markov}},T^{{\rm min}}_{{\rm no-IBTRES}}) \nonumber\\
		&\leq& | \mathcal{D}(T,T^{{\rm min}}_{{\rm Markov}})+\mathcal{D}(T,T^{{\rm min}}_{{\rm no-IBTRES}})|.
\end{eqnarray}
Because $T$ is Markovian, $\mathcal{D}(T,T^{{\rm min}}_{{\rm Markov}})=0$. 
Furthermore, a Markovian process has no IBTRES, so  $\mathcal{D}(T,T_{{\rm no-IBTRES}})=0$. 
As a result, one obtains form Eq.~(\ref{eq65}), $\mathcal{D}(T^{{\rm min}}_{{\rm Markov}},T^{{\rm min}}_{{\rm no-IBTRES}})=0$.
\end{proof}
The triangle inequality is useful for processes with $D_{{\rm no-IBTRES}} \approx 0$. in this case, the solely SECE can be found by $D_{{\rm solely-SECE}} \approx  D_{{\rm NM}}$. 

\subsection{Optimization Algorithm for finding $D_{\rm solely-SECE}$}
The measure for the degrees of solely SECE is defined by Eq.~(\ref{DefCE}). A naive method is using all possible process tensers $T_{M} \in  \left\{ T^{min}_{\rm Markov}  \right\}$  and $T_{\rm no-IB} \in  \left\{ T^{min}_{\rm no-IBTRES}  \right\}$ to find the minimum distance  by calculating the distance between every pair of the process tensors in these two sets. However, one can make the optimization easier by splitting Eq.(\ref{DefCE}) into 
\begin{gather}
\min_{\substack{T_{M}\in \left\{T^{{\rm min}}_{{\rm Markov}}\right\} \\
	}} \mathcal{D}(T_{M},T_{{\rm no-IB}}),
\end{gather}
and 
\begin{gather}
\label{DefCE11}
\min_{\substack{ \\
	T_{{\rm no-IB}}\in \left\{T^{{\rm min}}_{{\rm no-IBTRES}}\right\}}} \mathcal{D}(T_{M},T_{{\rm no-IB}}).
  \end{gather}
  If the initial system-environment state is uncorrelated, $\rho^{SE}_{[0]}=\rho^{S}_{[0]}\otimes\rho^{E}_{[0]}$ and the distance measure $\mathcal{D}$ is not only a convex function but also contractive under TPCP maps, then the convex optimization algorithm can be implement to Eq.~(\ref{DefCE11}).
By using an equation similar to Eq.~(\ref{eq55}) with $T \rightarrow T_{M}$, Eq.~(\ref{DefCE11}) becomes
\begin{gather}
\min_{\substack{
	\tilde{T}_{{\rm no-IB}}\otimes \rho^{S}_{[0]}\in \left\{T^{{\rm min}}_{{\rm no-IBTRES}}\right\}}}
	\mathcal{D}(\tilde{T}_{M}\otimes \rho^{S}_{[0]},\tilde{T}_{{\rm no-IB}}\otimes \rho^{S}_{[0]}),
	\label{eq70}
\end{gather}
where $\tilde{T}_{M}$ and $\tilde{T}_{\rm no-IB}$ are the reduced process tensors in matrix form. 
The minimization constraint $T_{\rm no-IB} \in \{T^{{\rm min}}_{{\rm no-IBTRES}}\}$  in Eq.~(\ref{eq70}) can be regarded equivalently as
$T \in \{T^{{\rm}}_{{\rm no-IBTRES}}\}$ and $\mathcal{D}(T,T_{\rm no-IB})=D_{\rm IBTRES}$.
Moreover, one can replace $\mathcal{D}(T,T_{\rm no-IB})=D_{\rm IBTRES}$ with 
$\mathcal{D}(T,T_{\rm no-IB}) \leq D_{\rm IBTRES}$ as the minimum value is just $D_{\rm IBTRES}$. 
This inequality relation $\mathcal{D}(T,T_{\rm no-IB}) \leq D_{\rm IBTRES}$ , however, 
makes the minimization constrain formally convex.
Therefor, Eq.~(\ref{eq70}) can be written as
\begin{gather}
		\min_{\substack{
	\tilde{T}_{{\rm no-IB}}\otimes \rho^{S}_{[0]}\in \left\{T^{{\rm }}_{{\rm no-IBTRES}}\right\} \\
	\mathcal{D}(T,\tilde{T}_{{\rm no-IB}}\otimes \rho^{S}_{[0]})\leq D_{\rm IBTRES} }} \mathcal{D}(\tilde{T}_{M}\otimes \rho^{S}_{[0]},\tilde{T}_{{\rm no-IB}}\otimes \rho^{S}_{[0]}),
	\label{eq71}
      \end{gather}
and one can perform the minimization using the convex optimization algorithm.       
Similar to the arguments for obtaining Eq.~(\ref{OptIBfac}) from Eq.~(\ref{OptIBfac0}), one can obtain from Eq.~(\ref{eq71}) the following equation:
\begin{gather}
	\min_{\substack{
	{\rm tr}_{S_{1}}\tilde{T} =T'^{S_{2}S_{1'}}\otimes I^{S_{0'}} \\
	\mathcal{D}(T,\tilde{T}_{{\rm no-IB}}\otimes \rho^{S}_{[0]})\leq D_{\rm IBTRES} }} \mathcal{D}(\tilde{T}_{M}\otimes \rho^{S}_{[0]},\tilde{T}_{{\rm no-IB}}\otimes \rho^{S}_{[0]}).
\end{gather}

The algorithm for obtaining the $D_{\rm solely-SECE}$ is then given by following steps.
\begin{steps}
  \item Find $\{T^{{\rm min}}_{{\rm Markov}}\}$ using the optimization algorithm for non-Markoviantiy $D_{\rm NM}$.
	\item Find $D_{\rm IBTRES}$ using the optimization algorithm for $D_{\rm IBTRES}$.
	\item Solve Eq.~(\ref{eq71}) using the convex optimization algorithm for every $T_{M} \in \{T^{{\rm min}}_{{\rm Markov}}\}$ found in Step 1, and record the result.
	\item Take the minimum value of the optimized distance values recorded in Step 3.
\end{steps}
Note that in this algorithm, $D_{\rm  NM}$, $D_{\rm  IBTRES}$ and $D_{\rm solely-SECE}$ are obtained by Step 1, Step 2 and Step 4, respectively.

\section{Difference between IBTRES  and SECE}
\label{difference}
There is a distinct difference between IBTRES and SECE.
IBTRES induces non-Markovian effects by sending the information of quantum operations $\mathcal{A}_{[0]}$ (for the case of an initial factorized system-environment state, it is the initial system state $\mathcal{A}_{[0]}[\rho^{S}_{[0]}]$ information ) to final state. However, SECE can induce non-Markovian effects without sending any previous system information.

\begin{prop}
SECE can induce non-Markovian effects without sending any information of $\mathcal{A}_{[0]}$  .
\label{prop:CorrNoInf}
\end{prop}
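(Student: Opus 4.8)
The plan is to prove the proposition by exhibiting an explicit two-time-step process that lies in the set \{no-IBTRES\} yet is non-Markovian. By Proposition~\ref{mainProp}, a process is Markovian iff it has no IBTRES \emph{and} no SECE, so such a process necessarily has SECE; and since it has no IBTRES, no information about the operation $\mathcal{A}_{[0]}$ ever reaches the future through the reduced environment state $\rho^{E}_{[1]}$. The memory effect that makes it non-Markovian (it sits in \{no-IBTRES\} $-$ \{Markovian\}) is therefore produced by SECE alone — which is exactly the assertion — so it suffices to construct one such process.

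Concretely, I would take the system and the environment to be single qubits, start from the maximally entangled state $\rho^{SE}_{[0]}=|\Phi^{+}\rangle\langle\Phi^{+}|$ with $|\Phi^{+}\rangle=(|00\rangle+|11\rangle)/\sqrt{2}$, let the first evolution $\mathcal{U}_{[0]}$ be one that moves no information from the system into the environment (the identity is the simplest choice, though any $\mathcal{U}^{S}_{[0]}\otimes\mathcal{U}^{E}_{[0]}$ works), and let the second evolution $\mathcal{U}_{[1]}$ be an entangling gate, say a CNOT with the system as control. Three checks then complete the argument. First, since $\mathcal{A}_{[0]}$ is trace preserving and $\mathcal{U}_{[0]}$ does not couple the two parties, ${\rm tr}_{S}\,\mathcal{U}_{[0]}\circ\mathcal{A}_{[0]}[\rho^{SE}_{[0]}]=\rho^{E}_{[0]}=I/2$ for \emph{every} $\mathcal{A}_{[0]}$, so the map $\mathcal{L}_{[1]}$ of Eq.~(\ref{Leq}) is independent of $\mathcal{A}_{[0]}$ and the process has no IBTRES by Definition~\ref{no-IBTRESdef}. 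Second, fixing $\mathcal{A}_{[0]}$ to the identity gives $\rho^{SE}_{[1]}=|\Phi^{+}\rangle\langle\Phi^{+}|$ and $\rho^{S}_{[1]}=I/2$; both the identity operation and the Pauli-$Z$ operation $\rho\mapsto Z\rho Z$ leave $\rho^{S}_{[1]}=I/2$ invariant, yet $\mathcal{A}_{[1]}\otimes\mathcal{I}_{E}$ leaves $\rho^{SE}_{[1]}$ unchanged in the first case and maps it to $|\Phi^{-}\rangle\langle\Phi^{-}|$, $|\Phi^{-}\rangle=(|00\rangle-|11\rangle)/\sqrt{2}$, in the second; the CNOT then produces $|+\rangle_{S}|0\rangle_{E}$ and $|-\rangle_{S}|0\rangle_{E}$, so that $\rho^{S}_{[2]}=|+\rangle\langle+|$ versus $|-\rangle\langle-|$ for the same present state $\mathcal{A}_{[1]}[\rho^{S}_{[1]}]=I/2$; by Definition~\ref{MarkovDef} the process is non-Markovian. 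Third, combining these two facts with Proposition~\ref{mainProp} forces SECE to be present; one can also verify it directly, since the two sides of Eq.~(\ref{no-SECEeq}) for $n=2$ evaluate to $|+\rangle\langle+|$ and ${\rm tr}_{E}\,\mathcal{U}_{[1]}[\rho^{S}_{[1]}\otimes\rho^{E}_{[1]}]=I/2$.

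The hard part is not any single computation but arranging $\mathcal{U}_{[0]}$, $\mathcal{U}_{[1]}$ and $\rho^{SE}_{[0]}$ so that \emph{both} requirements hold at once: the reduced environment must stay oblivious to $\mathcal{A}_{[0]}$ (this is what kills IBTRES), while the residual system-environment correlation carried by $\rho^{SE}_{[1]}$ must be rich enough for $\mathcal{U}_{[1]}$ to fold it back into the system and create a visible departure from the divisible form of Eq.~(\ref{MarkovForm}). Choosing a $\mathcal{U}_{[0]}$ that does not transfer system information to the environment is a cheap and robust way to secure the first requirement, leaving only the second to be engineered. An alternative, more abstract route would argue from Corollary~\ref{local effect}: no IBTRES constrains only ${\rm tr}_{S_{1}}(T)$ to a product form, whereas Markovianity demands the whole process tensor $T$ be a product, so any realizable $T$ whose correlations live entirely in the $S_{1}$ sector furnishes an example; but one would still have to certify that such a $T$ is physically realizable, and the explicit qubit construction does precisely that in one stroke.
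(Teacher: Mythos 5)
Your overall strategy---exhibit an explicit two-time-step process that is non-Markovian yet free of IBTRES, and invoke Proposition~\ref{mainProp} to attribute the memory effect to SECE alone---is the same as the paper's, which also proceeds by constructing a single valid example. However, your concrete example does not actually lie in the no-IBTRES set, and the gap is in your first check. There you verify that $\rho^{E}_{[1]}$ is independent of $\mathcal{A}_{[0]}$ only for \emph{trace-preserving} $\mathcal{A}_{[0]}$, whereas Definition~\ref{no-IBTRESdef} demands that $\mathcal{L}_{[1]}$ of Eq.~(\ref{Leq}) be independent of \emph{all} quantum operations, which the paper defines to be trace non-increasing (selective operations are explicitly in scope, cf.\ the causal-break discussion and the normalization ${\rm tr}(\rho^{E}_{[n]})$ in Eq.~(\ref{Leq})). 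With $\rho^{SE}_{[0]}=|\Phi^{+}\rangle\langle\Phi^{+}|$ and $\mathcal{U}_{[0]}=\mathcal{I}$, the selective operation $\mathcal{A}_{[0]}[\rho]=|+\rangle\langle+|\,\rho\,|+\rangle\langle+|$ steers the normalized environment marginal to $|+\rangle\langle+|$ rather than $I/2$; feeding $|+\rangle\langle+|$ versus $I/2$ into your CNOT $\mathcal{U}_{[1]}$ yields $\mathcal{L}_{[1]}=\mathcal{I}$ versus the completely dephasing channel. Hence $\mathcal{L}_{[1]}$ depends on $\mathcal{A}_{[0]}$ and the process \emph{has} IBTRES (equivalently, because $\rho^{SE}_{[0]}$ is entangled, ${\rm tr}_{S_{1}}(T)$ fails the product form required by Corollary~\ref{local effect}). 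Your second and third checks are correct---the process is non-Markovian and has SECE---but since it also has IBTRES you cannot conclude that the non-Markovian effect arises ``without sending any information of $\mathcal{A}_{[0]}$,'' which is precisely the content of Proposition~\ref{prop:CorrNoInf}.

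The paper closes exactly this loophole by a different construction: it dilates the environment to $E_{1}\otimes E_{2}$ and chooses $\mathcal{U}_{[0]}$ so that ${\rm tr}_{E_{2}}\,\mathcal{U}_{[0]}[\rho^{S}\otimes|0\rangle_{E_{1}}\langle 0|\otimes|0\rangle_{E_{2}}\langle 0|]=\Lambda_{\rho^{SE_{1}}_{\rm const}}[\rho^{S}]$ is a \emph{constant map}. Then the entire joint state at the intermediate time---not merely the environment marginal---is independent of $\mathcal{A}_{[0]}$, including for selective operations (the trace factor cancels upon normalization), so literally no information about $\mathcal{A}_{[0]}$ survives to the second step; yet $\rho^{SE_{1}}_{\rm const}$ may be chosen correlated, and the second step $\mathcal{U}_{[1]}$ then produces a non-Markovian effect due solely to SECE. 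If you wish to keep your explicit qubit flavor, you can repair your example by replacing the first step with such a constant map (e.g., a SWAP of $SE_{1}$ with an ancilla pair inside $E_{2}$ prepared in an entangled state), retaining your CNOT and the $\mathcal{I}$-versus-$Z$ test for the second step.
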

\begin{proof}
Here,  we prove the statement by constructing a valid example (see Fig.~\ref{CorrNoInf0}). Let the initial total state be a tripartite state $\rho^{S}\otimes\ket{0}_{E_{1}}\bra{0}\otimes\ket{0}_{E_{2}}\bra{0}$, where, $\rho^{S}=\mathcal{A}_{[0]}[\rho^{S}_{[0]}]$ is an arbitrary system state that the experimenter produces initially, and  $\ket{0}_{E_{1}}\bra{0}\otimes\ket{0}_{E_{2}}\bra{0}$ is the bipartite environment state. Let $\mathcal{U}_{[0]}$ be the total unitary applied on  the total state and let
\begin{equation}
{\rm tr}_{E_{2}}\mathcal{U}_{[0]}[\rho^{S}\otimes\ket{0}_{E_{1}}\bra{0}\otimes\ket{0}_{E_{2}}\bra{0}] \equiv \Lambda_{\rho^{SE_{1}}_{\rm const}}[\rho^{S}] = \rho^{SE_{1}}_{\rm const},
\end{equation}
where $\Lambda_{\rho^{SE_{1}}_{\rm const}}$ is a constant map which maps an arbitrary system state to a fixed biparties system-environment state $\rho^{SE_{1}}_{\rm const}$. The  state $\rho^{SE_{1}}_{\rm const}$ takes no information about the initial state $\rho^{S}$ because the constant map $\Lambda_{\rho^{SE_{1}}_{\rm const}}$ erase it. In a real experiment, this map can be produced by the long time evolution of a dissipative system. Next, the opeartion $A_{[1]}$ is applied on the state $\rho^{SE_{1}}_{\rm const}$, and after the bipartie unitary $\mathcal{U}_{[1]}$, one obtains the final state $\rho^{S}_{[2]}$ as 

\begin{equation}
\label{corNoInfo}
{\rm tr}_{E_{1}}\mathcal{U}_{[1]}A_{[1]}[\rho^{SE_{1}}_{\rm const}]  = \rho^{S}_{[2]}
\end{equation}
Note that Eq.~(\ref{corNoInfo}) can be non-Markovian if SECE exists. This process has no IBTRES because the constant map erase it, so the non-Markovian effect is produced only by SECE. In other words, this non-Markovian process has only SECE but takes no the initial state information to the final state.

\end{proof}

\begin{figure}
\includegraphics[width=0.9\columnwidth]{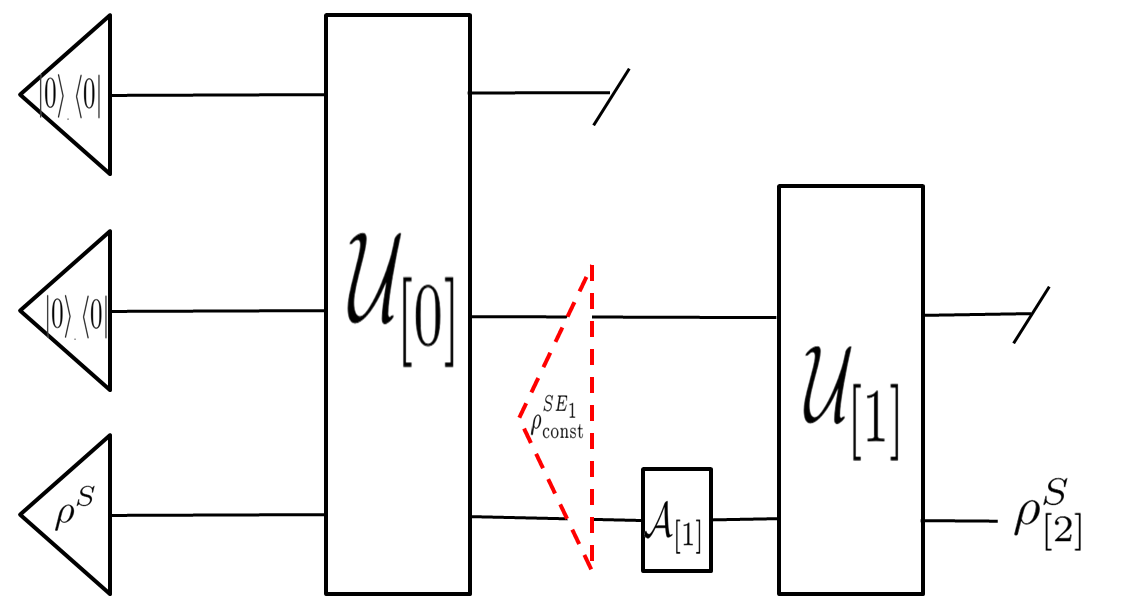}
\caption{Schematic illustration of an example for Proposition~\ref{prop:CorrNoInf}. }
\label{CorrNoInf0}
\end{figure}

\section{Conclusion}
We have characterized a Markovian process of an open quantum system as a quantum
process that has no IBTRES and no SECE.
In addition to showing that  
a process is Markovian if and only if it has no IBTRES and no SECE, and 
that the process tensor of a  two-time-step
Markovian process must be in a product form of Eq.~(\ref{MarkovForm}),
we go one step further than \cite{KModi,KModi2}
to present how to construct the maps $\mathcal{T}_{[0]}$ and $\mathcal{T}_{[1]}$ [see Eq.~(\ref{T0eq}), Eq.~(\ref{T1eq}) and Eq.~(\ref{Leq})].
We have also provided the operational definitions of no IBTRES and no SECE and derived necessary and sufficient (iff) conditions of no IBTRES and no SECE in terms of process tenser representation for a two-time-step process. We have shown moreover that SECE can produce non-Markovian effect without carrying any initial system state information but IBTRES can not.
The distance measures and algorithms for calculating non-Markovianity, IBTRES and solely-SECE have also been explicitly presented.
We will present the results of applying the qualitative definitions and quantitative distance measures discussed here to various open quantum bit systems or spin-boson models in a separate paper. 
Our study will help experimentalists in characterizing Markovian and non-Markovian processes and noises for building near-term quantum technologies. 

\begin{acknowledgments}
Y.Y.H. and H.S.G. thank K. Modi for useful discussions in the early stage of this work. H.S.G. acknowledges support from the the Ministry of Science and Technology
of Taiwan under Grants No.~MOST 106-2112-M-002-013-MY3, No.~MOST
107-2622-8-002-018 and No.~MOST 107-2627-E-002-002, and from the National
Taiwan University under Grants No.~NTU-CC-107L892902 and No.~NTU-CC-108L893202.
Z.Y.S. and H.S.G. acknowledge support from the thematic group program of the National Center for Theoretical
Sciences, Taiwan.
\end{acknowledgments}

\begin{appendix}

\section{Trace-preserving map}
\label{HPTP}
A map $\mathcal{A}$ defined in Eq.~(\ref{eq3}) is trace-preserving if and only if  $\sum_{i_{1}}A^{i_{0}j_{0}}_{i_{1}i_{1}}=\delta_{i_{0}j_{0}}$,
or equivalently,
\begin{equation}
{\rm tr}_{\rm out}(A) = I_{\rm in},
\end{equation}
where ${\rm tr}_{\rm out}$ denotes a trace over the output Hilbert space of the map $\mathcal{A}$ and $I_{\rm in}$ is the identity matrix on the input space.
\begin{proof}
By omitting ``in'' and ``out'' subscripts, one can write Eq.~(\ref{eq3}) as
\begin{equation}
\mathcal{A}[\ket{i_{0}}\bra{j_{0}}]=\sum_{i_{1}j_{1}}A^{i_{0}j_{0}}_{i_{1}j_{1}}\ket{i_{1}}\bra{j_{1}}.
\end{equation}
Letting $\rho=\sum_{i_{0}j_{0}}\rho_{i_{0}j_{0}}\ket{i_{0}}\bra{j_{0}}$, one has
\begin{equation}
\begin{aligned}
\mathcal{A}[\sum_{i_{0}j_{0}}\rho_{i_{0}j_{0}}\ket{i_{0}}\bra{j_{0}}] &=\sum_{i_{0}j_{0}}\rho_{i_{0}j_{0}}\mathcal{A}[\ket{i_{0}}\bra{j_{0}}] \\
&=\sum_{i_{0}j_{0}}\rho_{i_{0}j_{0}}\sum_{i_{1}j_{1}}A^{i_{0}j_{0}}_{i_{1}j_{1}}\ket{i_{1}}\bra{j_{1}}.  
\end{aligned} 
\end{equation}
For $i_{0}=j_{0}$, one can, without loss of generality, set $\rho=\ket{0}\bra{0}$ such that
$\mathcal{A}[\ket{0}\bra{0}]=\sum_{i_{1}j_{1}}A^{00}_{i_{1}j_{1}}\ket{i_{1}}\bra{j_{1}}$. 
Applying the trace preserving condition, one obtains ${\rm tr}(\ket{0}\bra{0})=1={\rm tr}(\mathcal{A}[\ket{0}\bra{0}])=\sum_{i_{1}}A^{00}_{i_{1}i_{1}}$.
In general, if one sets $\rho=\ket{k}\bra{k}$, then one can prove $\sum_{i_{1}}A^{kk}_{i_{1}i_{1}}=1  ,\forall k$.

For $i_{0}\neq j_{0}$, let us construct a density matrix with four elements labeled by two indexes $i_{0}$ and $j_{0}$. One can, without loss of generality, set $i_{0}=0$, $j_{0}=1$ and $\rho=\frac{1}{2}(\ket{0}\bra{0}+\ket{1}\bra{1})+\rho_{01}\ket{0}\bra{1}+\rho_{10}\ket{0}\bra{1}=\frac{1}{2}(\ket{0}\bra{0}+\ket{1}\bra{1})+\rho_{01}\ket{0}\bra{1}+\rho^{*}_{01}\ket{1}\bra{0}$. Here $\rho_{01}$ is chosen such that $\rho$ is a density matrix satisfying
$\rho=\rho^{\dagger}$, $\rho \ge 0$ and
${\rm tr}(\rho^2) \le 1$.
Applying the trace preserving condition,
\begin{equation}
\begin{aligned}
  {\rm tr}\rho
  &=1 \nonumber\\
  &={\rm tr}(\mathcal{A}[\frac{1}{2}(\ket{0}\bra{0}+\ket{1}\bra{1})+\rho_{01}\ket{0}\bra{1}+\rho^{*}_{01}\ket{1}\bra{0}]) \\
  &= \frac{1}{2} (\sum A^{00}_{i_{1}i_{1}}+ \sum A^{11}_{i_{1}i_{1}})+\sum \rho_{01} A^{01}_{i_{1}i_{1}}+\sum \rho^{*}_{01} A^{10}_{i_{1}i_{1}}.
\end{aligned}
\end{equation}
Because we have shown $\sum_{i_{1}}A^{00}_{i_{1}i_{1}}=\sum_{i_{1}}A^{11}_{i_{1}i_{1}}=1$, 
\begin{equation}
\label{eq:Trdiag}
\rho_{01}\sum_{i_{1}}A^{01}_{i_{1}i_{1}}+\rho^{*}_{01}\sum_{i_{1}}A^{10}_{i_{1}i_{1}}=0. 
\end{equation}
Since $\rho_{01}$ and $\rho^{*}_{01}$ are linearly independent, one obtains $\sum_{i_{1}}A^{01}_{i_{1}i_{1}} = \sum_{i_{1}}A^{10}_{i_{1}i_{1}}=0 $

Combining the results for the case of $i_{0}=j_{0}$ with the case  $i_{0} \neq j_{0}$, one obtains for general indexes $i_{0}$ and $j_{0}$:
\begin{equation}
\label{trace preserving}
\sum_{i_{1}}A^{i_{0}j_{0}}_{i_{1}i_{1}}=\delta_{i_{0}j_{0}}.
\end{equation}
\end{proof}

\section{Tensor representation of a constant map}
\label{constant_rep}
In this Appendix, we show that a map  maps $\forall \rho \in \mathcal{B}(\mathcal{H}^{S_{0}})$  with ${\rm tr}_{S_{0}}(\rho)=1$ to a constant density matrix $\rho_{\rm const} \in \mathcal{B}(\mathcal{H}^{S_{1}})$  with ${\rm tr}_{S_{1}}(\rho_{\rm const})=1$, i.e., satisfies Eq.~(\ref{Cmap}), if and only if
\begin{equation}
\label{CMapForm}
\Lambda_{\rho_{\rm const}}[\ket{i_{0}}\bra{j_{0}}]=\delta_{i_{0}j_{0}}\sum_{i_{1}j_{1}}\rho_{{\rm const};i_{1},j_{1}}\ket{i_{1}}\bra{j_{1}},
\end{equation}
where $\{|i_0\rangle\}$ and $\{|i_i\rangle\}$
denote orthonormal bases of the Hilbert spaces $(\mathcal{H}^{S_{0}})$ and $(\mathcal{H}^{S_{1}})$, respectively.


\begin{proof}
  Without loss of generality, let us set $\Lambda_{\rho_{\rm const}}[\ket{i_{0}}\bra{j_{0}}]=\sum_{i_{1}j_{1}}\lambda^{i_{0}j_{0}}_{i_{1}j_{1}}\ket{i_{1}}\bra{j_{1}}$. Then the result of the constant map acting on an arbitrary density matrix
  $\rho=\sum_{i_{0}j_{0}}\rho_{i_{0}j_{0}}\ket{i_{0}}\bra{j_{0}} \in \mathcal{B}(\mathcal{H}^{S_{0}}) $
becomes
\begin{equation}
\label{eqB2}
\Lambda_{\rho_{\rm const}}[\sum_{i_{0}j_{0}}\rho_{i_{0}j_{0}}\ket{i_{0}}\bra{j_{0}}]
  =\sum_{i_{1}j_{1}i_{0}j_{0}}\lambda^{i_{0}j_{0}}_{i_{1}j_{1}}\rho_{i_{0}j_{0}}\ket{i_{1}}\bra{j_{1}}.
\end{equation}
Similar to the proof given in Appendix \ref{HPTP}, we set, without loss of generality, $\rho = \rho_{00}\ket{0}\bra{0}+\rho_{11}\ket{1}\bra{1} + \rho_{01}\ket{0}\bra{1} +  \rho^{*}_{01}\ket{1}\bra{0}$. Then from Eq.~(\ref{eqB2}), one obtains
\begin{eqnarray}
\label{CMde}
  \Lambda_{\rho_{\rm const}}[\rho]&=& \rho_{00}\lambda^{00}_{i_{1}j_{1}}\ket{i_{1}}\bra{j_{1}}+\rho_{11}\lambda^{11}_{i_{1}j_{1}}\ket{i_{1}}\bra{j_{1}} \nonumber\\
 &&+ \rho_{01}\lambda^{01}_{i_{1}j_{1}}\ket{i_{1}}\bra{j_{1}} +  \rho^{*}_{01}\lambda^{10}_{i_{1}j_{1}}\ket{i_{1}}\bra{j_{1}}.
\end{eqnarray}
Using the property that ${\rm tr}_{S_{0}}(\rho)$ equals to a finite constant $c$, i.e., $ \rho_{00}+\rho_{11} = c$,
and then taking the derivative of $\partial/(\partial \rho_{00})$  on the both sides of Eq.~(\ref{CMde}), one obtains $0=\lambda^{00}_{i_{1}j_{1}}-\lambda^{11}_{i_{1}j_{1}}$. The derivative of the left hand side is $0$ because the left hand side is a constant matrix. Thus, one can generalize the above relation to $\lambda^{i_{0}i_{0}}_{i_{1}j_{1}}=\lambda^{00}_{i_{1}j_{1}}$ for every $i_{0}$. 
Setting $ \rho_{01}=\alpha+i\beta$ with $\alpha$ and $\beta$ being real numbers,
and then taking the derivative of $\partial/(\partial \alpha)$ on the both sides of Eq.~(\ref{CMde}), one has $0=\lambda^{01}_{i_{1}j_{1}}+\lambda^{10}_{i_{1}j_{1}}$.
Similarly, for the derivetive of $\partial/(\partial \beta)$, one has $0=\lambda^{01}_{i_{1}j_{1}}-\lambda^{10}_{i_{1}j_{1}}$. From these results, one concludes $\lambda^{10}_{i_{1}j_{1}}=0$.
One can generalize the result for genreal indexes $i_{0}$ and $j_{0}$ and obtains that $\lambda^{i_{0}j_{0}}_{i_{1}j_{1}}=0$ for $i_{0} \neq j_{0}$.

Substituting the two results of $\lambda^{i_{0}i_{0}}_{i_{1}j_{1}}=\lambda^{{0}{0}}_{i_{1}j_{1}}$ being independent of $i_0$, and $\lambda^{i_{0}j_{0}}_{i_{1}j_{1}}=0$ for $i_{0} \neq j_{0}$
into Eq.~(\ref{eqB2}), one obtains
\begin{equation}
\label{eqLamda}
\Lambda_{\rho_{\rm const}}[\sum_{i_{0}j_{0}}\rho_{i_{0}j_{0}}\ket{i_{0}}\bra{j_{0}}]
  =\sum_{i_{1}j_{1}}\lambda^{i_{0}j_{0}}_{i_{1}j_{1}}\sum_{i_{0}j_{0}}\delta_{i_{0} j_{0}}\rho_{i_{0}j_{0}}\ket{i_{1}}\bra{j_{1}}.
\end{equation}
On the other hand, let $\rho_{\rm const}= \sum_{i_{1}j_{1}}\rho_{{\rm const} ;i_{1}j_{1}}\ket{i_{1}}\bra{j_{1}} \in \mathcal{B}(\mathcal{H}^{S_{1}})$, one has, 
from  Eq.~(\ref{Cmap}),
\begin{eqnarray}
\label{eqrhoC}
\Lambda_{\rho_{\rm const}}[\sum_{i_{0}j_{0}}\rho_{i_{0}j_{0}}\ket{i_{0}}\bra{j_{0}}]
&=&\sum_{i_{1}j_{1}}\rho_{{\rm const} ;i_{1}j_{1}}\ket{i_{1}}\bra{j_{1}}. 
\end{eqnarray}
Using the condition that the trace $\sum_{i_{0}j_{0}}\delta_{i_{0}j_{0}}{\rho_{i_{0}j_{0}}}=1$ for Eq.~(\ref{eqLamda}), and then comparing the resultant equation to  Eq.~(\ref{eqrhoC}), one can identify
$\lambda^{i_{0}j_{0}}_{i_{1}j_{1}}=\delta_{i_{0}j_{0}}\rho_{{\rm const};i_{1}j_{1}}$.
Finally, one arrives at
Eq.~(\ref{CMapForm}).
\end{proof}

We can extend directly the constant map for density matrix of Eq.~(\ref{CMapForm})
to the constant map for process tensor $T$ as:
\begin{equation}
  \label{CMapPTForm}
  \Lambda'_{T_{\rm const}}[\mathcal{A}[\rho^{S}_{\rm fixed}]]=T_{\rm const} \cdot {\rm tr}(\mathcal{A}[\rho^{S}_{\rm fixed}]).
\end{equation}
Here, $\mathcal{A}$ acts on the system $S$, $T_{\rm const}$ is a fixed process tensor and $\rho^{S}_{\rm fixed}$ is an arbitrary fixed density matrix.

\section{Constant map erasing the information of the system and correlation}
\label{ConstMapErase}
If $\Lambda_{\rho_{\rm const}}$:$\mathcal{B}\left(\mathcal{H}^{S}\right) \rightarrow \mathcal{B}\left(\mathcal{H}^{S}\right)$ is a constant map, 
then $\Lambda_{\rho_{\rm const}} [\rho^{SE}] =\rho_{\rm const} \otimes \rho^{E} , \forall \rho^{SE} \in $ $\mathcal{B}\left(\mathcal{H}^{SE}\right)$.
\begin{proof}
Let $ \rho^{SE} =\sum_{ij\alpha\beta}\rho^{SE}_{ij,\alpha\beta} \ket{i}\bra{j} \otimes \ket{\alpha}\bra{\beta} $
and $ \rho_{\rm const} =\sum_{mn}\rho_{{\rm const};{mn}} \ket{m}\bra{n} $. 
Then
\begin{eqnarray}
\label{eqC1}
\Lambda_{\rho_{\rm const}} [\rho^{SE}] 
&=&\Lambda_{\rho_{\rm const}}[\sum_{ij\alpha\beta}\rho^{SE}_{ij,\alpha\beta} \ket{i}\bra{j} \otimes \ket{\alpha}\bra{\beta}] \nonumber\\
&=&\sum_{ij\alpha\beta}\rho^{SE}_{ij,\alpha\beta} \Lambda_{\rho_{\rm const}} [\ket{i}\bra{j}] \otimes \ket{\alpha}\bra{\beta}\nonumber\\
&=& \sum_{ij\alpha\beta} \rho^{SE}_{ij,\alpha\beta} \delta_{ij}\sum_{mn}\rho_{{\rm const};mn}\ket{m}\bra{n} \otimes \ket{\alpha}\bra{\beta}\nonumber\\
&=&\sum_{mn}\rho_{{\rm const};mn}\ket{m}\bra{n} \otimes \sum_{ij\alpha\beta} \delta_{ij} \rho^{SE}_{ij,\alpha\beta} \ket{\alpha}\bra{\beta}\nonumber\\
&=&\rho_{\rm const} \otimes {\rm tr_{S}}( \rho^{SE})\nonumber\\
&=&\rho_{\rm const} \otimes \rho^{E},
\end{eqnarray}
where we have used Eq.~(\ref{CMapForm}) in the derivation for Eq.~(\ref{eqC1}).
One can see from  Eq.~(\ref{eqC1}) that a constant map can destroy the
system-environment correlation of $\rho^{SE}$ and erase the information
of the system state.
\end{proof}

\section{Proof of Corollary \ref{local effect} }
\label{APPnoIB}

\begin{proof}
We can represent the schematic illustration for a two-time-step process of Fig.~\ref{Operation}(b) as the following equation:
\begin{equation}
A^{\ \ \ mn}_{B;i_{1'}j_{1'}}\rho^{S}_{A;mn}\delta_{i_{1}j_{1}}T^{i_{1'}j_{1'},i_{0'}j_{0'}}_{i_{2}j_{2},i_{1}j_{1},i_{0}j_{0}}A^{\ \ \ i_{0}j_{0}}_{[0];i_{0'}j_{0'}}=\rho^{S}_{[2];i_{2}j_{2}}.
\end{equation}
Here, $\rho^{S}_{A;mn}\delta_{i_{1}j_{1}}$ is the constant map [see Eq.~(\ref{CMapForm})], $A^{\ \ \ mn}_{B;i_{1'}j_{1'}}$ is the quantum operation that Bob applys on the system. Setting $A^{\ \ \ mn}_{B;i_{1'}j_{1'}}\rho^{S}_{A;mn} \equiv \rho_{B;i_{1'}j_{1'}}$ to be an arbitrary system state produced by Bob, we obtain
an effective map from $\rho_{B}$ to $\rho_{[2]}$ as 
\begin{equation}
T'^{\ i_{1'}j_{1'}}_{[1]; i_{2}j_{2}}\rho_{B;i_{1'}j_{1'}}=\rho^{S}_{[2];i_{2}j_{2}},
\end{equation}
where
\begin{equation}
\label{local Choi}
\delta_{i_{1}j_{1}}T^{i_{1'}j_{1'},i_{0'}j_{0'}}_{i_{2}j_{2},i_{1}j_{1},i_{0}j_{0}}A^{\ \ \ i_{0}j_{0}}_{[0];i_{0'}j_{0'}}\equiv T'^{\ i_{1'}j_{1'}}_{[1]; i_{2}j_{2}}.
\end{equation}
The process has no IBTRES if and only if the final state $\rho^{S}_{[2]}$ does not depend on $\mathcal{A}_{[0]}$. That means $\mathcal{A}_{[0]}$ can not affect the map $\mathcal{T}'_{[1]}$. As a result, the process tensor $\delta_{i_{1}j_{1}}T^{i_{1'}j_{1'},i_{0'}j_{0'}}_{i_{2}j_{2},i_{1}j_{1},i_{0}j_{0}}$ in Eq.~(\ref{local Choi}) is a constant map, which maps an arbitrary $A^{\ \ \ i_{0}j_{0}}_{[0];i_{0'}j_{0'}}$ to a unique $T'^{\ i_{1'}j_{1'}}_{[1]; i_{2}j_{2}}$. Using the form of the constant map for a process tensor
of  Eq.~(\ref{CMapPTForm}) in tensor form, one can write
\begin{equation}
\label{eqD4}
\delta_{i_{1}j_{1}}T^{i_{1'}j_{1'},i_{0'}j_{0'}}_{i_{2}j_{2},i_{1}j_{1},i_{0}j_{0}}=\rho^{S}_{{\rm fixed};i_{0}j_{0}}\delta_{i_{0'}j_{0'}}T'^{\ i_{1'}j_{1'}}_{[1]; i_{2}j_{2}},
\end{equation}
with $\rho^{S}_{\rm fixed}$ a fixed density matrix to be determined.
Because ${\rm tr}_{S_{2}S_{1'}S_{1}S_{0'}}(T) \propto \rho^{S}_{[0]}$, performing the same trace operations on the right hand side of Eq.~(\ref{eqD4}) leads to a result proportional to $\rho^{S}_{\rm fixed}$. Therefore, one concludes  $\rho^{S}_{\rm fixed} = \rho^{S}_{[0]}$ 
Consequently,
\begin{equation}
\label{eq:NoLocalEffect}
\delta_{i_{1}j_{1}}T^{i_{1'}j_{1'},i_{0'}j_{0'}}_{i_{2}j_{2},i_{1}j_{1},i_{0}j_{0}}=\rho^{S}_{[0];i_{0}j_{0}}\delta_{i_{0'}j_{0'}}T'^{\ i_{1'}j_{1'}}_{[1]; i_{2}j_{2}}.
\end{equation}
Equation~(\ref{eq:NoLocalEffect}) written in matrix form is Eq.~(\ref{eq:local effect}) in the main text.
\end{proof}

\section{No information backflow effect described in \cite{NMLocal}}
\label{Li}
We discuss the definition of no information backflow defined in \cite{NMLocal} and compare their definition to ours here. Below is the definition of no information backflow in \cite{NMLocal}:
\begin{mydef}
 A process with an initial environment state $\rho^{E}_{[0]}\in \mathcal{B}\left(\mathcal{H}^{E}\right)$ exhibits no information backflow if and only if for all times $t_1$ of $t_{2}\geq t_{1}\geq t_{0}$, there exists a constant map $\Lambda_{\rho'^{E}(t_{1})}$: $\mathcal{B}\left(\mathcal{H}^{E}\right) \rightarrow \mathcal{B}\left(\mathcal{H}^{E}\right)$ such that
\begin{eqnarray}
\label{eqE1}
&&\mathcal{U}_{[1]}(t_{2},t_{1}) \circ \mathcal{U}_{[0]}(t_{1},t_{0}) [\rho^{S}\otimes \rho^{E}_{[0]}] \nonumber\\
& = &\mathcal{U}_{[1]}(t_{2},t_{1}) \circ \Lambda_{\rho'^{E}(t_{1})} \circ \mathcal{U}_{[0]}(t_{1},t_{0}) [\rho^{S}\otimes \rho^{E}_{[0]}]
\end{eqnarray}
for all $\rho^{S}\in \mathcal{B}\left(\mathcal{H}^{S}\right)$.  
\end{mydef}
This definition is similar to our definition of no IBTRES, but with an important difference in that the constant map (called replacement channel in Ref.\cite{NMLocal}) in Eq.~(\ref{eqE1}) is applied on the environment state rather than on the system state in our definition. The constant map applied on the environment erases both the information of the initial environment state and the system-environment correlation, but not the system state information. It is a sufficient condition for no IBTRES in our definition, but not a necessary one. For most cases, the dimension of environment state is infinite, and that makes this definition hard to implement experimentally and hard to calculate in practice. In contrast, our definition is operational because the constant map operation is applied only on the system.


\section{General quantum regression formula defined in \cite{NMLocal}}
\label{GQRT}

We compare the general quantum regression formula described in Sec. 3.4.2 and reformulated in Sec. 3.5.2 of \cite{NMLocal} 
to the approach for a Markovian process discussed here.
Note that the combined system-environment state at the initial time is assumed to be factorizable in \cite{NMLocal}.
We first represent the equation of the general quantum regression formula, Eq.(42), in \cite{NMLocal} using our notaition as follows.
The open quantum system dynamics of an $m$-time-step process satisfying the general quantum regression formula if and only if 
there exists a set of unitary maps $\{ \mathcal{W}_{[n]} | n= 0,1,\cdots (m-1) \}$ acting only on the environment state space 
such that
\begin{equation} 
\label{eqE2add}
\rho^{S}_{[m]} = \mathcal{T}^{\rm Ref}_{[m-1]}\mathcal{A}_{[m-1]}\cdots\mathcal{T}^{\rm Ref}_{[1]}\mathcal{A}_{[1]}\mathcal{T}^{\rm Ref}_{[0]}\mathcal{A}_{[0]}[\rho^{S}_{[0]}], 
\end{equation}
where 
\begin{equation}
\label{eqE3add}
\mathcal{T}^{\rm Ref}_{[n]}[\rho^{S}] = {\rm tr_{E}}\mathcal{U}_{[n]}[\rho^{S} \otimes \tilde{\rho}^{E}_{n}],
\end{equation}
with
\begin{equation}
\label{eqE1add}
 \tilde{\rho}^{E}_{n}
=\mathcal{W}_{[n]}[\rho^{E}_{[0]}].
\end{equation}
The general quantum regression formula described in \cite{NMLocal} hinges on the condition to replace the joint state $\rho^{SE}_n$ at time step $n$ by a factorized state $\rho^S_n\otimes\tilde{\rho}^E_n$ in the time-ordered correlation function, conceptually related to the factorization approximation defined in Eq.~(16) of \cite{NMLocal}, where $\tilde{\rho}^E_n$ is defined in Eq.~(\ref{eqE1add}) resulting from unitary map $\mathcal{W}_{[n]}$. 
One may notice that Eq.~(\ref{eqE2add}) is very similar to Eq.~(\ref{MyM}) that describes a Markovian process in our approach.
Thus the general quantum regression formula
is Markovian because Eq.~(\ref{eqE2add}) is in a divisible form.
However, a Markovian process may not satisfy the general quantum regression formula. One can prove this by using the following definition.
\begin{mydef}
\label{postGQRTdef}
An $m$-time-steps process satisfy the extended general quantum regression formula if and only if there exists 
a set of TPCP maps $\{ \mathcal{C}_{[n]} | n= 0,1,\cdots (m-1) \}$
such that 
\begin{equation} 
\label{postDivid}
\rho^{S}_{[m]} = \mathcal{T}^{\rm New}_{[m-1]}\mathcal{A}_{[m-1]}\cdots\mathcal{T}^{\rm New}_{[1]}\mathcal{A}_{[1]}\mathcal{T}^{\rm New}_{[0]}\mathcal{A}_{[0]}[\rho^{S}_{[0]}],
\end{equation}
where
\begin{equation}
  \mathcal{T}^{\rm New}_{[n]}[\rho^{S}] = {\rm tr_{E}}\mathcal{U}_{[n]}[\rho^{S} \otimes \hat{\rho}^{E}_{n}]
  \label{postT}
\end{equation}
with
\begin{equation}
\hat{\rho}^{E}_{n}= \mathcal{C}_{[n]}[\rho^{E}_{[0]}]. 
  \label{postTPCP}
\end{equation}
\end{mydef}
The name of the extended general quantum regression formula in
Definition~\ref{postGQRTdef} comes from the fact that
the set of unitary maps $\{ \mathcal{W}_{[n]} | n= 0,1,\cdots (m-1) \}$ is extended to (replaced by)
a set of TPCP maps $\{ \mathcal{C}_{[n]} | n= 0,1,\cdots (m-1) \}$ 
in the definition of the general quantum regression formula of Eqs.~(\ref{eqE2add})--(\ref{eqE1add}), resulting in Eqs.~(\ref{postDivid})--(\ref{postTPCP}). 
The set of quantum processes described by the original general quantum regression formula is a subset of Definition~\ref{postGQRTdef} because the set of unitary maps $\{ \mathcal{W}_{[n]}\}$ is a subset of TPCP maps $\{ \mathcal{C}_{[n]}\}$.
Because Eq.~(\ref{postDivid}) is in a divisible form, quantum processes described by it are Markovian. But not all the Markovian process described by Eq.~(\ref{postDivid}) satisfy the general quantum regression formula, Eq.~(\ref{eqE2add}).

\begin{Corollary}
  \label{coro5}
  An $m$-time-step process satisfies Definition~\ref{MarkovDef}
  if and only if it satisfies Definition~\ref{postGQRTdef}.
\end{Corollary}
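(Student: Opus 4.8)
The plan is to prove Corollary~\ref{coro5} as a two-way implication, leaning on Proposition~\ref{mainProp} (Markovian $\iff$ no IBTRES and no SECE) and on the structure results already established.

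\emph{Definition~\ref{postGQRTdef} $\Rightarrow$ Definition~\ref{MarkovDef}.} This direction is essentially immediate: Eq.~(\ref{postDivid}) already exhibits $\rho^{S}_{[m]}$ in a divisible form $\mathcal{T}^{\rm New}_{[m-1]}\circ\mathcal{A}_{[m-1]}\circ\cdots\circ\mathcal{T}^{\rm New}_{[0]}\circ\mathcal{A}_{[0]}$. First I would note that each $\mathcal{T}^{\rm New}_{[n]}$ defined by Eq.~(\ref{postT}) with $\hat{\rho}^{E}_{n}=\mathcal{C}_{[n]}[\rho^{E}_{[0]}]$ depends only on the fixed data $\mathcal{U}_{[n]}$, $\mathcal{C}_{[n]}$ and $\rho^{E}_{[0]}$, hence is independent of the experimenter's operations $\mathcal{A}_{[0]},\dots,\mathcal{A}_{[m-1]}$. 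Therefore the future state $\rho^{S}_{[n+1]}=\mathcal{T}^{\rm New}_{[n]}\circ\mathcal{A}_{[n]}[\rho^{S}_{[n]}]$ depends on the history only through $\mathcal{A}_{[n]}[\rho^{S}_{[n]}]$, which is precisely Definition~\ref{MarkovDef}.

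\emph{Definition~\ref{MarkovDef} $\Rightarrow$ Definition~\ref{postGQRTdef}.} Here I would invoke Proposition~\ref{mainProp}: a Markovian process has no IBTRES and no SECE, so by Eq.~(\ref{MyM}) it is already divisible as $\rho^{S}_{[m]}=\mathcal{T}_{[m-1]}\circ\mathcal{A}_{[m-1]}\circ\cdots\circ\mathcal{T}_{[0]}\circ\mathcal{A}_{[0]}[\rho^{S}_{[0]}]$ with the maps $\mathcal{T}_{[n]}$ constructed in Eqs.~(\ref{T0eq}), (\ref{T1eq}), (\ref{Leq}). The remaining task is to show that each such $\mathcal{T}_{[n]}$ can be written in the form~(\ref{postT}), i.e. that there exists a TPCP map $\mathcal{C}_{[n]}$ with $\mathcal{T}_{[n]}[\rho^{S}]={\rm tr}_{E}\,\mathcal{U}_{[n]}[\rho^{S}\otimes\mathcal{C}_{[n]}[\rho^{E}_{[0]}]]$. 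For $n=0$ this is trivial with $\mathcal{C}_{[0]}=\mathcal{I}_{E}$, since $\mathcal{T}_{[0]}[\rho^{S}]={\rm tr}_{E}\,\mathcal{U}_{[0]}[\rho^{S}\otimes\rho^{E}_{[0]}]/{\rm tr}(\rho^{E}_{[0]})$ (and one may normalize $\rho^{E}_{[0]}$). For $n\geq 1$, by the no-IBTRES property $\mathcal{T}_{[n]}[\rho^{S}]={\rm tr}_{E}\,\mathcal{U}_{[n]}[\rho^{S}\otimes\rho^{E}_{[n]}]/{\rm tr}(\rho^{E}_{[n]})$ where $\rho^{E}_{[n]}$ is independent of the operations; it suffices to take $\mathcal{C}_{[n]}$ to be the constant (replacement) map $\Lambda_{\rho^{E}_{[n]}/{\rm tr}(\rho^{E}_{[n]})}$, which is a legitimate TPCP map on the environment sending $\rho^{E}_{[0]}$ to the required normalized state. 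Then Eq.~(\ref{postT}) and Eq.~(\ref{postTPCP}) hold with $\hat{\rho}^{E}_{n}=\rho^{E}_{[n]}/{\rm tr}(\rho^{E}_{[n]})$, and Eq.~(\ref{postDivid}) is exactly Eq.~(\ref{MyM}).

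\emph{Main obstacle.} The only subtle point is in the second direction: one must be sure that the reduced environment state $\rho^{E}_{[n]}$ appearing in $\mathcal{T}_{[n]}$ is genuinely well-defined and operation-independent — this is guaranteed by the no-IBTRES part of Proposition~\ref{mainProp}, but it is worth spelling out that the constant map $\mathcal{C}_{[n]}=\Lambda_{\rho^{E}_{[n]}/{\rm tr}(\rho^{E}_{[n]})}$ depends only on fixed process data and not on $\mathcal{A}_{[0]},\dots,\mathcal{A}_{[n-1]}$, so that $\{\mathcal{C}_{[n]}\}$ is a single fixed family as required by Definition~\ref{postGQRTdef}. (One should also remark, as the surrounding text does, that the original general quantum regression formula is strictly more restrictive since it demands the $\mathcal{C}_{[n]}$ be \emph{unitary} on the environment, which a replacement channel generally is not.) Everything else is routine bookkeeping with the definitions.
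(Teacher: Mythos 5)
Your proposal is correct and follows essentially the same route as the paper: the divisibility of Eq.~(\ref{postDivid}) gives Markovianity directly, and for the converse both you and the paper choose $\mathcal{C}_{[n]}$ to be the constant (replacement) map sending $\rho^{E}_{[0]}$ to the normalized reduced environment state $\rho^{E}_{[n]}/{\rm tr}(\rho^{E}_{[n]})$, invoking Proposition~\ref{mainProp} and the no-IBTRES condition of Eq.~(\ref{T1eq}) to identify $\mathcal{T}^{\rm New}_{[n]}$ with $\mathcal{T}_{[n]}$. Your explicit remark that the $\mathcal{C}_{[n]}$ depend only on fixed process data, not on the $\mathcal{A}$'s, is a worthwhile clarification but does not change the argument.
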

\begin{proof}
  If an $m$-time-step process is Markovian (Definition~\ref{MarkovDef}), it satisfies Eq.~(\ref{MyM}).
To prove that a Markovian process also satisfies Definition~\ref{postGQRTdef},
one needs to find a suitable $\{ \mathcal{C}_{[n]}\}$ such that 
 Eqs.~(\ref{postDivid}) is equivalent to  Eq.~(\ref{MyM}).
One can set 
\begin{equation}
  \mathcal{C}_{[n]} = \Lambda_{\rho^{E}_{[n]}}/{\rm tr}(\rho^{E}_{[n]}),
\end{equation}
a constant map acting on the initial environment state,
where $\rho^{E}_{[n]}= {\rm tr_{S}}(\rho^{SE}_{[n]})$ is the reduced (marginal) environmrnt state at time step $n$.
Therefore, one has
\begin{equation}
\mathcal{C}_{[n]}[\rho^{E}_{[0]}]=\rho^{E}_{[n]}/{\rm tr}(\rho^{E}_{[n]}), 
  \label{TPCPstate}
\end{equation}
and 
\begin{eqnarray}
\mathcal{T}^{\rm New}_{[n]}[\rho^{S}] &=& {\rm tr_{E}}\mathcal{U}_{[n]}[\rho^{S} \otimes \rho^{E}_{[n]}]/{\rm tr}(\rho^{E}_{[n]}) \nonumber \\
                                      &=& \mathcal{L}_{[n]} \left[\rho^{S},\mathcal{A}_{[n-1]} , \mathcal{A}_{[n-2]} ,\cdots, \mathcal{A}_{[0]} \right],
\label{TNew}                                          
\end{eqnarray}
where Eq.~(\ref{Leq}) has been used for the last equality in Eq.~(\ref{TNew}). 
By Proposition~\ref{mainProp}, a Markovian process having neither SECE nor IBTRES, we employ the equation for no IBTRES, i.e., Eq.~(\ref{T1eq}) in Definition~\ref{no-IBTRESdef}, to obtain
\begin{eqnarray}
  \mathcal{T}^{\rm New}_{[n]}[\rho^{S}] &=& \mathcal{T}_{[n]}[\rho^{S}].
\label{TNewTn}
\end{eqnarray}
As a consequence, Eq.~(\ref{postT}) is true because it can be constructed by $\mathcal{T}_{[n]}$.  Equation (\ref{postDivid}) can be also made true by writing out  Eq.~(\ref{postT}) step by step
with the replacement of $\mathcal{T}^{\rm New}_{[n]}[\rho^{S}] = \mathcal{T}_{[n]}[\rho^{S}]$, Eq.~(\ref{TNewTn}), and the resultant equation  is just  Eq.~(\ref{MyM}).
Thus the Markovian $m$-time-steps process satisfies  Eqs.~(\ref{postDivid})--(\ref{postTPCP}) of Definition~\ref{postGQRTdef}  with $\mathcal{T}^{\rm New}_{[n]}[\rho^{S}] = \mathcal{T}_{[n]}[\rho^{S}]$. 

The proof for the reverse statement 
that if an $m$-time-step process satisfies Definition~\ref{postGQRTdef}, then it also satisfies  Definition~\ref{MarkovDef} is straightforward.
As Eq.~(\ref{postDivid}) in Definition~\ref{postGQRTdef} is in a divisible form, the future system state of the process at time step
 $(n+1)$ depends solely on the present state at time step $n$.
So the process is Markovian and satisfies Definition~\ref{MarkovDef}.
\end{proof}
We remark from the above proof that 
to make a full one-to-one correspondence between the
definition of Markovianity in Definition~\ref{MarkovDef}
and the (extended) general quantum regression formula 
is to consider the environment state  $\tilde{\rho}^{E}_{n}$
in Eq.~(\ref{eqE3add}) or  $\hat{\rho}^{E}_{n}$ in Eq.~(\ref{postT})
as the normalized reduced (marginal) state of the
environment at time step $n$, Eq.~(\ref{TPCPstate}),
rather than the environment stae connected to the initial
state by a unitary map, $\mathcal{W}_{[n]}[\rho^{E}_{[0]}]$,
as described in Eq.~(\ref{eqE1add}) or in \cite{NMLocal}.

In summary, we show that by replacing the set of local unitary maps $\{ \mathcal{W}_{[n]}\}$ with a set of TPCP maps $\{ \mathcal{C}_{[n]}\}$,
the general quantum regression formula is a sufficient but
not necessary condition for a Markovian process.
We also show that an $m$-time-steps process satisfying
the resultant extended general quantum regression formula of Definition~\ref{postGQRTdef} satisfies also Definition~\ref{MarkovDef} and Eq.~(\ref{MyM}) that describes a Markovian process, and vice versa.

\section{Proof of Corollary \ref{SECEprop1}}
\label{AppSECEprop1}
We give the proof for Corollary \ref{SECEprop1} here.
Substituting  $\rho^{S}_{[1]}=\mathcal{M}[\mathcal{A}_{[0]}]$ into Eq.~(\ref{Lemma1eq}), we obtain
\begin{equation}
\label{eqF1}
\begin{split}
&T^{i_{1'}j_{1'},i_{0'}j_{0'}}_{\ i_{2}j_{2},\  i_{1}j_{1},\ i_{0}j_{0}}A_{[0];i_{0'}j_{0'}}^{\ \ \ i_{0}j_{0}}A_{[1];i_{1'}j_{1'}}^{\ \ \ i_{1}j_{1}} (\delta_{x_{0'}y_{0'}}M^{\. i_{3} j_{3}}_{x_{0'}y_{0'} ,i_{2'}j_{2'}} A_{[0];i_{3}j_{3}}^{\  \ i_{2'}j_{2'}}) \\ 
&= T^{i_{1'}j_{1'},i_{0'}j_{0'}}_{\ i_{2}j_{2},\  x_{0}y_{0},\ i_{0}j_{0}}A_{[0];i_{0'}j_{0'}}^{\ \ \ i_{0}j_{0}}A_{[1];i_{1'}j_{1'}}^{\ \ \ i_{1}j_{1}} \delta_{x_{0},y_{0}}\cdot
M^{\. i_{3} j_{3}}_{i_{1}j_{1} ,i_{2'}j_{2'}} A_{[0];i_{3}j_{3}}^{\  \ i_{2'}j_{2'}}.
\end{split}
\end{equation}
From Eqs.~(\ref{eq:L}) and (\ref{eq:N}), substituting the expressions $\delta_{x_{0'}y_{0'}}M^{\. i_{3} j_{3}}_{x_{0'}y_{0'} ,i_{2'}j_{2'}} \equiv \left( N_{\ \ \ \ i_{2'}j_{2'}}^{i_{3} j_{3}} \right)$ and 
$T^{i_{1'}j_{1'},i_{0'}j_{0'}}_{\ i_{2}j_{2},\  x_{0}y_{0},\ i_{0}j_{0}} \delta_{x_{0}y_{0}} \equiv \left(L^{i_{1'}j_{1'}, i_{0'}j_{0'}}_{i_{2}j_{2},\ \ \ \ \ \ \ ,i_{0}j_{0}}\right)$ into Eq.~(\ref{eqF1}),
we obtain
\begin{equation}
\label{eqF2}
\begin{split}
&T^{i_{1'}j_{1'},i_{0'}j_{0'}}_{\ i_{2}j_{2},\  i_{1}j_{1},\ i_{0}j_{0}}A_{[0];i_{0'}j_{0'}}^{\ \ \ i_{0}j_{0}}A_{[1];i_{1'}j_{1'}}^{\ \ \ i_{1}j_{1}} \left( N_{\ \ \ \ i_{2'}j_{2'}}^{i_{3} j_{3}} \right) A_{[0];i_{3}j_{3}}^{\  \ i_{2'}j_{2'}} \\ 
&= \left(L^{i_{1'}j_{1'}, i_{0'}j_{0'}}_{i_{2}j_{2},\ \ \ \ \ \ \ ,i_{0}j_{0}}\right)A_{[0];i_{0'}j_{0'}}^{\ \ \ i_{0}j_{0}} A_{[1];i_{1'}j_{1'}}^{\ \ \ i_{1}j_{1}}  
M^{\. i_{3} j_{3}}_{i_{1}j_{1} ,i_{2'}j_{2'}} A_{[0];i_{3}j_{3}}^{\  \ i_{2'}j_{2'}},
\end{split}
\end{equation}
where we have used the notations that the indexes of a tensor are sorted form up to down and right to left in time so that the empty space in the subscripts indicates that the indexes disappear due to a trace over the system Hilbert space $\mathcal{H}^{S_{1}}$.
As  ${A}_{[1]} \in \mathcal{B}\left(\mathcal{H}^{S_{1'}S_{1}}\right)$ is an arbitrary quantum operation, 
we can rewrite Eq.~(\ref{eqF2}) into another process that can be characterized by performing a process tomography via varying ${A}_{[1]}$. As a result, we can  
remove ${A}_{[1]}$ form the both sides of Eq.~(\ref{eqF2}) and obtains
\begin{eqnarray}
\label{eqF3}
&&T^{i_{1'}j_{1'},i_{0'}j_{0'}}_{\ i_{2}j_{2},\  i_{1}j_{1},\ i_{0}j_{0}}A_{[0];i_{0'}j_{0'}}^{\ \ \ i_{0}j_{0}}\left( N_{\ \ \ \ i_{2'}j_{2'}}^{i_{3} j_{3}} \right) A_{[0];i_{3}j_{3}}^{\  \ i_{2'}j_{2'}} \nonumber\\ 
 &=& \left(L^{i_{1'}j_{1'}, i_{0'}j_{0'}}_{i_{2}j_{2},\ \ \ \ \ \ \ ,i_{0}j_{0}}\right)A_{[0];i_{0'}j_{0'}}^{\ \ \ i_{0}j_{0}}
M^{\. i_{3} j_{3}}_{i_{1}j_{1} ,i_{2'}j_{2'}} A_{[0];i_{3}j_{3}}^{\  \ i_{2'}j_{2'}}.
\end{eqnarray}	
In contrast, the two ${A}_{[0]}$'s appearing in either Eq.~(\ref{eqF2}) or Eq.~(\ref{eqF3}) cannot be removed from the both sides of the equation.
This is because the two ${A}_{[0]}\otimes {A}_{[0]} \in \mathcal{B}\left(\mathcal{H}^{S_{0'}S_{0}}\right)\otimes\mathcal{B}\left(\mathcal{H}^{S_{2'}S_{3}}\right)$ 
are the same operation and thus can not be varied independently to form a complete basis required for constructing a process without ${A}_{[0]}$'s in  Eq.~(\ref{eqF3}) 
by process tomography. 
Thus, let 
\begin{equation}
\label{eqF4}
\mathcal{A}_{[0]}=\alpha\mathcal{A}+\beta\mathcal{A}',
\end{equation}
where $\alpha$ and $\beta$ are positive real numbers with $0\leq\alpha+\beta\leq 1$, $\mathcal{A}$ and $\mathcal{A}'$ are arbitrarily chosen quantum operations satisfying Eq.~(\ref{eqF4}). Substituting Eq.~(\ref{eqF4}) in tensor form into Eq.~(\ref{eqF3}) leads to
\begin{eqnarray}
\label{eqF5}
&& T^{i_{1'}j_{1'},i_{0'}j_{0'}}_{\ i_{2}j_{2},\  i_{1}j_{1},\ i_{0}j_{0}}\left( \alpha A_{i_{0'}j_{0'}}^{i_{0} j_{0}}+\beta A'^{i_{0} j_{0}}_{i_{0'}j_{0'}} \right) \nonumber\\
 &&\cdot\left( N_{\ \ \ \ i_{2'}j_{2'}}^{i_{3} j_{3}} \right) \left(\alpha A^{i_{2'}j_{2'}}_{i_{3}j_{3}}+\beta A'^{i_{2'}j_{2'}}_{i_{3}j_{3}} \right)\nonumber\\ 
 &=&\left(L^{i_{1'}j_{1'}, i_{0'}j_{0'}}_{i_{2}j_{2},\ \ \ \ \ \ \ ,i_{0}j_{0}}\right)\left( \alpha A_{i_{0'}j_{0'}}^{i_{0} j_{0}}+\beta A'^{i_{0} j_{0}} _{i_{0'}j_{0'}}\right) \nonumber\\
&&\cdot
M^{\. i_{3} j_{3}}_{i_{1}j_{1} ,i_{2'}j_{2'}} \left(\alpha A^{i_{2'}j_{2'}}_{i_{3}j_{3}}+\beta A'^{i_{2'}j_{2'}}_{i_{3}j_{3}} \right).
\end{eqnarray}	
The terms with the factor $\alpha \beta$ on the both sides of Eq.~(\ref{eqF5}) must equal to each other as $\mathcal{A}_{[0]}$ and the decomposition of Eq.~(\ref{eqF4}) are arbitrary. So we have
\begin{eqnarray}
\label{alba}
&&\alpha \beta \cdot T^{i_{1'}j_{1'},i_{0'}j_{0'}}_{\ i_{2}j_{2},\  i_{1}j_{1},\ i_{0}j_{0}}\left( N_{\ \ \ \ i_{2'}j_{2'}}^{i_{3} j_{3}} \right) A_{i_{0'}j_{0'}}^{i_{0} j_{0}} A'^{i_{2'}j_{2'}}_{i_{3}j_{3}}\nonumber\\
&&+ \alpha \beta \cdot T^{i_{1'}j_{1'},i_{3}j_{3}}_{\ i_{2}j_{2},\  i_{1}j_{1},\ i_{2'}j_{2'}}\left( N_{\ \ \ \ \  i_{0}j_{0}}^{i_{0'} j_{0'}} \right)  A'^{i_{0} j_{0}}_{i_{0'}j_{0'}} A^{i_{2'}j_{2'}}_{i_{3}j_{3}}\nonumber\\
&=& \alpha \beta\cdot \left(L^{i_{1'}j_{1'}, i_{0'}j_{0'}}_{i_{2}j_{2},\ \ \ \ \ \ \ ,i_{0}j_{0}}\right)M^{\. i_{3} j_{3}}_{i_{1}j_{1} ,i_{2'}j_{2'}}  A_{i_{0'}j_{0'}}^{i_{0} j_{0}}A'^{i_{2'}j_{2'}}_{i_{3}j_{3}} \nonumber\\
&&\hspace{-0.2cm}+\alpha \beta\cdot\left(L^{i_{1'}j_{1'}, i_{3}j_{3}}_{i_{2}j_{2},\ \ \ \ \ \ \ ,i_{2'}j_{2'}}\right)M^{\. i_{0'} j_{0'}}_{i_{1}j_{1} ,i_{0}j_{0}} A'^{i_{0} j_{0}} _{i_{0'}j_{0'}} A^{i_{2'}j_{2'}}_{i_{3}j_{3}}.
\end{eqnarray}	
We can relabel the dummy variables in $A$ and $A'$,e.g., $A^{i_{2'}j_{2'}}_{i_{3}j_{3}} \rightarrow A_{i_{0'}j_{0'}}^{i_{0} j_{0}}$ and  $A'^{i_{0} j_{0}}_{i_{0'}j_{0'}} \rightarrow  A'^{i_{2'}j_{2'}}_{i_{3}j_{3}}$. The respective dummy variable in $T$,$L$,$M$ and $N$ are also changed. As a result, Eq.~(\ref{alba}) becomes
\begin{eqnarray}
\label{unremove}
&& T^{i_{1'}j_{1'},i_{0'}j_{0'}}_{\ i_{2}j_{2},\  i_{1}j_{1},\ i_{0}j_{0}}\left( N_{\ \ \ \ i_{2'}j_{2'}}^{i_{3} j_{3}} \right) A_{i_{0'}j_{0'}}^{i_{0} j_{0}} A'^{i_{2'}j_{2'}}_{i_{3}j_{3}} \nonumber\\
&&+  T^{i_{1'}j_{1'},i_{3}j_{3}}_{\ i_{2}j_{2},\  i_{1}j_{1},\ i_{2'}j_{2'}}\left( N_{\ \ \ \ \ i_{0}j_{0}}^{i_{0'} j_{0'}} \right)  A^{i_{0} j_{0}}_{i_{0'}j_{0'}} A'^{i_{2'}j_{2'}}_{i_{3}j_{3}}\nonumber\\
&=&  \left(L^{i_{1'}j_{1'}, i_{0'}j_{0'}}_{i_{2}j_{2},\ \ \ \ \ \ \ ,i_{0}j_{0}}\right)M^{\. i_{3} j_{3}}_{i_{1}j_{1} ,i_{2'}j_{2'}}  A_{i_{0'}j_{0'}}^{i_{0} j_{0}}A'^{i_{2'}j_{2'}}_{i_{3}j_{3}} \nonumber\\
&&+\left(L^{i_{1'}j_{1'}, i_{3}j_{3}}_{i_{2}j_{2},\ \ \ \ \ \ \ ,i_{2'}j_{2'}}\right)M^{\. i_{0'} j_{0'}}_{i_{1}j_{1} ,i_{0}j_{0}} A^{i_{0} j_{0}} _{i_{0'}j_{0'}} A'^{i_{2'}j_{2'}}_{i_{3}j_{3}}.
\end{eqnarray}	
We can then 
remove $ A^{i_{0} j_{0}} _{i_{0'}j_{0'}} A'^{i_{2'}j_{2'}}_{i_{3}j_{3}}$ from the both sides of Eq.~(\ref{unremove}) and obtain
\begin{eqnarray}
\label{noCEt}
&& T^{i_{1'}j_{1'},i_{0'}j_{0'}}_{\ i_{2}j_{2},\  i_{1}j_{1},\ i_{0}j_{0}}\left( N_{\ \ \ \ i_{2'}j_{2'}}^{i_{3} j_{3}} \right) \nonumber \\
&&+  T^{i_{1'}j_{1'},i_{3}j_{3}}_{\ i_{2}j_{2},\  i_{1}j_{1},\ i_{2'}j_{2'}}\left( N_{\ \ \ \ \ i_{0}j_{0}}^{i_{0'} j_{0'}} \right)  \nonumber \\
&=&  \left(L^{i_{1'}j_{1'}, i_{0'}j_{0'}}_{i_{2}j_{2},\ \ \ \ \ \ \ ,i_{0}j_{0}}\right)M^{\. i_{3} j_{3}}_{i_{1}j_{1} ,i_{2'}j_{2'}} \nonumber  \\
&&+ \left(L^{i_{1'}j_{1'}, i_{3}j_{3}}_{i_{2}j_{2},\ \ \ \ \ \ \ ,i_{2'}j_{2'}}\right)M^{\. i_{0'} j_{0'}}_{i_{1}j_{1} ,i_{0}j_{0}}. 
\end{eqnarray}
Adding the basis vectors $\ket{i_{3}}_{3}\bra{j_{3}}\otimes\ket{i_{2'}}_{2'}\bra{j_{2'}}\otimes\ket{i_{2}}_{2}\bra{j_{2}}\otimes\ket{i_{1'}}_{1'}\bra{j_{1'}}\otimes\ket{i_{1}}_{1}\bra{j_{1}}\otimes\ket{i_{0'}}_{0'}\bra{j_{0'}}\otimes\ket{i_{0}}_{0}\bra{j_{0}}$ to Eq.~(\ref{noCEt}), we can write down the matrix form of Eq.~(\ref{noCEt}) as 
\begin{equation}
\label{eq:noCE}
 \begin{split} 
& N \otimes T + \mathcal{S}_{30'}\circ\mathcal{S}_{2'0} \left(N \otimes T \right) \\
&= \mathcal{S}_{32}\circ\mathcal{S}_{2'1'} (L \otimes M)+ \mathcal{S}_{32}\circ\mathcal{S}_{2'1'}\circ \mathcal{S}_{20'}\circ\mathcal{S}_{1'0} (L \otimes M).
\end{split}
\end{equation}
By writing Eq.~(\ref{eq:noCE}) in a more concise expression, we finally arrive at
\begin{eqnarray}
&&\left( \mathcal{I} + \mathcal{S}_{30'} \circ \mathcal{S}_{2'0} \right)(N \otimes T) \nonumber\\
&=& \mathcal{S}_{32} \circ \mathcal{S}_{2'1'} \circ \left( \mathcal{I} + \mathcal{S}_{20'} \circ \mathcal{S}_{1'0} \right)(L \otimes M)
\end{eqnarray}
which is just Eq.~(\ref{CE_matrix0}) in the main text.

The proof for the reverse statement can be easily carried out as the derivation steps and equations in the proof are reversible.

\section{Proof of Corollary \ref{SECEco}}
\label{AppSECEcoro1}

The proof of Eq.~(\ref{P1Markov}) for no SECE in the first time step is the same as the proof for Proposition \ref{prop1}. So we focus on the proof for  Eq.~(\ref{CE_matrix}) in the second time step.
The result of Eq.~(\ref{P1Markov}) indicates that
the initial system-environment state is effectively uncorrelated, and thus we can use the reduced process tensor $\tilde{T}$ to express equation for a two-time-step process that has no SECE.
Using the trace  preserving condition
$\tilde{M}^{i_{0'}j_{0'}}_{i_{1}j_{1}} \delta_{i_{1}j_{1}} = \delta_{i_{0'}j_{0'}}$ for the definition
$\left(N^{i_{0'}j_{0'}}_{\ \ \ \ \  i_{0}j_{0}}\right) = \rho^{S}_{[0];i_{0}j_{0}} \tilde{M}^{i_{0'}j_{0'}}_{i_{1}j_{1}} \delta_{i_{1}j_{1}} = \rho^{S}_{[0];i_{0}j_{0}} \delta_{i_{0'}j_{0'}}$,
one obtains
\begin{equation}
  \label{eq:Ntr}
  \left(N^{i_{0'}j_{0'}}_{\ \ \ \ \  i_{0}j_{0}}\right) A'^{i_{0}j_{0}}_{i_{0'}j_{0'}} =\rho^{S}_{[0];i_{0}j_{0}} \delta_{i_{0'}j_{0'}}A'^{i_{0}j_{0}}_{i_{0'}j_{0'}} =\rho'_{i_{0'}j_{0'}} \delta_{i_{0'}j_{0'}}.
 \end{equation}
Therefore, employing Eq.~(\ref{eq:Ntr}) and substituting the definitions of 
\begin{eqnarray}
\rho^{S}_{[0];i_{0}j_{0}} A^{i_{0}j_{0}}_{i_{0'}j_{0'}} &\equiv& \rho_{i_{0'}j_{0'}}, \\
\rho^{S}_{[0];i_{0}j_{0}} A'^{i_{0}j_{0}}_{i_{0'}j_{0'}}& \equiv& \rho'_{i_{0'}j_{0'}}
\end{eqnarray}
into Eq.~(\ref{unremove}), we obtain
\begin{equation}
\label{eqG1}
\begin{split}
&  \tilde{T}^{i_{1'}j_{1'},i_{0'}j_{0'}}_{\ i_{2}j_{2},\  i_{1}j_{1}	} \rho_{i_{0'}j_{0'}}\rho'_{i_{3}j_{3}} \delta_{i_{3}j_{3}}
+  \tilde{T}^{i_{1'}j_{1'},i_{3}j_{3}}_{\ i_{2}j_{2},\  i_{1}j_{1}} \rho'_{i_{3}j_{3}}\rho_{i_{0'}j_{0'}} \delta_{i_{0'}j_{0'}} \\
&=  \tilde{L}^{i_{1'}j_{1'},i_{0'}j_{0'}}_{i_{2}j_{2}} \tilde{M}^{\. i_{3} j_{3}}_{i_{1}j_{1}}  \rho_{i_{0'}j_{0'}}\rho'_{i_{3}j_{3}} +\tilde{L}^{i_{1'}j_{1'},  i_{3}j_{3}}_{i_{2}j_{2} }\tilde{M}^{\. i_{0'} j_{0'}}_{i_{1}j_{1}} \rho_{i_{0'}j_{0'}}  \rho'_{i_{3}j_{3}}.
\end{split}
\end{equation}	
Removing $\rho_{i_{0'}j_{0'}}\rho'_{i_{3}j_{3}}$ from the both sides of Eq.~(\ref{eqG1}) and relabeling $(i_{3},j_{3}) \rightarrow (i_{0},j_{0})$, we arrive at 
\begin{eqnarray}
\label{CE_matrixT}
 &&\tilde{T}^{i_{1'}j_{1'},i_{0'}j_{0'}}_{\ i_{2}j_{2},\  i_{1}j_{1}	} \delta_{i_{0}j_{0}}
+  \tilde{T}^{i_{1'}j_{1'},i_{0}j_{0}}_{\ i_{2}j_{2},\  i_{1}j_{1}} \delta_{i_{0'}j_{0'}}\\
&=&  \tilde{L}^{i_{1'}j_{1'},i_{0'}j_{0'}}_{i_{2}j_{2}} \tilde{M}^{\. i_{0} j_{0}}_{i_{1}j_{1}}  +\tilde{L}^{i_{1'}j_{1'},  i_{0}j_{0}}_{i_{2}j_{2} }\tilde{M}^{\. i_{0'} j_{0'}}_{i_{1}j_{1}}.
\end{eqnarray}	
Adding the basis vectors $\ket{i_{2'}}_{2'}\bra{j_{2'}}\otimes\ket{i_{2}}_{2}\bra{j_{2}}\otimes\ket{i_{1'}}_{1'}\bra{j_{1'}}\otimes\ket{i_{1}}_{1}\bra{j_{1}}\otimes\ket{i_{0'}}_{0'}\bra{j_{0'}}\otimes\ket{i_{0}}_{0}\bra{j_{0}}$ to Eq.~(\ref{CE_matrixT}), we finally obtain the matrix form of Eq.~(\ref{CE_matrixT}):
\begin{equation}
(\mathcal{I} + \mathcal{S}_{0'0})[\tilde{T} \otimes I ] = (\mathcal{I} + \mathcal{S}_{0'0}) \circ \mathcal{S}_{10'} [\tilde{L} \otimes \tilde{M}],
\end{equation}
which is just Eq.~(\ref{CE_matrix}) in the main text.

The proof for the reverse statement is easy as the derivation procedure is reversible.

\end{appendix}

\end{document}